\newcommand{\Rset}{\mathbb{R}}
\newcommand{\cU}{\mathcal{U}}
\newcommand{\cUhp}{\mathcal{U}^{\mathcal{H\!P}}}
\newcommand{\cUvp}{\mathcal{U}^{\mathcal{V\!P}}}
\newcommand{\cUe}{\mathcal{U}^{\mathcal{E}}}
\newtheorem{thm}{Theorem}
\newtheorem{lem}{Lemma}
\newtheorem{cor}{Corollary}
\newtheorem{obs}{Observation}
\newtheorem{pro}{Property}
\newtheorem{cla}{Claim}
\begin{document}


\title{Robust two-stage combinatorial optimization problems under convex uncertainty}

\author[1]{Marc Goerigk}
\author[2]{Adam Kasperski}
\author[3]{Pawe{\l} Zieli\'nski}

\affil[1]{Network and Data Science Management, University of Siegen, Germany\\
           \texttt{marc.goerigk@uni-siegen.de}}
\affil[2]{Faculty of Computer Science and Management, 
Wroc{\l}aw  University of Science and Technology, Poland\\
            \texttt{adam.kasperski@pwr.edu.pl}}
\affil[3]{Faculty of Fundamental Problems of Technology, 
Wroc{\l}aw  University of Science and Technology, Poland\\
       \texttt{pawel.zielinski@pwr.edu.pl}}
   
    \date{}
    
\maketitle

 \begin{abstract}
	In this paper a class of robust two-stage combinatorial optimization problems is discussed.	
	 It is assumed that the uncertain second stage costs are specified in the form of a convex uncertainty set, in particular polyhedral or ellipsoidal ones. It is shown that the robust two-stage versions of basic network and selection problems are NP-hard, even in a very restrictive cases. Some exact and approximation algorithms for the general problem are constructed. Polynomial and approximation algorithms for the robust two-stage versions of basic problems, such as the selection and shortest path
	 problems,  are also provided.
 \end{abstract}
 
\textbf{Keywords:} robust optimization; combinatorial optimization; two-stage optimization; convex uncertainty

\section{Introduction}

In a traditional combinatorial optimization problem we seek a cheapest object composed of elements chosen from a finite element set $E$. For example, $E$ can be a set of arcs of a given graph with specified arc costs, and we wish to compute an $s-t$ path, spanning tree, perfect matching etc.\ with  minimum costs (see, for example,~\cite{AMO93,PS98}). In many practical situations the exact values of the element costs are unknown. An uncertainty (scenario) set $\mathcal{U}$ is then provided, which contains all realizations of the element costs, called scenarios, which may occur. The probability distribution in $\mathcal{U}$ can be known, partially known, or unknown. In the latter case the robust optimization framework can be used, which consists in computing a solution minimizing the cost in a worst case. 
Single-stage robust combinatorial optimization problems, under various uncertainty sets, have been extensively discussed over the last decade. Survey of the results in this area can be found in~\cite{ABV09, KZ16b, GS15, BK18}. For these problems a complete solution must be determined  before the true scenario is revealed.

In many practical applications a solution can be constructed in more than one stage. For combinatorial problems, a part of the object can be chosen now (in the first stage) and completed in a future (in the second stage), after the structure of the costs has been changed. Typically, the first stage costs are known while the second stage costs can only be predicted to belong to an uncertainty set $\mathcal{U}$. First such models were discussed in~\cite{DRM05, FFK06, KMU08, KZ11}, where the robust two-stage spanning tree and perfect matching problems were considered. In these papers, the uncertainty set $\mathcal{U}$ contains $K$ explicitly listed scenarios. Several negative and positive complexity results for this uncertainty representation were established. Some of them have been recently extended in~\cite{GKZ18}, where also the robust two-stage shortest path problem has been investigated. In~\cite{KZ15b} and~\cite{CGKZ18} the robust two-stage selection problem has been explored.
The problem is NP-hard for discrete uncertainty representation but it is polynomially solvable under a special case of polyhedral uncertainty set, called continuous budgeted uncertainty (see~\cite{CGKZ18}). 

Robust two-stage problems belong to the class of three-level, min-max-min optimization problems. In mathematical programming, this approach is also called \emph{adjustable robustness} (see, e.g.~\cite{BTG04,yanikouglu2018survey}). Namely, some variables must be determined before the realization of the uncertain parameters, while the other part are variables that can be chosen after the realization. Several such models have been recently considered in combinatorial optimization, which can be represented as a 0-1 programming problem. Among them there is the robust two-stage problem discussed in this paper, but also the robust recoverable models~\cite{B11,B12} and the $k$-adaptability approach~\cite{BK17}. In general, problems of this type can be hard to solve exactly. A standard approach is to apply row and column generation techniques, which consists in solving a sequence of MIP formulations (see, e.g.,~\cite{ZZ13}). However, this method can be inefficient for larger problems, especially when the underlying deterministic problem is already NP-hard. Therefore, some faster approximation algorithms can be useful in this case.

In this paper we consider the class of robust two-stage combinatorial problems under convex uncertainty, i.e. when the uncertainty set $\mathcal{U}$ is convex. Important special cases are polyhedral and ellipsoidal uncertainty, which are widely used in single-stage robust optimization. Notice that in the problems discussed in~\cite{DRM05, FFK06, KMU08,KZ11}, $\mathcal{U}$ contains a fixed number of scenarios, so it is not a convex set.
 The problem formulation and description of the uncertainty sets are provided in Section~\ref{secform}.  The complexity status of basic problems, in particular network and selection problems, has been open to date. In Section~\ref{seccomplex} we show that all these basic problems are NP-hard, both under polyhedral and ellipsoidal uncertainty.  In Section~\ref{secmip}, we construct compact MIP formulations for a special class of robust two-stage combinatorial problems and show several of its properties.
  In Section~\ref{apprgen}, we propose  an algorithm  for the general problem, which returns an approximate solution with some guaranteed worst case ratio. This algorithm does not run in polynomial time. However, it requires solving only one (possibly NP-hard)  MIP formulation, while a compact MIP formulation for the general case is unknown.
Finally, in Sections~\ref{secsel},~\ref{secrs}, and~\ref{secsp} we study the robust two-stage versions of three particular problems, namely the selection, representatives selection and shortest path ones. We show some additional negative and positive complexity results for them.  There is still a number of open questions concerning the robust two-stage approach. We will state them in the last section.

\section{Problem formulation}
\label{secform}

Consider the following generic combinatorial optimization problem~$\mathcal{P}$:
$$\begin{array}{llll}
		\min & \pmb{C}^T\pmb{x} \\
			& \pmb{x}\in \mathcal{X} \subseteq \{0,1\}^n,\\
	\end{array}
	$$
where $\pmb{C}=[C_1,\dots,C_n]^T$ is a vector of nonnegative costs and $\mathcal{X}$ is a set of feasible solutions. In this paper we consider the general problem~$\mathcal{P}$, as well as the following special cases:
\begin{enumerate}
	\item Let $G=(V,A)$ be a given network, where $C_i$ is a cost of arc $a_i\in A$. Set $\mathcal{X}$ contains characteristic vectors of some objects in $G$, for example the simple $s-t$ paths or spanning trees. Hence $\mathcal{P}$ is the \textsc{Shortest Path} or \textsc{Spanning Tree} problem, respectively. These basic network problems are polynomially solvable, 
	see, e.g.,~\cite{AMO93, PS98}.
	\item Let $E=\{e_1,\ldots,e_n\}$ be a set of items. Each item $e_i\in E$ has a cost $C_i$ and 
	we wish to choose exactly $p$~items 
	out of set~$E$  to minimize the total cost.  Set $\mathcal{X}$ contains characteristic vectors of the feasible selections, i.e.
	$\mathcal{X}=\{\pmb{x}\in\{0,1\}^n: \sum_{i\in [n]} x_i = p\}$. We will denote by $[n]$ the set $\{1,\dots,n\}$.
	This is the \textsc{Selection} problem whose robust single and two-stage versions were discussed in~\cite{AV01, C04,KZ15b, CGKZ18}.
	\item 
	Let $E=\{e_1,\ldots,e_n\}$ be a set of tools (items).
	This set is partitioned into a family of disjoint sets $T_l$, $l\in [\ell]$.
	Each tool $e_i\in E$ has a cost $C_i$ and 
	we wish to select exactly one tool from each subset~$T_l$ to minimize  their   total cost.
	Set $\mathcal{X}$ contains characteristic vectors of the feasible selections, i.e.
	$\mathcal{X}=\{\pmb{x}\in \{0,1\}^n: \sum_{i\in T_l} x_i= 1, l\in [\ell]\}$. 
	This is the \textsc{Representatives Selection} problem (\textsc{RS} for short) 
	whose robust single-stage version was considered in~\cite{DK12, DW13, KKZ15}.
\end{enumerate}
Given a vector $\pmb{x}\in \{0,1\}^n$, let us define the following set of \emph{recourse actions}:
$$\mathcal{R}(\pmb{x})=\{\pmb{y}\in \{0,1\}^n: \pmb{x}+\pmb{y}\in \mathcal{X}\}$$
and a set of \emph{partial solutions} is defined as follows:
$$\mathcal{X}'=\{\pmb{x}\in \{0,1\}^n: \mathcal{R}(\pmb{x})\neq \emptyset\}.$$
Observe that $\mathcal{X}\subseteq \mathcal{X}'$ and $\mathcal{X}'$ contains all vectors which can be completed to a feasible solution in $\mathcal{X}$.
A partial solution $\pmb{x}\in \mathcal{X}'$ is  completed in the second stage, i.e. we choose $\pmb{y}\in \mathcal{R}(\pmb{x})$ which yields $(\pmb{x}+\pmb{y})\in \mathcal{X}$. The overall cost of the solution constructed is $\pmb{C}^T\pmb{x}+\pmb{c}^T\pmb{y}$ for a fixed second-stage cost vector $\pmb{c}=[c_1,\ldots,c_n]^T$. We assume that the vector of the first-stage costs $\pmb{C}$ is known but the vector of the second-stage costs 
is uncertain and belongs to a specified uncertainty (scenario) set $\mathcal{U}\subset \Rset^n_{+}$. In this paper, we discuss the following \emph{robust two-stage problem}:
$$
	\textsc{RTSt}:\; \min_{\pmb{x}\in\mathcal{X}'}\max_{\pmb{c}\in \mathcal{U}}\min_{\pmb{y}\in \mathcal{R}(\pmb{x})}(\pmb{C}^T\pmb{x}+\pmb{c}^T\pmb{y}).
$$
The \textsc{RTSt} problem is a robust two-stage version of the problem~$\mathcal{P}$.
It is worth pointing out that  \textsc{RTSt}  is a generalization  of four problems, which we also examine in this paper.
Namely,
given $\pmb{x}\in \mathcal{X}'$ and $\pmb{c}\in\mathcal{U}$, we consider the following \emph{incremental problem:}
$$\textsc{Inc}(\pmb{x},\pmb{c})=\min_{\pmb{y}\in \mathcal{R}(\pmb{x})} \pmb{c}^T\pmb{y}.$$
Given scenario $\pmb{c}\in \mathcal{U}$, we  study the following \emph{two-stage} problem:
$$\textsc{TSt}(\pmb{c})=\min_{\pmb{x}\in \mathcal{X}'}\min_{\pmb{y}\in \mathcal{R}(\pmb{x})} (\pmb{C}^T\pmb{x}+\pmb{c}^T\pmb{y}).$$
Finally, given $\pmb{x}\in \mathcal{X}'$, we also consider the following \emph{evaluation} problem: 
$$\textsc{Eval}(\pmb{x})=\pmb{C}^T\pmb{x}+\max_{\pmb{c}\in \mathcal{U}}\min_{\pmb{y}\in \mathcal{R}(\pmb{x})} \pmb{c}^T\pmb{y}=\pmb{C}^T\pmb{x}+\max_{\pmb{c}\in\mathcal{U}}\textsc{Inc}(\pmb{x},\pmb{c}).$$
A scenario $\pmb{c}$ which maximizes $\textsc{Inc}(\pmb{x},\pmb{c})$ is called a \emph{worst scenario} for $\pmb{x}$. The inner maximization problem is called the \emph{adversarial problem}, i.e., the problem
\[ \max_{\pmb{c}\in \mathcal{U}}\min_{\pmb{y}\in \mathcal{R}(\pmb{x})}(\pmb{C}^T\pmb{x}+\pmb{c}^T\pmb{y}) \]
Notice that the robust two stage problem can be equivalently represented as follows:
$$\textsc{RTSt}:\; \min_{\pmb{x}\in \mathcal{X}'} \textsc{Eval}(\pmb{x}).$$
Further notice that the two-stage problem is a special case of \textsc{RTSt}, where $\mathcal{U}=\{\pmb{c}\}$ contains only one scenario. The following fact is exploited later in this paper:
\begin{obs}
\label{obsones}
        Computing $\textsc{TSt}(\pmb{c})$ for a given $\pmb{c}\in \mathcal{U}$ (solving the two-stage problem)
        boils down to solving the underlying deterministic problem~$\mathcal{P}$.
\end{obs}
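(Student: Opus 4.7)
The plan is to show that for fixed $\pmb{c}$, the two-stage problem collapses into a single minimization over $\mathcal{X}$ with a modified cost vector, which is just an instance of the underlying deterministic problem $\mathcal{P}$.

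First I would perform the change of variables $\pmb{z}=\pmb{x}+\pmb{y}$. By the definitions of $\mathcal{X}'$ and $\mathcal{R}(\pmb{x})$, the pair $(\pmb{x},\pmb{y})$ with $\pmb{x}\in\mathcal{X}'$ and $\pmb{y}\in\mathcal{R}(\pmb{x})$ ranges exactly over those $(\pmb{x},\pmb{y})\in\{0,1\}^n\times\{0,1\}^n$ whose sum $\pmb{z}$ lies in $\mathcal{X}\subseteq\{0,1\}^n$. Since both $\pmb{x}$ and $\pmb{y}$ are binary and their sum is also binary, their supports must be disjoint; equivalently, for each index $i$ with $z_i=1$ we must assign $i$ either to the first stage (set $x_i=1$, $y_i=0$) or to the second stage (set $x_i=0$, $y_i=1$), and for $z_i=0$ both coordinates are zero.

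Next I would push the minimization inside. The objective $\pmb{C}^T\pmb{x}+\pmb{c}^T\pmb{y}$ decomposes coordinate-wise, so once $\pmb{z}\in\mathcal{X}$ is fixed, the optimal assignment picks $\min(C_i,c_i)$ for every $i$ with $z_i=1$. Consequently
\[
\textsc{TSt}(\pmb{c}) \;=\; \min_{\pmb{z}\in\mathcal{X}}\sum_{i\in[n]} \min(C_i,c_i)\,z_i \;=\; \min_{\pmb{z}\in\mathcal{X}} \widetilde{\pmb{C}}^T\pmb{z},
\]
where $\widetilde{C}_i = \min(C_i,c_i)\ge 0$. This is precisely an instance of $\mathcal{P}$ with cost vector $\widetilde{\pmb{C}}$, so any algorithm for $\mathcal{P}$ solves $\textsc{TSt}(\pmb{c})$ after an $O(n)$ preprocessing step; given an optimal $\pmb{z}^*$, a feasible pair is recovered by putting each $i$ with $z^*_i=1$ into $\pmb{x}$ when $C_i\le c_i$ and into $\pmb{y}$ otherwise.

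There is no real obstacle here; the only point requiring a brief justification is the reformulation via disjoint supports, which follows immediately from $\mathcal{X}\subseteq\{0,1\}^n$ combined with the binarity of $\pmb{x}$ and $\pmb{y}$. The nonnegativity of $\widetilde{\pmb{C}}$ (inherited from $\pmb{C}\ge 0$ and $\mathcal{U}\subset\Rset^n_+$) ensures the reformulated instance still lies in the problem class $\mathcal{P}$ as defined at the start of Section~\ref{secform}.
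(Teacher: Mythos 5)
Your proposal is correct and takes essentially the same route as the paper: both reduce $\textsc{TSt}(\pmb{c})$ to one instance of $\mathcal{P}$ with the coordinate-wise costs $\min\{C_i,c_i\}$ and then split an optimal $\pmb{z}$ back into $(\pmb{x},\pmb{y})$ according to which of $C_i$ or $c_i$ attains the minimum. Your write-up merely makes explicit the bijection between feasible pairs $(\pmb{x},\pmb{y})$ and vectors $\pmb{z}\in\mathcal{X}$ and the coordinate-wise optimality argument, which the paper leaves as ``easy to verify.''
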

\begin{proof}
	Let $\hat{c}_i=\min\{C_i, c_i\}$ for each $i\in [n]$ and let $\hat{\pmb{z}}$ be an optimal solution to problem $\mathcal{P}$ for the costs $\hat{\pmb{c}}$. Consider solution $(\hat{\pmb{x}},\hat{\pmb{y}})$ constructed as follows: 
	set  $\hat{x}_i=0$, $\hat{y}_i=0$ if $\hat{z}_i=0$;
	set  $\hat{x}_i=1$, $\hat{y}_i=0$ if $\hat{z}_i=1$ and $\hat{c}_i=C_i$; 
	set  $\hat{x}_i=0$, $\hat{y}_i=1$ if $\hat{z}_i=1$ and $\hat{c}_i=c_i$.
	Of course, $\hat{\pmb{x}}\in \mathcal{X}'$ and $\hat{\pmb{y}}\in \mathcal{R}(\hat{\pmb{x}})$. It is easy to verify that $(\hat{\pmb{x}},\hat{\pmb{y}})$ is an optimal solution to the two-stage problem with the objective value of~$\textsc{TSt}(\pmb{c})$.
\end{proof}

In this paper, we examine the following three types of convex uncertainty sets:
\begin{align}
\cUhp &=\{\underline{\pmb{c}}+\pmb{\delta}: \pmb{A}\pmb{\delta}\leq \pmb{b},\pmb{\delta}\geq \pmb{0}\}\subset \Rset^n_{+}, \\
\cUvp &={\rm conv}\{\pmb{c}_1,\ldots,\pmb{c}_K\}\subset \Rset^n_{+}, \\
\cUe &=\{\underline{\pmb{c}}+\pmb{A\delta}: ||\pmb{\delta}||_{2}\leq 1\}\subset \Rset^n_{+},
\end{align}
where 
$\underline{\pmb{c}}=[\underline{c}_1,\dots,\underline{c}_n]^T$ is the vector of nominal second stage costs, $\pmb{\delta}=[\delta_1,\dots,\delta_n]^T$ represents deviations of the second stage costs from their nominal values
and $\pmb{A}\in \Rset^{m\times n}$ is the deviation constraint matrix.
There is no loss of generality in assuming
 that all the sets are bounded. The uncertainty sets $\cUhp$ and $\cUvp$ are two representations of the \emph{polyhedral uncertainty}. By the decomposition theorem~\cite[Chapter 7.2]{SH98}, both representations are equivalent, i.e. bounded $\cUhp$ can be represented as $\cUvp$ and vice versa. However, the corresponding transformations need not be polynomial. Thus the complexity results from one type of polytope do not carry over to the other, and we consider them separately. The set $\cUe$  represents \emph{ellipsoidal uncertainty}, which is a popular uncertainty representation in robust optimization~(see, e.g.,~\cite{BN09}). We also study the following special cases of $\cUhp$:
\begin{align*}
\cUhp_0 &=\{\underline{\pmb{c}}+\pmb{\delta}: \pmb{0}\leq \pmb{\delta}\leq \pmb{d},||\pmb{\delta}||_1\leq \Gamma\}, \\
\cUhp_1 &= \{\underline{\pmb{c}}+\pmb{\delta}: \sum_{i\in U_j} \delta_i \leq \Gamma_j, j\in [K], \pmb{\delta}\geq \pmb{0}\}
\end{align*}

Set $\cUhp_0$ is called \emph{continuous budgeted uncertainty}~\cite{NO13, CGKZ18} and can be seen as a continuous and convex version of the nonconvex uncertainty set proposed in~\cite{BS04}. In set $\cUhp_1$ we have $K$ budget constraints defined for some (not necessarily disjoint) subsets $U_1,\dots,U_K\subseteq [n]$.

\section{General hardness results}
\label{seccomplex}

The robust two-stage problem is not easier than the underlying deterministic problem $\mathcal{P}$. So, it is interesting to characterize the complexity of \textsc{RTSt} when $\mathcal{P}$ is polynomially solvable. In this section we focus on a core problem, which is a special case of all the particular problems studied in Section~\ref{secform}. We will show that it is NP-hard under $\cUvp$, $\cUhp$ and $\cUe$. Hence we get hardness results for all the particular problems. Consider the following set of feasible solutions 
$$\mathcal{X}_{\pmb{1}}=\{\pmb{x}\in \{0,1\}^n: x_1+\dots+x_n= n\}=\{\pmb{1}\},$$
 i.e. $\mathcal{X}_{\pmb{1}}$ contains only the vector of ones.  We have $\mathcal{X}_{\pmb{1}}'=\{\pmb{x}\in\{0,1\}^n: x_1+\dots+x_n\leq n\}$ and $\mathcal{R}(\pmb{x})=\{\pmb{1}-\pmb{x}\}$ contains only one solution, as there is only one recourse action for each $\pmb{x}\in \mathcal{X}_{\pmb{1}}'$.  Hence, the robust two stage version of the problem with $\mathcal{X}_{\pmb{1}}$ can be rewritten as follows:
\begin{equation}
\label{rtstd}
	\textsc{RTSt}_{\pmb{1}}: \min_{\pmb{x}\in \mathcal{X}_1'} \left( \pmb{C}^T\pmb{x}+ \max_{\pmb{c}\in \mathcal{U}}\pmb{c}^T(\pmb{1}-\pmb{x}) \right).
\end{equation}
The following result is known:
\begin{thm}[\cite{KZ15b, GKZ18}]
\label{thmcopl1}
	The $\textsc{RTSt}_{\pmb{1}}$ problem with $\mathcal{U}=\{\pmb{c}_1,\pmb{c}_2\}{\subset{\mathbb{R}}^n_+}$ is NP-hard. Furthermore, if $\mathcal{U}=\{\pmb{c}_1,\dots, \pmb{c}_K\}{\subset{\mathbb{R}}^n_+}$ and $K$ is a part of the input, then $\textsc{RTSt}_{\pmb{1}}$ is strongly NP-hard.
\end{thm}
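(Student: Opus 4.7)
The plan is to prove both parts by polynomial reduction from standard NP-hard problems, exploiting the fact that $\mathcal{X}_{\pmb{1}}=\{\pmb{1}\}$ forces $\mathcal{R}(\pmb{x})=\{\pmb{1}-\pmb{x}\}$, so the problem amounts to choosing a set $I\subseteq[n]$ of items to move to the first stage (paying $\sum_{i\in I}C_i$) while the complement $J=[n]\setminus I$ absorbs the adversarial cost $\max_k\sum_{i\in J}c_k^{(i)}$. Hardness must therefore come from the arithmetic structure of $\pmb{C}$ and the scenarios.

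For the case $K=2$, I would reduce from \textsc{SubsetSum}: given $a_1,\dots,a_n\in\mathbb{Z}_+$ with $\sum_i a_i=2B$ and a target $T\in(0,2B)$, decide whether some $S\subseteq[n]$ satisfies $\sum_{i\in S}a_i=T$. I build an $\textsc{RTSt}_{\pmb{1}}$ instance on $n+1$ items with $C_i=a_i$, $c_1^{(i)}=2a_i$, $c_2^{(i)}=0$ for $i\in[n]$, and $C_{n+1}=10B$, $c_1^{(n+1)}=0$, $c_2^{(n+1)}=2T$. The large $C_{n+1}$ guarantees $x_{n+1}=0$ in any optimum, and writing $u=\sum_{i\in[n]}a_i x_i$ the objective reduces to
\[
  f(u)=u+\max\bigl(4B-2u,\ 2T\bigr),
\]
a piecewise-linear function whose unique real minimum $2B+T$ is attained at $u=2B-T$. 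Since the achievable values of $u$ are exactly the integer subset sums of $\{a_1,\dots,a_n\}$, the minimum equals $2B+T$ iff \textsc{SubsetSum} is a yes-instance, and otherwise exceeds $2B+T$ by at least~$1$. This polynomial reduction proves NP-hardness.

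For the strong NP-hardness when $K$ is part of the input, I would reduce from a strongly NP-hard problem, for instance \textsc{3-Sat} or the single-stage min-max selection problem with a variable number of scenarios (see the survey~\cite{ABV09}). The construction I have in mind introduces one item per literal with uniform first-stage cost, one scenario per clause that charges a large penalty when no literal of that clause is moved to the second stage, and auxiliary "parity" scenarios per variable that force exactly one of the two literal-items to remain in stage~1, so that optimal partitions $(I,J)$ correspond bijectively to satisfying truth assignments. The main obstacle is the calibration: one must simultaneously tune $\pmb{C}$ and each $\pmb{c}_k$ so that both the truth-assignment constraint and the clause-satisfaction constraint are strictly enforced in the $\min_{\pmb{x}}\max_{k}$ interplay, \emph{and} keep every number polynomially bounded in the \textsc{3-Sat} size. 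This last point is what distinguishes the strong-hardness proof from the weak-hardness argument of Part~1, where \textsc{SubsetSum} supplied a numerical obstruction for free.
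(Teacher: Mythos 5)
The paper does not prove this theorem itself; it is imported from \cite{KZ15b, GKZ18}, so your proposal can only be judged on its own merits. Your first half stands up: the \textsc{SubsetSum} gadget is correct. With $x_{n+1}=0$ forced by the large cost $C_{n+1}=10B$ (any solution with $x_{n+1}=0$ costs at most $4B$), the objective collapses to $f(u)=u+\max(4B-2u,\,2T)$ with $u=\sum_{i\in[n]}a_ix_i$ ranging over the subset sums, $f$ has slopes $-1$ and $+1$ around its minimizer $u=2B-T$, and $2B-T$ is a subset sum iff $T$ is (complementation, since $\sum_i a_i=2B$), so the optimum equals $2B+T$ iff the instance is a yes-instance and is otherwise at least $2B+T+1$. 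This is a complete, valid NP-hardness proof for $K=2$, in the same spirit as the partition-type reductions in the cited sources.

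The second half, however, is a genuine gap: you state a plan and then explicitly defer its only nontrivial content. For strong NP-hardness the entire difficulty is exactly the calibration you postpone. Concretely, in $\min_{\pmb{x}}\bigl(\pmb{C}^T\pmb{x}+\max_k \pmb{c}_k^T(\pmb{1}-\pmb{x})\bigr)$ the adversary selects a \emph{single} scenario, so your ``one scenario per clause plus parity scenarios per variable'' scheme must encode a conjunction of $O(m+n)$ constraints through one maximum; every penalty must exceed the total first-stage cost that could be saved by violating it, the ``exactly one literal in stage~1'' condition needs enforcement in both directions (a scenario can punish literal-items left in stage~2, but punishing items \emph{moved to} stage~1 must be routed through $\pmb{C}$, which is scenario-independent and uniform in your sketch), and all of this must be done with polynomially bounded integers. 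None of these choices is made, and it is not evident that a uniform $\pmb{C}$ suffices. As written, the strong NP-hardness claim is unproved; you would need either to carry out and verify this construction in full, or to reduce from an already strongly NP-hard min--max problem (e.g.\ the min--max selection problem with unbounded $K$, as surveyed in \cite{ABV09}) with an explicit, checked cost-preserving map.
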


 We use Theorem~\ref{thmcopl1} to prove the next complexity results. First observe that the problem
 under consideration
  will not change if we replace $\mathcal{U}=\{\pmb{c}_1,\dots,\pmb{c}_K\}$ with $\cUvp={\rm conv}\{\pmb{c}_1,\dots, \pmb{c}_K\}$ in~(\ref{rtstd}). Hence, we immediately get the following corollary:
 
 \begin{cor}
 	The $\textsc{RTSt}_{\pmb{1}}$ problem with uncertainty set $\cUvp$ is NP-hard  when $K=2$ and strongly NP-hard when $K$ is a part of the input.
 \end{cor}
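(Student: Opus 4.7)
The plan is to invoke Theorem~\ref{thmcopl1} together with the elementary fact that a linear function attains its maximum over a polytope at one of its vertices. More precisely, I would argue that any instance of $\textsc{RTSt}_{\pmb{1}}$ with a discrete uncertainty set $\mathcal{U}=\{\pmb{c}_1,\dots,\pmb{c}_K\}\subset \Rset^n_{+}$ and the instance with $\cUvp={\rm conv}\{\pmb{c}_1,\dots,\pmb{c}_K\}$ have identical objective functions on $\mathcal{X}_{\pmb{1}}'$, and hence the same optimal value and the same set of optimal partial solutions.

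First, I would fix an arbitrary partial solution $\pmb{x}\in \mathcal{X}_{\pmb{1}}'$ and examine the inner maximization in~(\ref{rtstd}). Under $\cUvp$ the adversarial problem $\max_{\pmb{c}\in \cUvp}\pmb{c}^T(\pmb{1}-\pmb{x})$ is a linear program over a bounded polytope whose vertex set is contained in $\{\pmb{c}_1,\dots,\pmb{c}_K\}$; therefore an optimum is attained at some $\pmb{c}_i$, giving
$$\max_{\pmb{c}\in \cUvp}\pmb{c}^T(\pmb{1}-\pmb{x})=\max_{i\in [K]} \pmb{c}_i^T(\pmb{1}-\pmb{x})=\max_{\pmb{c}\in \{\pmb{c}_1,\dots,\pmb{c}_K\}}\pmb{c}^T(\pmb{1}-\pmb{x}).$$
Since the equality holds for every $\pmb{x}\in \mathcal{X}_{\pmb{1}}'$, and the first-stage term $\pmb{C}^T\pmb{x}$ does not depend on the uncertainty representation, the outer objectives coincide.

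Next I would note that the instance transformation is polynomial: the input $\pmb{c}_1,\dots,\pmb{c}_K$ of the discrete problem is already a valid vertex-description of $\cUvp$, and nonnegativity of the $\pmb{c}_i$'s ensures $\cUvp\subset \Rset^n_{+}$ as required by the definition of $\cUvp$. Hence the polynomial reduction from the discrete variant of $\textsc{RTSt}_{\pmb{1}}$ is the identity map on the data, and Theorem~\ref{thmcopl1} immediately transfers its NP-hardness (for $K=2$) and strong NP-hardness (for $K$ part of the input) to the convex-hull variant.

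There is no real obstacle here; the argument is essentially a one-line observation. The only point deserving explicit mention is that the vertices of $\cUvp$ are a subset of $\{\pmb{c}_1,\dots,\pmb{c}_K\}$, a standard consequence of the Minkowski-Weyl decomposition already invoked in Section~\ref{secform}, which justifies restricting the adversary to the generating scenarios.
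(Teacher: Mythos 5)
Your proposal is correct and follows essentially the same route as the paper: the paper's proof is exactly the observation that replacing $\{\pmb{c}_1,\dots,\pmb{c}_K\}$ by its convex hull does not change the objective in~(\ref{rtstd}), since the adversary maximizes a linear function and therefore attains its optimum at one of the generating scenarios. You merely spell out the details (vertex attainment, identity reduction on the data) that the paper leaves implicit, so nothing further is needed.
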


 \begin{thm}
 \label{corhp}
 	 The $\textsc{RTSt}_{\pmb{1}}$ problem with uncertainty set $\cUhp$ is strongly NP-hard.
 \end{thm}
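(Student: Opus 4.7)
The plan is to reduce from the $\textsc{RTSt}_{\pmb{1}}$ problem with $\cUvp$ and $K$ part of the input, which is strongly NP-hard by the preceding corollary. The difficulty flagged in the text---that $\cUvp$ and $\cUhp$ are in general not interconvertible in polynomial time when the ground dimension $n$ is fixed---is circumvented by the observation that $\textsc{RTSt}_{\pmb{1}}$ is invariant under padding the ground set with dummy items equipped with sufficiently large first-stage costs. I would therefore work in $\Rset^{n+K}$ and carry the convex-combination coefficients $\lambda_1,\dots,\lambda_K$ as the deviations of $K$ newly introduced dummy items. This is nothing but the standard polynomial-size extended formulation of ${\rm conv}\{\pmb{c}_1,\dots,\pmb{c}_K\}$.

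Concretely, given a $\cUvp$-instance with items $[n]$, first-stage costs $\pmb{C}$, and scenarios $\pmb{c}_1,\dots,\pmb{c}_K$, I would introduce $K$ dummy items indexed $n+1,\dots,n+K$, set each dummy first-stage cost to $2$, take $\underline{\pmb{c}}=\pmb{0}\in\Rset^{n+K}$, and define the deviation polyhedron $\pmb{A}\pmb{\delta}\le\pmb{b}$ (each equality being written as two $\le$-inequalities) by
\[
\delta_i=\sum_{k=1}^{K} c_{k,i}\,\delta_{n+k}\ \ (i\in[n]),\qquad \sum_{k=1}^{K}\delta_{n+k}=1,
\]
together with $\pmb{\delta}\ge\pmb{0}$. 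This gives a valid bounded $\cUhp\subset\Rset^{n+K}_+$ of polynomial size ($2n+2$ rows). Writing $\lambda_k:=\delta_{n+k}$, any $\pmb{c}\in\cUhp$ has first $n$ coordinates equal to $\sum_k \lambda_k \pmb{c}_k$, so the projection onto the original coordinates traces out exactly $\cUvp$ as $\pmb{\lambda}$ ranges over the standard simplex.

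A direct computation then shows that for any partial solution $(\pmb{x}',\pmb{x}'')\in\{0,1\}^{n+K}$,
\[
\max_{\pmb{c}\in\cUhp}\pmb{c}^T\bigl(\pmb{1}-(\pmb{x}',\pmb{x}'')\bigr)=\max_{k\in[K]}\Bigl[\pmb{c}_k^T(\pmb{1}-\pmb{x}')+(1-x''_k)\Bigr].
\]
I would then argue that $\pmb{x}''=\pmb{0}$ is strictly optimal for the outer minimisation: setting some $x''_k=1$ adds $2$ to the first-stage cost but lowers only the $k$-th bracket in the max by exactly $1$, so the inner maximum can fall by at most $1$. Consequently the optimum of the constructed $\cUhp$-instance equals $1+V_1$, where $V_1$ is the optimum of the original $\cUvp$-instance, yielding a polynomial and magnitude-preserving reduction and hence strong NP-hardness.

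The main obstacle is this last verification that no combination of dummy picks can exploit the construction. It boils down to an elementary case analysis depending on whether the scenario attaining $\max_k \pmb{c}_k^T(\pmb{1}-\pmb{x}')$ lies among the chosen dummies $K_1$ or not; in both cases the $2|K_1|$ first-stage increment strictly dominates the at-most-$|K_1|$ drop of the inner maximum, so the strict optimality of $\pmb{x}''=\pmb{0}$ is preserved and the reduction goes through.
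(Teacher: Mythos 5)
Your proposal is correct and follows essentially the same route as the paper: both lift $\cUvp$ into an extended $\cUhp\subset\Rset^{n+K}_+$ by introducing $K$ dummy items whose deviations carry the simplex weights $\pmb{\lambda}$, so that the inner maximum reproduces $\max_k \pmb{c}_k^T(\pmb{1}-\pmb{x})$. The only (immaterial) difference is that the paper assigns the dummies first-stage cost $0$ and fixes $x_{n+j}=1$ so they vanish from the objective, whereas you charge them $2$ and show they are optimally left to the second stage, shifting the optimum by the constant $+1$.
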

 \begin{proof}
Let $\mathcal{I}=(n,\pmb{C},\mathcal{U}=\{\pmb{c}_1,\dots, \pmb{c}_K\})$, be an instance of the strongly NP-hard $\textsc{RTSt}_{\pmb{1}}$ problem. 
Consider an instance $\mathcal{I}_1=(n+K,[\pmb{\pmb{C}}, \pmb{0}]^T,\cUhp)$ of $\textsc{RTSt}_{\pmb{1}}$, where $[\pmb{\pmb{C}}, \pmb{0}]^T\in \Rset^{n+K}$ are the first stage costs and
\[
\cUhp = \left\{ \pmb{0} + \begin{bmatrix}\pmb{\pmb{\delta}}\\ \pmb{\lambda}\end{bmatrix}\ :\  \pmb{\delta}= \sum_{j\in[K]}  \lambda_j \pmb{c}_j,
\sum_{j\in[K]} \lambda_j = 1,
\delta_i \ge 0 \ \forall i\in[n],
\lambda_j \ge 0 \ \forall j\in[K] \right\} \subset \mathbb{R}^{n+K}.
\]
Since the first stage costs of variables $x_{n+1},\dots,x_{n+K}$ are~0, we can fix $x_{n+1}=\dots=x_{n+K}=1$ in every optimal solution to the instance $\mathcal{I}_1$. The problem then reduces to
$$\min_{\pmb{x}\in\mathcal{X}_{\pmb{1}}'}\max_{\{\pmb{\lambda}\geq \pmb{0}:|| \pmb{\lambda} ||_{1}=1\}} \left( \pmb{C}^T\pmb{x} + \sum_{j\in[K]}  \lambda_j \pmb{c}^T_j(\pmb{1}-\pmb{x}) \right)
=\min_{\pmb{x}\in\mathcal{X}_{\pmb{1}}'}\max_{\pmb{c}\in\{\pmb{c}_1,\dots,\pmb{c}_K\}}  \left( \pmb{C}^T\pmb{x} + \pmb{c}^T(\pmb{1}-\pmb{x}) \right),$$
where $\mathcal{X}_{\pmb{1}}'=\{\pmb{x}\in \{0,1\}^n: x_1+\dots+x_n\leq n\}$. Consequently, the problem with instance $\mathcal{I}_1$ is equivalent to the strongly NP-hard problem with the instance $\mathcal{I}$.
 \end{proof}
Note that the reduction in the proof of Theorem~\ref{corhp} constructs an uncertainty set $\cUhp$ with a non-constant number of constraints. We will show in Section~\ref{secmip} that if the number of constraints in the description of $\cUhp$ (except for the nonnegativity constraints) is constant, then the problem is polynomially solvable.

\begin{thm}
\label{thmcomplE}
	The $\textsc{RTSt}_{\pmb{1}}$ problem with uncertainty set $\cUe$ is NP-hard.
\end{thm}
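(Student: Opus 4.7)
My plan is to reduce from the weakly NP-hard \textsc{Partition} problem: given positive integers $a_1,\dots,a_n$ with $\sum_{i\in[n]} a_i = 2S$, decide whether some $\pmb{y}\in\{0,1\}^n$ satisfies $\sum_{i} a_i y_i = S$.

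The first step is to simplify the adversarial inner problem. Since $\mathcal{R}(\pmb{x}) = \{\pmb{1}-\pmb{x}\}$ is a singleton for $\mathcal{X}_{\pmb{1}}$, writing $\pmb{y} = \pmb{1}-\pmb{x}$ gives
$$
\max_{\pmb{c}\in\cUe} \pmb{c}^T\pmb{y} \;=\; \underline{\pmb{c}}^T\pmb{y} + \max_{\|\pmb{\delta}\|_2\le 1} \pmb{\delta}^T \pmb{A}^T\pmb{y} \;=\; \underline{\pmb{c}}^T\pmb{y} + \|\pmb{A}^T\pmb{y}\|_2,
$$
so $\textsc{RTSt}_{\pmb{1}}$ under $\cUe$ reduces to minimizing $\pmb{C}^T\pmb{x} + \underline{\pmb{c}}^T\pmb{y} + \|\pmb{A}^T\pmb{y}\|_2$ over $\pmb{y} = \pmb{1}-\pmb{x}\in\{0,1\}^n$. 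This is a 0-1 problem with a Euclidean-norm term, and the aim is to use this norm term to test the partition condition $\sum a_i y_i = S$.

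The second step is the reduction itself. Given a \textsc{Partition} instance, I would build an instance of $\textsc{RTSt}_{\pmb{1}}$ on $n+1$ variables $x_0,x_1,\dots,x_n$ with a single-column deviation matrix $\pmb{A}=(-S,a_1,\dots,a_n)^T$, so that $\|\pmb{A}^T\pmb{y}\|_2 = |{-S}y_0 + \sum_{i\ge 1} a_i y_i|$. The idea is to use $x_0$ as an auxiliary variable that, in any optimum, is forced to take the value $0$ (i.e.\ $y_0=1$), which then turns the norm term into exactly $|\sum_{i\ge 1} a_i y_i - S|$. For the items $i\ge 1$ I would set $C_i=\underline{c}_i=a_i$, so their contribution to the linear part is a fixed constant $2S$ regardless of $\pmb{x}$; for the auxiliary item I would pick $C_0$ much larger than $\underline{c}_0$ (for instance $C_0=2S+1$, $\underline{c}_0=S$) to force $y_0=1$.

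The third step is the correctness check via case analysis on $y_0$. When $y_0=1$ the objective becomes $3S + |\sum_{i\ge 1} a_i y_i - S|$, whose minimum equals $3S$ iff \textsc{Partition} has a solution and is $\ge 3S+1$ otherwise; when $y_0=0$ the objective becomes $(2S+1)+2S+\sum a_i y_i \ge 4S+1$. Since the maximum possible value in the first case is $4S<4S+1$, the optimum is realized at $y_0=1$, and it equals $3S$ exactly when the \textsc{Partition} instance is a yes-instance.

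The main obstacle is the calibration of the auxiliary variable $x_0$ together with the side constraint $\cUe\subset\Rset^{n+1}_+$. The nonnegativity requirement forces $\underline{c}_j \ge |A_j|$ in every coordinate, which in particular means $\underline{c}_0\ge S$; that nominal cost then couples back into the objective and must be balanced against $C_0$ so that (i) setting $y_0=1$ is always strictly cheaper than $y_0=0$, and (ii) the "yes-instance/no-instance" gap is preserved. Verifying all three conditions with the same parameters $(C_0,\underline{c}_0)$, while keeping the reduction polynomial, is the delicate part.
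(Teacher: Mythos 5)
Your reduction is correct, but it takes a genuinely different route from the paper. The paper does not go back to \textsc{Partition}; it reduces from the already-established NP-hardness of $\textsc{RTSt}_{\pmb{1}}$ with a two-scenario set $\mathcal{U}=\{\pmb{c}_1,\pmb{c}_2\}$ (Theorem~\ref{thmcopl1}), using the identity $2\max\{\pmb{c}_1^T\pmb{y},\pmb{c}_2^T\pmb{y}\}=(\pmb{c}_1+\pmb{c}_2)^T\pmb{y}+|(\pmb{c}_1-\pmb{c}_2)^T\pmb{y}|$ to rewrite any two-scenario instance as an instance with nominal costs $\underline{\pmb{c}}=\pmb{c}_1+\pmb{c}_2$ and a rank-one deviation matrix built from $\pmb{c}_1-\pmb{c}_2$; this is shorter and shows structurally that degenerate ellipsoidal uncertainty subsumes two-scenario discrete uncertainty. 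You instead give a self-contained reduction from \textsc{Partition} with an auxiliary coordinate carrying the $-S$ entry, and your case analysis in the third step does in fact discharge the ``delicate part'' you flag at the end: with $C_0=2S+1$, $\underline{c}_0=S$ the nonnegativity of $\cUe$ holds ($\underline{c}_0\ge S$, $\underline{c}_i\ge a_i$), the $y_0=1$ branch is capped at $4S$ while the $y_0=0$ branch costs at least $4S+1$, and integrality of the $a_i$ gives the clean threshold (optimum $=3S$ iff a partition exists, $\ge 3S+1$ otherwise). Both arguments use a rank-one (degenerate) ellipsoid and both yield only weak NP-hardness, consistent with the theorem statement; yours has the minor advantage of not depending on Theorem~\ref{thmcopl1}, while the paper's makes the relationship between the two uncertainty models explicit and reusable.
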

\begin{proof}
	Given an instance $\mathcal{I}=(n,\pmb{C},\mathcal{U}=\{\pmb{c}_1,\pmb{c}_2\})$  of $\textsc{RTSt}_{\pmb{1}}$, define $\underline{\pmb{c}}=\pmb{c}_1+\pmb{c}_2$ and $\pmb{y}=\pmb{1}-\pmb{x}$.
	 We use the following equality (see~\cite{BS04a}):
$$2\cdot \max\{\pmb{c}_1^T\pmb{y},\pmb{c}_2^T\pmb{y}\}=(\pmb{c}_1^T\pmb{y}+\pmb{c}_2^T\pmb{y})+\sqrt{\pmb{y}^T(\pmb{c}_1-\pmb{c}_2)(\pmb{c}_1-\pmb{c}_2)^T\pmb{y}}=(\pmb{c}_1^T\pmb{y}+\pmb{c}^T_2\pmb{y})+\sqrt{\pmb{y}^T\pmb{A}\pmb{A}^T\pmb{y}},$$
where $\pmb{A}=[\pmb{c}_1-\pmb{c}_2,\pmb{0},\dots,\pmb{0}]$ is a square $n\times n$ matrix (we append $n-1$ columns $\pmb{0}\in \Rset^n$ to $\pmb{c}_1-\pmb{c}_2\in \Rset^n$.  We  get
\begin{align*}
2\cdot &\min_{\pmb{x}\in \mathcal{X}_{\pmb{1}}'}(\pmb{C}^T\pmb{x}+\max\{\pmb{c}_1^T\pmb{y},\pmb{c}_2^T\pmb{y}\})= \min_{\pmb{x}\in \mathcal{X}_{\pmb{1}}'} (2\pmb{C}^T\pmb{x}+\underline{\pmb{c}}^T\pmb{y}+\sqrt{\pmb{y}^T\pmb{A}\pmb{A}^T\pmb{y}}) \\
= &\min_{\pmb{x}\in \mathcal{X}_{\pmb{1}}'}(2\pmb{C}^T\pmb{x}+\underline{\pmb{c}}^T\pmb{y}+||\pmb{A}^T\pmb{y}||_{2})= \min_{\pmb{x}\in\mathcal{X}_{\pmb{1}}'} (2\pmb{C}^T\pmb{x}+ \max_{\pmb{c}\in \{\underline{\pmb{c}}+\pmb{A\delta}: ||\pmb{\delta}||_{2}\leq 1\}} \pmb{c}^T\pmb{y}).
\end{align*}
The last equality follows from the fact that 
$\underline{\pmb{c}}^T\pmb{y}+||\pmb{A}^T\pmb{y}||_{2}=\max_{\pmb{c}\in \{\underline{\pmb{c}}+\pmb{A\delta}: ||\pmb{\delta}||_{2}\leq 1\}} \pmb{c}^T\pmb{y}$ 
(see, e.g.,~\cite{BN99}).
In consequence, the NP-hard problem with the instance $\mathcal{I}$ is equivalent to $\textsc{RTSt}_{\pmb{1}}$ with the first stage costs $2\pmb{C}$ and ellipsoidal uncertainty set $\cUe=\{\underline{\pmb{c}}+\pmb{A\delta}: ||\pmb{\delta}||_{2}\leq 1\}$.
\end{proof}

\begin{thm}
\label{thmcompl}
	The robust two-stage versions of the \textsc{Selection}, \textsc{RS}, \textsc{Spanning Tree}, and \textsc{Shortest Path} problems are strongly NP-hard under $\cUvp$ and $\cUhp$, and NP-hard under $\cUe$.
\end{thm}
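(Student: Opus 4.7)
The plan is to reduce $\textsc{RTSt}_{\pmb{1}}$ to each of the four concrete problems by exhibiting, in each case, a class of instances whose feasible set is exactly $\mathcal{X}_{\pmb{1}} = \{\pmb{1}\}$. Once this reduction is in place for a problem, its robust two-stage version is at least as hard as $\textsc{RTSt}_{\pmb{1}}$, and the complexity statements will follow directly from the Corollary (for $\cUvp$), Theorem~\ref{corhp} (for $\cUhp$), and Theorem~\ref{thmcomplE} (for $\cUe$).

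First I would describe the embeddings. For \textsc{Selection}, take $p = n$, so that the only feasible choice is $\mathcal{X} = \{\pmb{1}\}$. For \textsc{RS}, partition the ground set into $n$ singleton blocks $T_l = \{e_l\}$; again $\mathcal{X} = \{\pmb{1}\}$. For \textsc{Spanning Tree}, take the graph $G$ to be a tree on $n+1$ vertices with $n$ edges, so the unique spanning tree uses every edge. For \textsc{Shortest Path}, take $G$ to be a directed path $s = v_0 \to v_1 \to \cdots \to v_n = t$ on $n$ arcs, whose only simple $s$–$t$ path uses every arc. In each construction, $\mathcal{X}'$ coincides with $\mathcal{X}_{\pmb{1}}' = \{\pmb{x}\in\{0,1\}^n : x_1+\cdots+x_n \leq n\}$, and $\mathcal{R}(\pmb{x}) = \{\pmb{1}-\pmb{x}\}$ for every $\pmb{x}\in\mathcal{X}'$.

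Next, given an instance $\mathcal{I} = (n,\pmb{C},\mathcal{U})$ of $\textsc{RTSt}_{\pmb{1}}$ with $\mathcal{U}$ drawn from $\cUvp$, $\cUhp$, or $\cUe$, I would build an instance of each of the four concrete problems by using the same ground set of size $n$, the same first-stage cost vector $\pmb{C}$, the same uncertainty set $\mathcal{U}$, and the combinatorial structure described above. Since $\mathcal{R}(\pmb{x}) = \{\pmb{1}-\pmb{x}\}$, the robust two-stage objective reduces to
\[
\min_{\pmb{x}\in\mathcal{X}'}\Bigl(\pmb{C}^T\pmb{x} + \max_{\pmb{c}\in\mathcal{U}} \pmb{c}^T(\pmb{1}-\pmb{x})\Bigr),
\]
which is exactly~(\ref{rtstd}). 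Hence solving the robust two-stage version of the concrete problem on this instance solves $\mathcal{I}$, and the reduction is clearly polynomial (the graph for \textsc{Spanning Tree} and \textsc{Shortest Path} has size $O(n)$, and the \textsc{Selection}/\textsc{RS} encodings are trivial).

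Finally, I would invoke the established hardness results: the Corollary shows that $\textsc{RTSt}_{\pmb{1}}$ under $\cUvp$ is strongly NP-hard (when $K$ is part of the input), Theorem~\ref{corhp} shows strong NP-hardness under $\cUhp$, and Theorem~\ref{thmcomplE} shows NP-hardness under $\cUe$. Transporting these along the reductions above yields the claim for all four problems. There is essentially no obstacle here beyond verifying that each embedding really yields $\mathcal{X}=\{\pmb{1}\}$ and that the polyhedral/ellipsoidal uncertainty description carries over unchanged; the argument is a packaging of Theorem~\ref{thmcopl1}, Theorem~\ref{corhp}, and Theorem~\ref{thmcomplE} through these four trivial reductions.
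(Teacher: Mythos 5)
Your proposal is correct and matches the paper's own argument: the paper likewise realizes $\textsc{RTSt}_{\pmb{1}}$ as a special case of each problem via $p=n$ for \textsc{Selection}, singleton sets $T_i=\{e_i\}$ for \textsc{RS}, and a chain network (with a unique $s$--$t$ path and unique spanning tree) for the two network problems, then transports the hardness results for $\cUvp$, $\cUhp$, and $\cUe$. The only difference is that you spell out the reduction details slightly more explicitly than the paper does.
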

\begin{proof}
	It is easy to see that $\textsc{RTSt}_{\pmb{1}}$ is a special case of the \textsc{RTSt Selection} problem, with $p=n$,  and the \textsc{RTSt RS} problem, with $T_i=\{e_i\}$, $i\in [n]$. To see that it is also a special case of the basic network problems, consider the (chain) network $G=(V,A)$ shown in Figure~\ref{figchain}. This network contains exactly one $s-t$ path and spanning tree. So the problem is only to decide for each arc, whether to choose it in the first or in the second stage, which is equivalent to solving $\textsc{RTSt}_{\pmb{1}}$.	
	\begin{figure}[ht]
		\centering
		\includegraphics{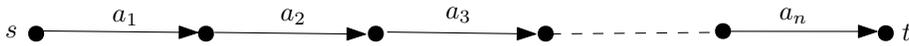}
		\caption{Illustration of the proof of Theorem~\ref{thmcompl}.}\label{figchain}
	\end{figure}
\end{proof}
In Section~\ref{secsp} we will show that the hardness result from Theorem~\ref{thmcompl} can be strengthened for the two-stage version of the \textsc{Shortest Path} problem.

\section{Compact formulations}
\label{secmip}

In this section we construct compact formulations for a special class of problems under uncertainty sets $\cUhp$ and $\cUe$. We will assume that 
\begin{equation}
\label{eqx}
\mathcal{X}=\{\pmb{x}\in \{0,1\}^n: \pmb{H}\pmb{x}\geq \pmb{g}\}
\end{equation}
and the polyhedron 
\begin{equation}
\mathcal{N}=\{\pmb{x}\in \Rset^n: \pmb{H}\pmb{x}\geq \pmb{g}, \pmb{0}\leq \pmb{x} \leq \pmb{1}\}
\label{polh}
\end{equation}
 is \emph{integral}, i.e.  $\mathcal{N}$ is the convex hull of all integral vectors
in~$\mathcal{N}$ or, equivalently, $\min(\max)\{\pmb{c}^T\pmb{x}: \pmb{x}\in \mathcal{N}\}$ is
attained by an integral vector, for each~$\pmb{c}$ for which the minimum (maximum) is finite
(see~\cite[Chapter~16.3]{SH98}).
Important examples, where the set of feasible solutions is described by $\mathcal{N}$ are 
the shortest path and the selection problems discussed in Section~\ref{secform}. 
 We can also use the constraints $\pmb{H}\pmb{x}=\pmb{g}$ to describe $\mathcal{X}$ and the further reasoning will be the same.
 We can rewrite the inner adversarial problem (notice that $\pmb{x}\in\{0,1\}^n$ is fixed) as follows:
\begin{align*}
&\max_{\pmb{c}\in \mathcal{U}}\min_{\pmb{y}\in \mathcal{R}(\pmb{x})} \pmb{c}^T\pmb{y} \\
= &\max_{\pmb{c}\in \mathcal{U}}\min_{\{\pmb{y}\in\{0,1\}^n:\pmb{y}+\pmb{x}\in \mathcal{X}\}} \pmb{c}^T\pmb{y} \\
= &\max_{\pmb{c}\in \mathcal{U}}\min_{\{\pmb{y}\in\{0,1\}^n:\pmb{H}(\pmb{y}+\pmb{x})\geq \pmb{g},\; \pmb{y}\leq \pmb{1}-\pmb{x}\}} \pmb{c}^T\pmb{y}\\
= &\max_{\pmb{c}\in \mathcal{U}}\min_{\{\pmb{y}\in\mathbb{R}^n:\pmb{H}(\pmb{y}+\pmb{x})\geq \pmb{g},\; \pmb{0}\leq \pmb{y}\leq \pmb{1}-\pmb{x}\}} \pmb{c}^T\pmb{y},
\end{align*}
where the last equality follows from the integrality assumptions and the fact that $\pmb{x}$ is a fixed binary vector.  Since $\mathcal{U}$ and $\{\pmb{y}:\pmb{H}(\pmb{y}+\pmb{x})\geq \pmb{g},\; \pmb{0}\leq \pmb{y}\leq \pmb{1}-\pmb{x}\}$ are convex (compact) sets
and $\pmb{c}^T\pmb{y}$ is a concave-convex function,
by the minimax theorem~\cite{N28}  we can rewrite the adversarial problem as follows:
\begin{equation}
\label{mip1}
	\min_{\{\pmb{y}\in\mathbb{R}^n:\pmb{H}(\pmb{y}+\pmb{x})\geq \pmb{g},\; \pmb{0}\leq \pmb{y}\leq \pmb{1}-\pmb{x}\}} \max_{\pmb{c}\in \mathcal{U}} \pmb{c}^T\pmb{y}.
\end{equation}
The robust two-stage problem thus becomes the following min-max problem:
\begin{equation}
\label{mip2}
\min_{\pmb{x}\in\mathcal{X}'} \min_{\{\pmb{y}\in\mathbb{R}^n:\pmb{H}(\pmb{y}+\pmb{x})\geq \pmb{g},\; \pmb{0}\leq \pmb{y}\leq \pmb{1}-\pmb{x}\}} \max_{\pmb{c}\in \mathcal{U}} \left( \pmb{C}^T \pmb{x} + \pmb{c}^T\pmb{y}\right).
\end{equation}


If $\mathcal{U}=\cUhp$, then we can dualize the inner maximization problem in~(\ref{mip2}), obtaining 
$$ \max_{\pmb{c}\in \mathcal{U}}\pmb{c}^T\pmb{y}=\max_{\{\pmb{\delta}\geq \pmb{0}: \pmb{A\delta}\leq \pmb{b}\}} (\underline{\pmb{c}}+\pmb{\delta})^T\pmb{y}=\underline{\pmb{c}}^T\pmb{y}+\min_{\{\pmb{u}\geq \pmb{0}: \pmb{u}^T\pmb{A}\geq \pmb{y}^T\}} \pmb{u}^T\pmb{b}.$$
As the result we get the following compact MIP formulation for \textsc{RTSt} under $\cUhp$:
\begin{equation}
\label{mip2st}
	\begin{array}{lllll}
		\min & \pmb{C}^T\pmb{x}+\underline{\pmb{c}}^T\pmb{y}+\pmb{u}^T\pmb{b}\\
			\text{s.t.} & \pmb{H}(\pmb{y}+\pmb{x})\geq \pmb{g} \\
			& \pmb{x}+\pmb{y}\leq \pmb{1} \\
			& \pmb{u}^T\pmb{A}\geq \pmb{y}^T \\
			& \pmb{x}\in \{0,1\}^n \\
			& \pmb{y}, \pmb{u}\geq \pmb{0}
	\end{array}
\end{equation}

\begin{obs}
\label{polygap}
The integrality gap of~(\ref{mip2st}) is at least $\Omega(n)$ for the \textsc{RTSt Shortest Path} problem under the uncertainty set~$\cUhp_0$.
\end{obs}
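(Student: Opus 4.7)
The plan is to exhibit a family of \textsc{RTSt Shortest Path} instances, parameterized by~$n$, for which the integer optimum of~(\ref{mip2st}) equals~$1$ while its LP relaxation has value at most~$1/n$. I take the graph~$G$ on vertices $\{s,t\}$ joined by $n$ parallel arcs $a_1,\dots,a_n$, with first-stage costs $C_i=1$ for all $i\in[n]$, and uncertainty set $\cUhp_0$ specified by $\underline{\pmb{c}}=\pmb{0}$, $\pmb{d}=\pmb{1}$ and $\Gamma=1$.

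For the integer optimum, observe that every $s$-$t$ path in~$G$ is a single arc, so every feasible integer pair $(\pmb{x},\pmb{y})$ in~(\ref{mip2st}) satisfies $\pmb{x}+\pmb{y}=\pmb{e}_k$ for some $k\in[n]$. If $x_k=1$, the objective is at least $C_k=1$; if instead $y_k=1$, the adversary may set $\delta_k=1$ (which respects $\delta_k\leq d_k=1$ and $\|\pmb{\delta}\|_1=1\leq\Gamma$), producing a worst-case recourse cost $c_k=1$. Hence the integer optimum of~(\ref{mip2st}) is exactly~$1$.

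For the LP bound, I first cast $\cUhp_0$ in the form of $\cUhp$ used in~(\ref{mip2st}): the constraints $\delta_i\leq d_i$ and $\sum_i\delta_i\leq\Gamma$ correspond to $\pmb{A}=\begin{pmatrix}\pmb{I}\\ \pmb{1}^T\end{pmatrix}$ and $\pmb{b}=\begin{pmatrix}\pmb{d}\\ \Gamma\end{pmatrix}$. Partitioning the dual multiplier $\pmb{u}$ accordingly as $(\pmb{v},w)$ with $\pmb{v}\in\Rset^n_+$ and $w\geq 0$, the constraint $\pmb{u}^T\pmb{A}\geq\pmb{y}^T$ becomes $v_i+w\geq y_i$ for every $i\in[n]$ with contribution $\sum_i v_id_i+w\Gamma$ to the objective. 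The LP point $\pmb{x}=\pmb{0}$, $y_i=1/n$ for all $i$, $\pmb{v}=\pmb{0}$, $w=1/n$ satisfies $\sum_i(x_i+y_i)=1$, the bound $x_i+y_i=1/n\leq 1$, and $v_i+w=1/n\geq y_i$. Its objective value is $0+0+w\Gamma=1/n$, so the LP relaxation has value at most~$1/n$.

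Combining the two estimates yields an integrality gap of at least $n=\Omega(n)$. The one non-routine observation driving the construction is that dualizing~$\cUhp_0$ produces the coupling $v_i+w\geq y_i$, which on the uniform vector $\pmb{y}=(1/n,\dots,1/n)$ is tight through~$w$ alone: the $\ell_1$ budget $\Gamma=1$ can be spread thinly over all $n$ fractional arcs, whereas any integer $\pmb{y}$ must concentrate a unit of mass on a single arc, allowing the adversary to focus the entire budget there.
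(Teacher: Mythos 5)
Your construction does not work, and the failure point is the claim that the integer optimum of~(\ref{mip2st}) on your instance equals~$1$. In~(\ref{mip2st}) only $\pmb{x}$ is binary; the recourse vector $\pmb{y}$ is continuous \emph{even in the unrelaxed program}, because it encodes the second-stage decision made after the scenario is revealed (this is exactly what the minimax swap leading to~(\ref{mip1}) buys). So it is not true that every feasible point of~(\ref{mip2st}) satisfies $\pmb{x}+\pmb{y}=\pmb{e}_k$: the point $\pmb{x}=\pmb{0}$, $y_i=1/n$, $\pmb{v}=\pmb{0}$, $w=1/n$ that you use to bound the LP is already feasible for the mixed-integer program, since $\pmb{x}=\pmb{0}$ is integral. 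Its value $1/n$ is also the correct value of \textsc{RTSt} on your instance: with $\pmb{x}=\pmb{0}$ the decision-maker waits and, after seeing $\pmb{c}$, buys the cheapest of the $n$ parallel arcs, so the adversary must spread its budget $\Gamma=1$ uniformly and $\textsc{Eval}(\pmb{0})=\max_{\pmb{c}}\min_k c_k = 1/n$. Hence the MIP optimum and the LP optimum coincide at $1/n$ and your instance has integrality gap~$1$, not~$n$.

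The missing idea is a gadget that forces the first-stage (integral) decision to \emph{pin down} which second-stage arcs will be needed, so that the adversary can concentrate its budget there. The paper does this with $m$ internally disjoint $s$--$i$--$t$ paths and big-$M$ costs: each arc $(s,i)$ has first-stage cost $0$ and nominal second-stage cost $M$ (so it must be bought in stage one), each arc $(i,t)$ has first-stage cost $M$ and nominal second-stage cost $0$ (so it must be bought in stage two), and $\Gamma=m$. An integral $\pmb{x}$ must commit to a single $(s,i)$, which forces a single $(i,t)$ in the second stage, on which the adversary drops the entire budget $m$; a fractional $\pmb{x}$ with $x_{si}=1/m$ spreads $y_{it}=1/m$ over all $m$ arcs and dilutes the budget to a total worst-case cost of~$1$. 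Your dual bookkeeping for $\cUhp_0$ (the constraints $v_i+w\ge y_i$ with objective $\sum_i v_i d_i + w\Gamma$) is fine and would carry over to that instance, but without the two-level, big-$M$ structure the adaptivity of $\pmb{y}$ destroys the gap.
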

\begin{proof}
Consider an instance of \textsc{RTSt Shortest Path} shown in Figure~\ref{fig0}. Set $\mathcal{X}$ contains characteristic vectors of  the simple $s-t$ paths from $s$ to $t$ of the form $s-i-t$, $i\in [m]$. Notice that $m=n/2$. It is easy to see that the optimal objective value of~(\ref{mip2st}) equals~$m$. 
In the relaxation of~(\ref{mip2st}) (see also the relaxation of~(\ref{mip2})) we can fix $x_{si}=\frac{1}{m}$, $y_{si}=0$ and $x_{it}=0$ and $y_{it}=\frac{1}{m}$ for each $i\in [m]$. The cost of this solution is~1, which gives the integrality gap of~$m=\Omega(n)$.
	\begin{figure}[ht]
		\centering
		\includegraphics[height=3.5cm]{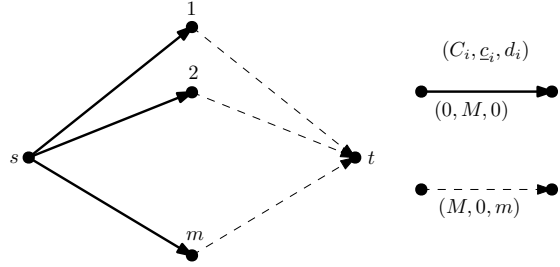}
		\caption{An instance of the robust two-stage shortest path problem with $\cUhp_0$,  $\Gamma=m$, and $M$ is a big constant.}\label{fig0}
	\end{figure}
\end{proof}

Problem~(\ref{mip2st}) can be solved in polynomial time for \textsc{RTSt Selection}  under $\cUhp_0$~\cite{CGKZ18}. In Section~\ref{secrs} we will show that the same result holds for \textsc{RTSt RS} under $\cUhp_0$. 
On the other hand,~(\ref{mip2st}) is strongly NP-hard for arbitrary $\cUhp$, when the constraint $\pmb{H}(\pmb{y}+\pmb{x})\geq \pmb{g}$ becomes $y_1+\dots+y_n+x_1+\dots+x_n= n$, i.e. when~(\ref{mip2st}) models the $\textsc{RTSt}_{\pmb{1}}$ problem (see Section~\ref{seccomplex}). We now show that $\textsc{RTSt}_{\pmb{1}}$ is polynomially solvable, when there is only a constant number of constraints in $\cUhp$, except for the nonnegativity constraints (note that the hardness result in Section~\ref{corhp} requires an unbounded number of constraints).

\begin{thm}
The $\textsc{RTSt}_{\pmb{1}}$ problem can be solved in polynomial time if the matrix $\pmb{A}$ in $\cUhp$ has a constant number of rows.
\end{thm}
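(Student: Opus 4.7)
The plan is to dualize the inner adversarial LP (which has only $m$ dual variables) and enumerate a polynomial family of candidate ``dual bases''; since $m$ is constant, this enumeration runs in polynomial time.

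First, I would substitute $\pmb{y}=\pmb{1}-\pmb{x}\in\{0,1\}^n$ in~(\ref{rtstd}), and write $w_i=C_i-\underline{c}_i$, so that the objective becomes
\begin{equation*}
\pmb{C}^T\pmb{1} - \textstyle\sum_i w_i y_i + \max\{\pmb{\delta}^T\pmb{y} : \pmb{\delta}\geq \pmb{0},\ \pmb{A\delta}\leq \pmb{b}\}.
\end{equation*}
Since $\cUhp$ is bounded, strong LP duality rewrites the inner max as $\min\{\pmb{b}^T\pmb{u} : \pmb{u}\geq \pmb{0},\ \pmb{A}^T\pmb{u}\geq \pmb{y}\}$, turning $\textsc{RTSt}_{\pmb{1}}$ into a \emph{joint} minimization of $\pmb{C}^T\pmb{1} - \sum_i w_i y_i + \pmb{b}^T\pmb{u}$ over $\pmb{y}\in\{0,1\}^n$, $\pmb{u}\in\Rset^m_{\geq 0}$ subject to $\pmb{A}^T\pmb{u}\geq \pmb{y}$.

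For any fixed $\pmb{u}\geq \pmb{0}$ the optimal binary $\pmb{y}$ is explicit: $y_i=1$ iff $w_i>0$ and $(\pmb{A}^T\pmb{u})_i\geq 1$, else $y_i=0$. Thus it suffices to minimize
$\phi(\pmb{u}) := \pmb{b}^T\pmb{u} - \sum_{i\in N,\, (\pmb{A}^T\pmb{u})_i\geq 1} w_i$
over $\pmb{u}\geq \pmb{0}$, where $N=\{i:w_i>0\}$. The key structural claim is that some optimum $\pmb{u}^*$ is a vertex of the polyhedron $P_{S^*} := \{\pmb{u}\geq \pmb{0}:(\pmb{A}^T\pmb{u})_i\geq 1,\ i\in S^*\}$, with $S^* := \{i\in N:(\pmb{A}^T\pmb{u}^*)_i\geq 1\}$. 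Indeed, replacing $\pmb{u}^*$ with any LP-vertex optimizer $\hat{\pmb{u}}$ of $\min\{\pmb{b}^T\pmb{u}:\pmb{u}\in P_{S^*}\}$ yields $\pmb{b}^T\hat{\pmb{u}}\leq \pmb{b}^T\pmb{u}^*$ while $\hat{\pmb{u}}\in P_{S^*}$ forces $S(\hat{\pmb{u}})\supseteq S^*$ (the discount weakly grows), hence $\phi(\hat{\pmb{u}})\leq \phi(\pmb{u}^*)$. Every vertex of $P_{S^*}\subset \Rset^m$ is determined by $m$ tight constraints drawn from $\{u_j\geq 0\} \cup \{(\pmb{A}^T\pmb{u})_i\geq 1: i\in S^*\}$, i.e.\ by a pair $(B,Z)$ with $B\subseteq [n]$, $Z\subseteq [m]$ and $|B|+|Z|=m$.

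The algorithm then enumerates all such pairs $(B,Z)$, of which there are $\sum_{k=0}^m\binom{n}{k}\binom{m}{m-k} = O(n^m)$. For each pair, I would solve the $m\times m$ linear system ``$(\pmb{A}^T\pmb{u})_i=1$ for $i\in B$, $u_j=0$ for $j\in Z$'' in $O(m^3)$ time, check $\pmb{u}\geq \pmb{0}$, and evaluate $\phi(\pmb{u})$ in $O(nm)$ time; the best candidate gives the optimum and $\pmb{x}=\pmb{1}-\pmb{y}^*$ is recovered, yielding an overall $O(n^{m+1})$ algorithm, polynomial for constant~$m$. The main obstacle is making the vertex-optimality claim fully rigorous: $\phi$ is discontinuous across the hyperplanes $(\pmb{A}^T\pmb{u})_i = 1$ (the discount is gained on the closed side, so $\phi$ is only lower semicontinuous), and one must invoke boundedness of $\cUhp$ to ensure the infimum is attained at a finite vertex rather than escaping to infinity along a direction on which $\pmb{b}^T\pmb{u}$ stays bounded. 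Handling these analytic details, and mild degeneracies in the enumerated linear systems, is the only routine-but-delicate part of the argument.
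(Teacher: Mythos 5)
Your overall strategy — dualize the inner maximization so that only the $m$ dual variables $\pmb{u}$ remain nontrivial, and enumerate the $O(n^m)$ candidate dual vertices — is the same idea as the paper's proof (which enumerates the $\binom{n+m}{n}$ bases of the dual LP and, for each, $O(2^m)$ settings of the associated $\pmb{y}$-coordinates). However, there is a genuine gap in your elimination of $\pmb{y}$. After dualizing, the joint constraint is $(\pmb{A}^T\pmb{u})_i\geq y_i$ for \emph{every} $i\in[n]$; setting $y_i=0$ does not make the $i$-th constraint vacuous, it leaves the requirement $(\pmb{A}^T\pmb{u})_i\geq 0$, which is a real restriction whenever $\pmb{A}$ has negative entries (nothing in the definition of $\cUhp$ forbids this — indeed the hardness reduction of Theorem~\ref{corhp} uses such an $\pmb{A}$). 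Your function $\phi$ is therefore the minimum of a \emph{relaxation}, and its value can strictly undercut the optimum. Concretely, take $n=m=2$, $\pmb{A}=\left[\begin{smallmatrix}1&-1\\0&1\end{smallmatrix}\right]$, $\pmb{b}=(1,1)^T$ (so $\delta_1-\delta_2\leq 1$, $\delta_2\leq 1$, a bounded nonempty set), $w_1=C_1-\underline{c}_1=3/2$ and $w_2\leq 0$. The true optimum is $0$ (achieved by $\pmb{y}=\pmb{0}$: for $y_1=1$ the adversary gets $\max\delta_1=2>w_1$). But your algorithm enumerates the candidate with $(\pmb{A}^T\pmb{u})_1=u_1=1$ and $u_2=0$, finds $\phi(\pmb{u})=1-w_1=-1/2<0$, and outputs $y_1=1$; the dropped constraint $(\pmb{A}^T\pmb{u})_2=-u_1+u_2\geq 0$ is violated, the reported value is unattainable, and the returned first-stage solution is suboptimal.

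The fix is small: minimize $\phi$ only over $\{\pmb{u}\geq\pmb{0}:\pmb{A}^T\pmb{u}\geq\pmb{0}\}$, so that for the optimal pattern $S$ you minimize $\pmb{b}^T\pmb{u}$ over $\{\pmb{u}\geq\pmb{0}:(\pmb{A}^T\pmb{u})_i\geq 1,\ i\in S,\ (\pmb{A}^T\pmb{u})_i\geq 0,\ i\notin S\}$; this polyhedron is still pointed with a finite attained minimum (by duality with the bounded primal), and its vertices are obtained from $m$ tight constraints drawn from the $2n+m$ hyperplanes $\{u_j=0\}\cup\{(\pmb{A}^T\pmb{u})_i=1\}\cup\{(\pmb{A}^T\pmb{u})_i=0\}$, which is still $O(n^m)$ candidates for constant $m$. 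With that correction (and the lower-semicontinuity/attainment argument you already sketch, which is indeed routine once each pattern's LP is known to attain its minimum at a vertex), your proof goes through and is arguably more direct than the paper's basis-inverse argument, since you exploit $\mathcal{R}(\pmb{x})=\{\pmb{1}-\pmb{x}\}$ to eliminate $\pmb{x},\pmb{y}$ entirely rather than keeping them as variables of a residual 0--1 program.
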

\begin{proof}
Consider the formulation~\eqref{mip2st} for $\textsc{RTSt}_{\pmb{1}}$ with $\pmb{u}=(u_1,\ldots,u_m)$ for a constant $m$. Let us assume that $\pmb{x},\pmb{y}$ are fixed. The remaining optimization problem can be rewritten as the following linear program with additional $n$ slack variables $\pmb{s}$:
\begin{align*}
\min\ &\pmb{b}^T \pmb{u} \\
\text{s.t. } &\pmb{A}^T\pmb{u}- \pmb{s} = \pmb{y}  \\
& u_i \ge 0 & i\in[m] \\
& s_i \ge 0 &  i\in[n]
\end{align*}
where $\pmb{A}\in\Rset^{m\times n}$ and $\mathrm{rank}(\pmb{A})=m$.
The coefficient matrix of this problem is 
$\begin{bmatrix}\pmb{A}^T& {-\pmb{I}_n}\end{bmatrix}\in\mathbb{R}^{n\times (m+n)}$, where
 ${\pmb{I}_n}$ denotes the identity $n\times n$ matrix. Since $\cUhp$ is nonempty and bounded, there is an optimal $n \times n$ basis matrix $\pmb{B}$ to this problem, corresponding to basic variables $\pmb{u}_{\pmb{B}}$, $\pmb{s}_{\pmb{B}}$, so that
\begin{equation}
\begin{bmatrix}\pmb{u}_{\pmb{B}}\\ \pmb{s}_{\pmb{B}}\end{bmatrix} ={\pmb{B}^{-1}}\pmb{y}. 
\label{bos}
\end{equation}
We will use the fact that the matrix ${\pmb{B}^{-1}}$ has a special structure. Namely, by reordering the constraints and variables, 
we can assume that
\[{\pmb{B}^{-1}} = \begin{bmatrix}
\begin{array}{c|c}
\pmb{A}_1 & \begin{matrix}
{\pmb{\mathrm{O}}} 
\\
\hline 
{-\pmb{I}_{(n-m')}}
\end{matrix}
\end{array}
\end{bmatrix}^{-1} = 
\begin{bmatrix}
\begin{array}{c|c}
\pmb{A}_2 & \begin{matrix}
\pmb{\mathrm{O}}
\\
\hline 
{-\pmb{I}_{(n-m')}}
\end{matrix}
\end{array}
\end{bmatrix} \in \mathbb{R}^{n\times n}
\]
with $\pmb{A}_1, \pmb{A}_2 \in\mathbb{R}^{n\times m'}$  
{and  $\pmb{\mathrm{O}}$ being the $m'\times (n-m')$ zero matrix,
where $m'\leq m$ is} 
the size of
 $\pmb{u}_{\pmb{B}}$.
Fixing a basis matrix $\pmb{B}$, problem $\textsc{RTSt}_{\pmb{1}}$ thus simplifies to
\begin{align*}
\min\ &\pmb{C}^T \pmb{x} + \left( \underline{\pmb{c}}^T + 
\begin{bmatrix}{\pmb{b}^T_{\pmb{B}}} & {\pmb{0}^T_{\pmb{B}}}\end{bmatrix} \pmb{B}^{-1}\right) \pmb{y} \\
\text{s.t. } &\pmb{B}^{-1}\pmb{y}= \begin{bmatrix}
\begin{array}{c|c}
\pmb{A}_2 & \begin{matrix}
{\pmb{\mathrm{O}}} 
\\
\hline 
{-\pmb{I}_{(n-m')}}
\end{matrix}
\end{array}
\end{bmatrix} \pmb{y} \ge 0 \\
& \pmb{x}+\pmb{y}= \pmb{1} \\
& \pmb{x}\in \{0,1\}^n \\
& \pmb{y}\in\{0,1\}^n \\
\end{align*}
{where $\pmb{b}^T_{\pmb{B}}$  and $\pmb{0}^T_{\pmb{B}}$ 
are coefficients corresponding to 
 $\pmb{u}_{\pmb{B}}$ and $\pmb{s}_{\pmb{B}}$, respectively.}
Notice that $y_i\in \{0,1\}$ for each $i\in [n]$, because $x_i\in \{0,1\}$ and $x_i+y_i=1$ for all $i\in [n]$.
If we fix the values of the first $m'$ variables in $\pmb{y}$, corresponding to matrix $\pmb{A}_2$, the resulting problem can be solved in polynomial time. Indeed, in this case all the remaining variables in $\pmb{y}$ are either forced to~1, to~0, or are kept free.
There are $\binom{n+m}{n}=O( (m+n)^m )$ many different candidates to choose a basis, and for each candidate, we enumerate $O(2^m)$ values for the $\pmb{y}$-variables involved. For fixed $m$, the resulting complexity is thus polynomial in the input size.
\end{proof}

Let us now focus on ellipsoidal uncertainty.
If $\mathcal{U}=\cUe$, then~(\ref{mip1}) can be rewritten as
$$\min_{\{\pmb{y}:\pmb{H}(\pmb{y}+\pmb{x})\geq \pmb{g},\; \pmb{0}\leq \pmb{y}\leq \pmb{1}-\pmb{x}\}} \underline{\pmb{c}}^T\pmb{y}+ ||\pmb{A}^T\pmb{y}||_{2}.$$
Consequently, we get the following compact program for \textsc{RTSt} under $\cUe$:
\begin{equation}
\label{modelip}
	\begin{array}{lllll}
		\min & \pmb{C}^T\pmb{x}+\underline{\pmb{c}}^T\pmb{y}+||\pmb{A}^T\pmb{y}||_{2}\\
			\text{s.t.}  & \pmb{H}(\pmb{y}+\pmb{x})\geq \pmb{g} \\
			& \pmb{x}+\pmb{y}\leq \pmb{1} \\
			& \pmb{x}\in \{0,1\}^n \\
			& \pmb{y}\geq \pmb{0}
	\end{array}
\end{equation}
Problem~(\ref{modelip}) is a quadratic 0-1 optimization problem, which can be difficult to solve. In Section~\ref{apprgen} we will propose some methods of computing approximate solutions to~(\ref{modelip}).

\begin{obs}
The integrality gap of~(\ref{modelip}) is at least $\Omega(n)$ for the \textsc{RTSt Shortest Path} problem under the uncertainty set~$\cUe$.
\end{obs}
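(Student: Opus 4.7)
The plan is to mirror the proof of Observation~\ref{polygap}, with the polyhedral set $\cUhp_0$ replaced by a carefully chosen ellipsoidal set $\cUe$. I would reuse the same parallel-path network from Figure~\ref{fig0}: $m$ parallel $s$-$i$-$t$ paths, so $n=2m$. The first-stage costs are set to $0$ on the $s$-$i$ arcs and to a large constant $M$ on the $i$-$t$ arcs, so every integer solution is forced to defer each chosen $i$-$t$ arc to the second stage. The LP fractional candidate is again $x_{si}=y_{it}=1/m$ for all $i\in[m]$; the task is to design $\underline{\pmb{c}}$ and $\pmb{A}$ so that this fractional solution evaluates to $O(1)$ while every integer solution costs $\Omega(m)$.

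The core construction for $\pmb{A}$ is a cancellation design: take the rows of $\pmb{A}$ indexed by the $s$-$i$ arcs to be zero and the rows indexed by the $i$-$t$ arcs to be scaled vectors $\pmb{v}_i\in\mathbb{R}^m$ with $\|\pmb{v}_i\|_2$ large but $\sum_{i}\pmb{v}_i=\pmb{0}$, for example $\pmb{v}_i=m(\pmb{e}_i-\pmb{1}/m)$. With this choice, $\|\pmb{A}^T\pmb{y}\|_2=\|\pmb{v}_{i^*}\|_2=\Theta(m)$ for the integer $\pmb{y}$ with $y_{i^*t}=1$, while for the uniform fractional split we have $\pmb{A}^T\pmb{y}=(1/m)\sum_i\pmb{v}_i=\pmb{0}$. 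I would then follow the calculation in Observation~\ref{polygap}, substituting $\|\pmb{A}^T\pmb{y}\|_2$ for the polyhedral worst-case deviation, to evaluate the two objective values and read off the gap.

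The delicate step will be controlling the nominal contribution $\underline{\pmb{c}}^T\pmb{y}$. Since the definition of $\cUe$ requires $\cUe\subset\Rset_+^n$, we need $\underline{c}_j\ge\|\pmb{A}_{j\cdot}\|_2$ for every $j$, which with the above $\pmb{A}$ forces $\underline{c}_{it}\ge m$. Because both the integer and the uniform-split fractional candidates satisfy $\sum_i y_{it}=1$, this nominal term contributes the same amount to both and threatens to collapse the gap to a constant factor. Overcoming this obstacle is the main difficulty: it will require either letting the LP relaxation strictly reduce $\sum_i y_{it}$ below $1$ (for instance by enabling $x_{it}$ at negligible first-stage cost while keeping integer solutions from using it), or concentrating the nonzero rows of $\pmb{A}$ on a subset of arcs that the fractional LP can avoid but every integer $s$-$t$ path must hit. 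Once such a balanced configuration is pinned down, the remaining verification---computing that the integer optimum is $\Omega(m)$ and exhibiting a fractional LP solution of cost $O(1)$---is routine and mirrors the argument in Observation~\ref{polygap}.
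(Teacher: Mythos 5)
Your proposal is not a complete proof: you explicitly leave the central difficulty unresolved, and neither of the two escape routes you sketch can work on this instance. The requirement $\cUe\subset\Rset^n_+$ forces $\underline{c}_j\geq\Vert\pmb{A}_{j\cdot}\Vert_2$ for every arc $j$, where $\pmb{A}_{j\cdot}$ denotes the $j$-th row of $\pmb{A}$. Since flow conservation at $t$ together with the first-stage cost $M$ on the $(i,t)$ arcs forces $\sum_i y_{it}\approx 1$ in any cheap solution, fractional or integral, the relaxed objective satisfies $\underline{\pmb{c}}^T\pmb{y}\geq\min_i\underline{c}_{it}$, while the integer optimum is at most $\min_i\bigl(\underline{c}_{it}+\Vert\pmb{A}_{(i,t)\cdot}\Vert_2\bigr)\leq 2\min_i\underline{c}_{it}$ (choose that single path, taking $(s,i)$ in the first stage for free). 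So on this network \emph{any} construction respecting $\underline{c}_j\geq\Vert\pmb{A}_{j\cdot}\Vert_2$ has integrality gap at most about~$2$: the obstacle you flag is not merely open, it is fatal to your cancellation design. Your route (a) is self-contradictory --- if $x_{it}$ is available at negligible first-stage cost, the integer solution buys the whole path up front for $O(1)$, since the feasible set is the same for both problems; your route (b) fails because a unit of fractional $s$--$t$ flow must cross every $s$--$t$ cut, so the LP cannot avoid an arc that every integer path must use (and in this parallel-path network no such arc exists anyway).

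The paper's proof takes a different and much simpler route: it keeps the network of Observation~\ref{polygap}, sets $C_{si}=0$, $C_{it}=M$, $\underline{c}_{si}=M$, $\underline{c}_{it}=0$, and takes $\pmb{A}=m\pmb{I}_{2m}$ diagonal, with no cancellation at all. The gap then comes solely from the term $\Vert\pmb{A}^T\pmb{y}\Vert_2=m\Vert\pmb{y}\Vert_2$, which equals $m$ for an integral completion $y_{i^*t}=1$ but only $m\cdot\frac{1}{\sqrt{m}}=\sqrt{m}$ for the uniform split $y_{it}=1/m$. Two remarks are in order. First, this construction also violates $\underline{c}_j\geq\Vert\pmb{A}_{j\cdot}\Vert_2$ (here $\underline{c}_{it}=0<m$), so your worry about $\cUe\subset\Rset^n_+$ applies equally to the paper's instance; the paper evaluates the formulation~(\ref{modelip}) directly and does not enforce nonnegativity of the scenarios. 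Second, the ratio actually obtained by the diagonal choice is $m/\sqrt{m}=\Theta(\sqrt{n})$, not $\Theta(n)$; your cancellation idea (zero row sums, $\pmb{A}^T\pmb{y}=\pmb{0}$ for the uniform split) is precisely what would be needed to push the ratio to $\Theta(n)$, but, as shown above, it cannot coexist with $\cUe\subset\Rset^n_+$ on this network. In its present form your argument therefore establishes nothing, and the specific repair you are hoping for does not exist within the framework you set up.
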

\begin{proof}
Consider the same network as in the proof of Observation~\ref{polygap}. For each arc $(s,i)$ we fix $C_{si}=0$ and $\underline{c}_{si}=M$ and for each arc $(i,t)$ we fix $C_{ti}=M$ an $\underline{c}_{ti}=0$, $i\in [m]$. Let $\pmb{A}$ be a $2m\times 2m$ diagonal matrix having the values of $m$ on the diagonal. Hence
$$\cUe=\{\underline{\pmb{c}}+m\pmb{\delta}: ||\pmb{\delta}||_{2}\leq 1\}\subset \Rset^{2m}.
$$
The reasoning is then the same as in the proof of Observation~\ref{polygap}.
\end{proof}

\section{Computing  approximate solutions}
\label{apprgen}


A compact formulation for the general \textsc{RTSt} problem is unknown.  Therefore, solving the problem requires applying special row and column generation techniques (see, e.g.~\cite{ZZ13}).  As this method may consist of solving many hard MIP formulations, it can be inefficient for large problems. In this section we propose algorithms, which return solutions with some guaranteed distance to the optimum. We will discuss a general case as well as  cases that can be modeled as the min-max problem~(\ref{mip2}).
  
\subsection{General approximation results} 
 
 Let  $\mathcal{X}$ be expressed as~(\ref{eqx}), but now no assumptions on
 the polyhedron $\mathcal{N}$ (see~(\ref{polh})) are imposed. So, the underlying deterministic problem can be NP-hard and also hard to approximate. By interchanging the min-max operators we get the following lower bound on the optimal objective value
of the \textsc{RTSt} problem:
$$LB=\max_{\pmb{c}\in \mathcal{U}} \min_{\pmb{x}\in \mathcal{X}'}\min_{\pmb{y}\in \mathcal{R}(\pmb{x})} (\pmb{C}^T\pmb{x}+\pmb{c}^T\pmb{y})=\max_{\pmb{c}\in \mathcal{U}} \min_{(\pmb{x},\pmb{y}) \in \mathcal{Z}} (\pmb{C}^T\pmb{x}+\pmb{c}^T\pmb{y}),$$
where
 $$
		\mathcal{Z}=\{(\pmb{x},\pmb{y}): \pmb{H}(\pmb{x}+\pmb{y})\geq \pmb{g}, \pmb{x}+\pmb{y}\leq \pmb{1}, \pmb{x} \in\{0,1\}^n,\pmb{y}\in\{0,1\}^n\}.
 $$
 Consider the following relaxation of $\mathcal{Z}$:
  $$
		\mathcal{Z}'=\{(\pmb{x},\pmb{y}): \pmb{H}(\pmb{x}+\pmb{y})\geq \pmb{g}, \pmb{x}+\pmb{y}\leq \pmb{1}, \pmb{0}\leq \pmb{x}\leq \pmb{1}, \pmb{0} \leq \pmb{y}\leq \pmb{1}\}.
 $$
 Since $\mathcal{U}$ and $\mathcal{Z}'$ are convex sets,  by the minimax theorem~\cite{N28},  we have
 $$LB\geq \max_{\pmb{c}\in \mathcal{U}} \min_{(\pmb{x},\pmb{y})\in \mathcal{Z}'} (\pmb{C}^T\pmb{x}+\pmb{c}^T\pmb{y})=
 \min_{(\pmb{x},\pmb{y})\in \mathcal{Z}'}\max_{\pmb{c}\in \mathcal{U}}  (\pmb{C}^T\pmb{x}+\pmb{c}^T\pmb{y})$$
We also get the following upper bound on the optimal objective value (the \emph{min-max problem}): 
 \begin{equation}
 \label{minmax}
 UB=\min_{\pmb{x}\in \mathcal{X}'}\min_{\pmb{y}\in \mathcal{R}(\pmb{x})}\max_{\pmb{c}\in \mathcal{U}}  (\pmb{C}^T\pmb{x}+\pmb{c}^T\pmb{y})=\min_{(\pmb{x},\pmb{y})\in \mathcal{Z}}\max_{\pmb{c}\in \mathcal{U}}  (\pmb{C}^T\pmb{x}+\pmb{c}^T\pmb{y}).
 \end{equation}
 We thus get
 \begin{equation}
 \label{eqfract}
 \frac{UB}{LB}\leq \frac{\displaystyle \min_{(\pmb{x},\pmb{y})\in \mathcal{Z}}\max_{\pmb{c}\in \mathcal{U}}  (\pmb{C}^T\pmb{x}+\pmb{c}^T\pmb{y})}{\displaystyle \min_{(\pmb{x},\pmb{y})\in \mathcal{Z}'}\max_{\pmb{c}\in \mathcal{U}}  (\pmb{C}^T\pmb{x}+\pmb{c}^T\pmb{y})}=\rho.
 \end{equation}
 Let $(\pmb{x}^*,\pmb{y}^*)\in \mathcal{Z}$ be an optimal solution to the min-max problem~(\ref{minmax}). Then 
 $$\textsc{Eval}(\pmb{x}^*)\leq \pmb{C}^T\pmb{x}^*+\max_{\pmb{c}\in \mathcal{U}} \pmb{c}^T \pmb{y}^*=UB.$$
 We thus get 
 \begin{equation}
 \label{eqeval}
 \textsc{Eval}(\pmb{x}^*)\leq \rho \cdot LB
 \end{equation}
 and $\pmb{x}^*\in \mathcal{X}'$ is a $\rho$-\emph{approximate, first-stage solution} to~\textsc{RTSt},
  i.e. a solution whose value $\textsc{Eval}(\pmb{x}^*)$ is within a
factor of $\rho$ of the value of an optimal solution to~\textsc{RTSt}.
  For the uncertainty sets $\cUhp$ and $\cUe$ the value of LB can be computed in polynomial time by solving convex optimization problems and
 for $\cUvp$ by solving an LP problem. On the other hand, the upper bound and approximate solution~$\pmb{x}^*$ can be computed by solving a compact 0-1 problem (after dualizing the inner maximization problem in~(\ref{minmax})). In the next part of this section we will show a special case of the problem for which $\pmb{x}^*$ can be computed in polynomial time.

We now consider the polyhedral uncertainty. Using duality, the min-max problem~(\ref{minmax}) under $\cUhp$, can be represented as the following MIP formulation:
\begin{equation}
\label{mipminmax}
	\begin{array}{lllll}
		\min & \pmb{C}^T\pmb{x}+\underline{\pmb{c}}^T\pmb{y}+\pmb{u}^T\pmb{b}\\
			\text{s.t.}  & \pmb{H}(\pmb{y}+\pmb{x})\geq \pmb{g} \\
			& \pmb{x}+\pmb{y}\leq \pmb{1} \\
			& \pmb{u}^T\pmb{A}\geq \pmb{y}^T \\
			& \pmb{x}, \pmb{y}\in \{0,1\}^n \\
			& \pmb{u}\geq \pmb{0}
	\end{array}
\end{equation}
 The relaxation of~(\ref{mipminmax}), used to compute $LB$, is an LP problem, so it can be solved in polynomial time. The problem~(\ref{mipminmax}) can be more complex. However, it can be easier to solve than the original robust two-stage problem. Using~(\ref{eqfract}) and~(\ref{eqeval}), we get the following theorem:
 \begin{thm}
 	Let $\pmb{x}^*$ be optimal to~(\ref{mipminmax}). Then $\pmb{x}^*$ is a $\rho$-approximate first-stage solution to the \textsc{RTSt} problem and $\rho$ is the integrality gap of~(\ref{mipminmax}).
 \end{thm}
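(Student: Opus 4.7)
The plan is to identify $\rho$, defined in (\ref{eqfract}) as a ratio of two min-max values, with the ratio between the optimal value of the MIP (\ref{mipminmax}) and the optimal value of its continuous relaxation; the approximation bound will then follow directly from (\ref{eqeval}).

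First I would verify that the optimum of (\ref{mipminmax}) is exactly $UB$. For any fixed integer $(\pmb{x},\pmb{y})\in\mathcal{Z}$, the inner adversarial problem $\max\{\pmb{c}^T\pmb{y}:\pmb{c}\in\cUhp\}=\underline{\pmb{c}}^T\pmb{y}+\max\{\pmb{\delta}^T\pmb{y}:\pmb{A\delta}\leq\pmb{b},\pmb{\delta}\geq\pmb{0}\}$ is a bounded, feasible LP, so strong LP duality replaces it with $\underline{\pmb{c}}^T\pmb{y}+\min\{\pmb{u}^T\pmb{b}:\pmb{u}^T\pmb{A}\geq\pmb{y}^T,\pmb{u}\geq\pmb{0}\}$. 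Folding this dual into the outer minimization over $\mathcal{Z}$ produces precisely (\ref{mipminmax}), with the same optimal value as (\ref{minmax}), namely $UB$.

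Second, I would relax the integrality constraints of (\ref{mipminmax}), i.e.\ replace $\mathcal{Z}$ by $\mathcal{Z}'$ in the reformulation. Because LP duality was applied pointwise in $\pmb{y}$, it remains valid for every fractional $(\pmb{x},\pmb{y})\in\mathcal{Z}'$, and hence the relaxation value equals $\min_{(\pmb{x},\pmb{y})\in\mathcal{Z}'}\max_{\pmb{c}\in\cUhp}(\pmb{C}^T\pmb{x}+\pmb{c}^T\pmb{y})$, which is exactly the denominator appearing in the definition of $\rho$ in (\ref{eqfract}). Consequently $\rho$ is, by definition, the integrality gap of (\ref{mipminmax}).

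Finally, the approximation guarantee follows from (\ref{eqeval}): the first-stage component $\pmb{x}^*$ of any integer optimum of (\ref{mipminmax}) satisfies $\textsc{Eval}(\pmb{x}^*)\leq \rho\cdot LB$, and since $LB$ lower bounds the optimum of \textsc{RTSt}, the vector $\pmb{x}^*$ is indeed a $\rho$-approximate first-stage solution. The only delicate point is the justification of strong LP duality when dualizing the inner maximization; this is guaranteed by the standing assumption from Section~\ref{secform} that $\cUhp$ is nonempty and bounded, so no substantial obstacle arises beyond careful bookkeeping.
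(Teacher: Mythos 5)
Your proposal is correct and follows essentially the same route as the paper: the paper obtains (\ref{mipminmax}) from the min-max problem (\ref{minmax}) by dualizing the inner maximization exactly as you do, identifies $\rho$ from (\ref{eqfract}) with the integrality gap, and concludes via (\ref{eqeval}); the theorem is stated there as an immediate consequence of that chain. Your write-up simply makes explicit the strong-duality and relaxation steps that the paper leaves implicit, and no gap remains.
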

 
 We now describe the case in which $\pmb{x}^*$ can be computed in polynomial time, which yields a $\rho$-approximation algorithm for the robust two-stage problem. Namely, we consider the continuous budgeted uncertainty $\cUhp_0$. Fix $\pmb{y}$ and consider the following problem:
 $$\max_{\pmb{c}\in \cUhp_0} \pmb{c}^T\pmb{y}.$$
This problem can be solved by observing that either the whole budget $\Gamma$ is allocated to $\pmb{y}$ or the allocation is blocked by the upper bounds on the deviations. So
$$\max_{\pmb{c}\in \cUhp_0} \pmb{c}^T\pmb{y}=\min\{\underline{\pmb{c}}^T\pmb{y}+\Gamma, (\underline{\pmb{c}}+\pmb{d})^T\pmb{y}\}.$$
Hence the min-max problem can be rewritten as follows:
\begin{align*}
&\min_{(\pmb{x},\pmb{y})\in \mathcal{Z}}\max_{\pmb{c}\in \cUhp_0}  (\pmb{C}^T\pmb{x}+\pmb{c}^T\pmb{y}) \\
=&\min_{(\pmb{x},\pmb{y})\in \mathcal{Z}}\min\{\pmb{C}^T\pmb{x}+\underline{\pmb{c}}^T\pmb{y}+\Gamma, \pmb{C}^T\pmb{x}+(\underline{\pmb{c}}+\pmb{d})^T\pmb{y}\} \\
=&\min\{\textsc{TSt}(\underline{\pmb{c}})+\Gamma, \textsc{TSt}(\underline{\pmb{c}}+\pmb{d})\}.
\end{align*}
In consequence, the minmax problem reduces to solving two two-stage problems, which can be done in polynomial time if the underlying problem $\mathcal{P}$ is polynomially solvable (see Observation~\ref{obsones}). So, in this case a $\rho$-approximate solution $\pmb{x}^*$ can be computed in polynomial time.

\subsection{Approximating the problems with the integrality property}

In this section we propose some methods of constructing approximate solutions for 
the \textsc{RTSt} problem if the polyhedron $\mathcal{N}$ (see~(\ref{polh}))
satisfies the integrality property. Recall that in this case we can represent  \textsc{RTSt} as the 
min-max formulation~(\ref{mip2}), so from now on we  explore the approximability of~(\ref{mip2}).
 Let $\tilde{\pmb{c}}\in\cU$ be any fixed scenario.  Thus the two-stage problem (see Section~\ref{secform})
 with~$\tilde{\pmb{c}}$, in the second stage,
 can be then formulated as follows:
\begin{equation}
\label{miptst}
	\begin{array}{llll}
	\min & \pmb{C}^T\pmb{x}+\tilde{\pmb{c}}^T\pmb{y} \\
		& \pmb{H}(\pmb{x}+\pmb{y})\geq \pmb{g}\\
		& \pmb{x}+\pmb{y}\leq \pmb{1}\\
		& \pmb{x}, \pmb{y}\in \{0,1\}^n
	\end{array}
\end{equation}
Using Observation~\ref{obsones}, we can solve~(\ref{miptst}) in polynomial time, by solving 
 one underlying deterministic problem~$\mathcal{P}$. We now show how to obtain an approximate solution to~(\ref{mip2}) by solving~(\ref{miptst}) for an appropriately chosen scenario $\tilde{\pmb{c}}$. Let $(\hat{\pmb{x}}, \hat{\pmb{y}})\in \mathcal{Z}$ be an optimal solution to~(\ref{miptst}). 
\begin{lem}\label{applemma}
If $c_i \le t \tilde{c}_i$, $i\in [n]$ (shortly $\pmb{c} \le t \tilde{\pmb{c}}$) for each $\pmb{c}\in \cU$, then $(\hat{\pmb{x}},\hat{\pmb{y}})$ is a $t$-approximate solution to~(\ref{mip2}).
\end{lem}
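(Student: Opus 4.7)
The plan is direct and hinges only on $\tilde{\pmb{c}}\in\mathcal{U}$ together with the componentwise domination $\pmb{c}\le t\tilde{\pmb{c}}$. Let $(\pmb{x}^*,\pmb{y}^*)$ be an optimal solution to the min-max problem~(\ref{mip2}) with value $\mathrm{OPT}=\max_{\pmb{c}\in\mathcal{U}}(\pmb{C}^T\pmb{x}^*+\pmb{c}^T\pmb{y}^*)$. By the integrality property of $\mathcal{N}$ the inner minimization of~(\ref{mip2}) attains a 0-1 optimum, so we may assume $(\pmb{x}^*,\pmb{y}^*)\in\mathcal{Z}$, hence $(\pmb{x}^*,\pmb{y}^*)$ is feasible for~(\ref{miptst}).

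First, I would use the optimality of $(\hat{\pmb{x}},\hat{\pmb{y}})$ for~(\ref{miptst}) and the fact that $\tilde{\pmb{c}}\in\mathcal{U}$ to pin its cost under the scenario~$\tilde{\pmb{c}}$:
\begin{equation*}
\pmb{C}^T\hat{\pmb{x}}+\tilde{\pmb{c}}^T\hat{\pmb{y}}\;\le\;\pmb{C}^T\pmb{x}^*+\tilde{\pmb{c}}^T\pmb{y}^*\;\le\;\max_{\pmb{c}\in\mathcal{U}}(\pmb{C}^T\pmb{x}^*+\pmb{c}^T\pmb{y}^*)\;=\;\mathrm{OPT}.
\end{equation*}
Second, I would exploit $\pmb{c}\le t\tilde{\pmb{c}}$ together with $\hat{\pmb{y}}\in\{0,1\}^n\subset\mathbb{R}^n_+$ to pass from $\tilde{\pmb{c}}$ to the worst-case over $\mathcal{U}$:
\begin{equation*}
\max_{\pmb{c}\in\mathcal{U}}(\pmb{C}^T\hat{\pmb{x}}+\pmb{c}^T\hat{\pmb{y}})\;\le\;\pmb{C}^T\hat{\pmb{x}}+t\,\tilde{\pmb{c}}^T\hat{\pmb{y}}\;\le\;t\bigl(\pmb{C}^T\hat{\pmb{x}}+\tilde{\pmb{c}}^T\hat{\pmb{y}}\bigr)\;\le\;t\cdot\mathrm{OPT},
\end{equation*}
where the middle inequality uses $\pmb{C}\ge\pmb{0}$ and $t\ge 1$, and the last inequality is the bound from the first step.

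The only substantive thing requiring a comment is the claim $t\ge 1$. Applying the hypothesis with $\pmb{c}=\tilde{\pmb{c}}\in\mathcal{U}$ gives $\tilde{c}_i\le t\tilde{c}_i$ for all $i$; if some $\tilde{c}_i>0$ then $t\ge 1$, while if $\tilde{\pmb{c}}=\pmb{0}$ then the hypothesis forces $\pmb{c}=\pmb{0}$ for every $\pmb{c}\in\mathcal{U}$, in which case the adversarial term vanishes and the inequality is immediate. Beyond this minor bookkeeping, the argument is a short chain of inequalities driven by one LP-style feasibility observation and one scaling inequality, so I do not anticipate any real obstacle.
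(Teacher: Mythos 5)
Your chain of inequalities is the same one the paper uses (just traversed in the opposite order), and your remark on why $t\ge 1$ is a nice touch. But there is one step whose justification does not hold as stated: the claim that ``by the integrality property of $\mathcal{N}$ the inner minimization of~(\ref{mip2}) attains a 0-1 optimum, so we may assume $(\pmb{x}^*,\pmb{y}^*)\in\mathcal{Z}$.'' The inner objective in~(\ref{mip2}) is $\pmb{y}\mapsto\max_{\pmb{c}\in\cU}\pmb{c}^T\pmb{y}$, which is a \emph{convex} piecewise-linear function, not a linear one; the integrality of the polyhedron only guarantees integral optima for linear objectives. A convex function minimized over an integral polytope can very well attain its minimum at a fractional interior point: with $\cU=\mathrm{conv}\{(1,0),(0,1)\}$ and the constraint $y_1+y_2=1$, $\pmb{y}\in[0,1]^2$, the function $\max\{y_1,y_2\}$ has minimum $1/2$ at $\pmb{y}=(1/2,1/2)$, while every integral point gives value $1$. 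So restricting $(\pmb{x}^*,\pmb{y}^*)$ to $\mathcal{Z}$ can strictly increase the benchmark $\mathrm{OPT}$, and your argument would then only bound the cost of $(\hat{\pmb{x}},\hat{\pmb{y}})$ by $t$ times this larger quantity, which is weaker than the claimed $t$-approximation to~(\ref{mip2}).

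The inequality you actually need, $\pmb{C}^T\hat{\pmb{x}}+\tilde{\pmb{c}}^T\hat{\pmb{y}}\le\pmb{C}^T\pmb{x}^*+\tilde{\pmb{c}}^T\pmb{y}^*$, is nevertheless true, and the correct justification is the one the paper gives: apply the integrality property to~(\ref{miptst}) rather than to~(\ref{mip2}). For each fixed binary $\pmb{x}$ the objective $\tilde{\pmb{c}}^T\pmb{y}$ is linear, so relaxing $\pmb{y}\in\{0,1\}^n$ to $\pmb{0}\le\pmb{y}\le\pmb{1}$ in~(\ref{miptst}) does not change its optimal value; since the (possibly fractional) pair $(\pmb{x}^*,\pmb{y}^*)$ is feasible for that relaxation, the optimal value $\pmb{C}^T\hat{\pmb{x}}+\tilde{\pmb{c}}^T\hat{\pmb{y}}$ of~(\ref{miptst}) is at most $\pmb{C}^T\pmb{x}^*+\tilde{\pmb{c}}^T\pmb{y}^*$. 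With that substitution your proof is complete and coincides with the paper's.
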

\begin{proof}
Let $(\pmb{x}^*,\pmb{y}^*)$ be an optimal solution to~(\ref{mip2}).  We then have
$\max_{\pmb{c}\in\cU} (\pmb{C}^T\hat{\pmb{x}} + \pmb{c}^T\hat{\pmb{y}})= \pmb{C}^T\hat{\pmb{x}} +
 \overline{\pmb{c}}^T\hat{\pmb{y}}
 \overset{(1)}{\le} t \left( \pmb{C}^T\hat{\pmb{x}} + \tilde{\pmb{c}}^T \hat{\pmb{y}} \right) \overset{(2)}{\le} t \left( \pmb{C}^T\pmb{x}^* + \tilde{\pmb{c}}^T\pmb{y}^* \right) \le t \left(  \pmb{C}^T\pmb{x}^* + \max_{\pmb{c}\in\cU} \pmb{c}^T\pmb{y}^* \right)$. The inequality~(1) follows from the assumption that $\pmb{c}^*\leq t \tilde{\pmb{c}}$ and $t\geq 1$. The inequality~(2) holds
 because  $(\hat{\pmb{x}}, \hat{\pmb{y}})$ is an optimal solution to~(\ref{miptst}) and this optimal solution will not change when we relax $\pmb{y}\in \{0,1\}^n$ with $\pmb{0}\leq \pmb{y}\leq \pmb{1}$ in~(\ref{miptst}) due to 
  the integrality property assumed.
\end{proof}
Accordingly, we can construct the best guarantee $t$,  by solving the following convex optimization problem:
\begin{equation*}
\begin{array}{lllll}
\max\ &t^{-1} \\
\text{s.t. } & t^{-1} \max_{\pmb{c}\in\cU} c_i \le \tilde{c}_i & i\in[n] \\
& \tilde{\pmb{c}}\in\cU \\
& t^{-1} \ge 0
\end{array}
\end{equation*}
where the values $\max_{\pmb{c}\in\cU} c_i$, $i\in [n]$, have to be precomputed by solving additional $n$ convex problems. 

\subsubsection{Polyhedral uncertainty}

The next two theorems are consequences of Lemma~\ref{applemma}.
\begin{thm}
\label{apprK}
Problem~(\ref{mip2}) with $\cUvp={\rm conv}\{\pmb{c}_1,\dots,\pmb{c}_K\}$ is approximable within $K$.
\end{thm}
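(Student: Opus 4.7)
The plan is to invoke Lemma~\ref{applemma} with a carefully chosen scenario $\tilde{\pmb{c}}\in\cUvp$ so that every $\pmb{c}\in\cUvp$ satisfies $\pmb{c}\le K\tilde{\pmb{c}}$ componentwise; then $t=K$ in the lemma and we immediately get a $K$-approximate solution.

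The natural candidate is the centroid of the defining scenarios,
\[
\tilde{\pmb{c}} \;=\; \frac{1}{K}\sum_{j\in[K]}\pmb{c}_j,
\]
which clearly lies in $\cUvp$ since it is a convex combination with weights $1/K$. For any $\pmb{c}\in\cUvp$ we can write $\pmb{c}=\sum_{j\in[K]}\lambda_j\pmb{c}_j$ with $\lambda_j\geq 0$ and $\sum_j\lambda_j=1$, so in particular $\lambda_j\le 1$ for each $j$. Using nonnegativity of the $\pmb{c}_j$ (recall $\cUvp\subset\mathbb{R}^n_+$), we obtain for each coordinate $i\in[n]$
\[
c_i \;=\; \sum_{j\in[K]}\lambda_j c_{j,i} \;\le\; \sum_{j\in[K]} c_{j,i} \;=\; K\,\tilde{c}_i.
\]
Hence $\pmb{c}\le K\tilde{\pmb{c}}$ for every $\pmb{c}\in\cUvp$.

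With this $\tilde{\pmb{c}}$ fixed, solve the two-stage problem~\eqref{miptst}; by Observation~\ref{obsones} this reduces to one call to the underlying deterministic problem $\mathcal{P}$. Let $(\hat{\pmb{x}},\hat{\pmb{y}})$ be its optimum. Lemma~\ref{applemma} (whose hypothesis is exactly the inequality we just established, with $t=K$) then guarantees that $(\hat{\pmb{x}},\hat{\pmb{y}})$ is a $K$-approximate solution to~\eqref{mip2}, which is the claim.

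There is no real obstacle here: the only thing to notice is that the uniform convex combination is always feasible in $\cUvp$ and that nonnegativity of the scenarios is what allows us to drop the $\lambda_j$'s up to a factor $K$. If one wished to be slightly sharper, one could instead solve the small convex program displayed just before Theorem~\ref{apprK} to compute the best admissible $t\le K$, but the uniform centroid already certifies the factor $K$ in closed form.
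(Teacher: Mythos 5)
Your proposal is correct and follows exactly the paper's argument: fix the uniform centroid $\tilde{\pmb{c}}=\frac{1}{K}\sum_{j\in[K]}\pmb{c}_j\in\cUvp$, observe that $\pmb{c}\le K\tilde{\pmb{c}}$ for all $\pmb{c}\in\cUvp$, and apply Lemma~\ref{applemma} with $t=K$. You additionally spell out why the componentwise bound holds (bounding $\lambda_j\le 1$ and using nonnegativity of the scenarios), which the paper asserts without detail.
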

\begin{proof}
Fix $\tilde{\pmb{c}}=\frac{1}{K} \sum_{k\in[K]} \pmb{c}_k\in \cUvp$. Then for each $\pmb{c}\in {\rm conv}\{\pmb{c}_1,\dots,\pmb{c}_K\}$,  the inequality $\pmb{c}\leq K\tilde{\pmb{c}}$ holds. Thus by fixing $t=K$ in 
 Lemma~\ref{applemma} the theorem follows.
\end{proof}
\begin{thm}
	If  $\underline{\pmb{c}} \geq \alpha (\underline{\pmb{c}}+\pmb{d})$, $\alpha \in (0,1]$, in $\cUhp_0$, then~(\ref{mip2}) with $\cUhp_0$ is approximable within~$\frac{1}{\alpha}$.
\end{thm}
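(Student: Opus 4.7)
The plan is to apply Lemma~\ref{applemma} directly, using the nominal cost vector $\underline{\pmb{c}}$ as the fixed scenario $\tilde{\pmb{c}}$. Since $\underline{\pmb{c}}\in\cUhp_0$ (corresponding to $\pmb{\delta}=\pmb{0}$), this is a valid choice.

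First I would verify the hypothesis of Lemma~\ref{applemma} with $t=\tfrac{1}{\alpha}$ and $\tilde{\pmb{c}}=\underline{\pmb{c}}$. Pick any $\pmb{c}=\underline{\pmb{c}}+\pmb{\delta}\in\cUhp_0$, so in particular $\pmb{0}\le\pmb{\delta}\le\pmb{d}$. Then componentwise $c_i\le \underline{c}_i+d_i$. The assumption $\underline{\pmb{c}}\geq\alpha(\underline{\pmb{c}}+\pmb{d})$ is equivalent (since $\alpha>0$) to $\underline{c}_i+d_i\le \tfrac{1}{\alpha}\underline{c}_i$ for every $i\in[n]$, and therefore $c_i\le \tfrac{1}{\alpha}\underline{c}_i=t\,\tilde{c}_i$. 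Note that $t=\tfrac{1}{\alpha}\geq 1$ because $\alpha\in(0,1]$, so the lemma is indeed applicable.

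Applying Lemma~\ref{applemma}, an optimal solution $(\hat{\pmb{x}},\hat{\pmb{y}})$ of the two-stage formulation~(\ref{miptst}) with $\tilde{\pmb{c}}=\underline{\pmb{c}}$ is a $\tfrac{1}{\alpha}$-approximate solution to~(\ref{mip2}). By Observation~\ref{obsones}, such a $(\hat{\pmb{x}},\hat{\pmb{y}})$ can be obtained by one call to the underlying deterministic problem $\mathcal{P}$ with costs $\min\{C_i,\underline{c}_i\}$, which inherits polynomial solvability from $\mathcal{P}$ whenever the latter is polynomially solvable.

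There is no real obstacle here; the only point to double-check is that the $\ell_1$-budget constraint $\|\pmb{\delta}\|_1\le\Gamma$ in $\cUhp_0$ plays no role in the bound, since the componentwise inequality $\pmb{c}\le\tfrac{1}{\alpha}\underline{\pmb{c}}$ already follows from the box constraints alone, and the budget only shrinks the uncertainty set further.
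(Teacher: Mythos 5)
Your proposal is correct and follows essentially the same route as the paper's proof: fix $\tilde{\pmb{c}}=\underline{\pmb{c}}$, observe that every $\pmb{c}\in\cUhp_0$ satisfies $\pmb{c}\leq\underline{\pmb{c}}+\pmb{d}\leq\frac{1}{\alpha}\underline{\pmb{c}}$, and apply Lemma~\ref{applemma} with $t=\frac{1}{\alpha}$. The additional remarks on polynomial solvability via Observation~\ref{obsones} and on the irrelevance of the budget constraint are accurate but not needed for the claimed approximation ratio.
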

\begin{proof}
Fix $\tilde{\pmb{c}}=\underline{\pmb{c}}$. Then for each scenario $\pmb{c}\in \cUhp_0$, we get $\pmb{c}\leq \underline{\pmb{c}}+\pmb{d}\leq \frac{1}{\alpha}\tilde{\pmb{c}}$. Thus by fixing $t=\frac{1}{\alpha}\geq 1$ in 
 Lemma~\ref{applemma} the result follows.
\end{proof}

 The next result characterizes the approximability of the problem under  $\cUhp_1$.
\begin{thm}
\label{thmfptasgen}
	Assume that the number of budget constraints in $\cUhp_1$ is constant and the following problem is polynomially solvable:
	\begin{equation}
\label{mip2tstp}
	\begin{array}{lllll}
		\min & \pmb{C}^T\pmb{x}+\underline{\pmb{c}}^T\pmb{y}\\
			\text{\rm s.t.} & \pmb{H}(\pmb{y}+\pmb{x})\geq \pmb{g} \\
			& \pmb{x}+\pmb{y}\leq \pmb{1} \\
			& \pmb{0}\leq \pmb{y} \leq \pmb{d}\\
			& \pmb{x}\in \{0,1\}^n \\
	\end{array}
\end{equation}
where $d_i\in \mathcal{E}=\{0,\epsilon, 2\epsilon, \dots, 1\}$, $i\in [n]$. Then~(\ref{mip2}) under $\cUhp_1$ admits an FPTAS.
\end{thm}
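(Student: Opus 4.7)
The strategy is to LP-dualize the inner adversarial maximization, discretize the $K$ resulting dual multipliers on a polynomial-size grid, and for every grid point invoke the polynomially-solvable subproblem~\eqref{mip2tstp}; constancy of $K$ is what makes the enumeration polynomial. For fixed $\pmb{y}\in[0,1]^n$, LP duality of the adversary gives
\[
\max_{\pmb{c}\in\cUhp_1}\pmb{c}^T\pmb{y}=\underline{\pmb{c}}^T\pmb{y}+\min_{\pmb{u}\ge\pmb{0}}\Bigl\{\sum_{j\in[K]}\Gamma_j u_j\,:\,\sum_{j:\,i\in U_j}u_j\ge y_i,\ i\in[n]\Bigr\}.
\]
Substituting this into~\eqref{mip2} (valid because the integrality hypothesis on $\mathcal{N}$ lets us treat $\pmb{y}$ as continuous) collapses the three-level problem to a single minimization over $\pmb{x}\in\mathcal{X}'$, continuous $\pmb{y}\in[\pmb{0},\pmb{1}-\pmb{x}]$ with $\pmb{H}(\pmb{y}+\pmb{x})\ge\pmb{g}$, and $\pmb{u}\ge\pmb{0}$ coupled by $\sum_{j:i\in U_j}u_j\ge y_i$, with objective $\pmb{C}^T\pmb{x}+\underline{\pmb{c}}^T\pmb{y}+\sum_j\Gamma_j u_j$. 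Assuming WLOG that every index $i$ lies in some $U_j$, the point $\pmb{u}=\pmb{1}$ is dual-feasible, and a short exchange argument (using $y_i\le 1$) shows that some optimum has $u_j^*\in[0,1]$ for every $j$.

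Now fix a discretization step $\tau$ and enumerate every $\tilde{\pmb{u}}\in\{0,\tau,\dots,1\}^K$---only $O(\tau^{-K})$ tuples. For each such guess, set $d_i:=\min\{1,\sum_{j:i\in U_j}\tilde u_j\}$ so that $\pmb{d}\in\mathcal{E}^n$; fixing $\pmb{u}=\tilde{\pmb{u}}$ makes the residual problem exactly the constant $\sum_j\Gamma_j\tilde u_j$ plus~\eqref{mip2tstp}, polynomially solvable by hypothesis. We return the best value found. For the approximation guarantee, let $(\pmb{x}^*,\pmb{y}^*,\pmb{u}^*)$ be a joint optimum and round each $u_j^*$ up to its nearest grid point $\hat u_j\le u_j^*+\tau$: feasibility is preserved (the coupling constraint only relaxes), and the objective inflates by at most $\tau K\Gamma_{\max}$.

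The main obstacle I anticipate is turning this additive slack into the multiplicative $(1+\epsilon)$-factor required by an FPTAS while keeping $\tau^{-1}$ polynomial in the input and in $1/\epsilon$. The natural route is to tie $\tau$ to a polynomially computable lower bound $LB\le OPT$, for example $LB=\textsc{TSt}(\underline{\pmb{c}})$, available by Observation~\ref{obsones}; setting $\tau=\epsilon LB/(K\Gamma_{\max})$ then yields an $(1+\epsilon)$-approximation with a grid of size $O((K\Gamma_{\max}/(\epsilon LB))^K)$, polynomial in the input and $1/\epsilon$. The degenerate case $LB=0$ can be handled either directly (then $OPT=0$ is attained by a trivial solution) or by replacing the uniform grid with a geometric one $\{0\}\cup\{(1+\epsilon)^{-k}\}$ on $(0,1]$, which forces $\hat u_j\le(1+\epsilon)u_j^*$ entrywise and still keeps the enumeration polynomially sized for constant~$K$.
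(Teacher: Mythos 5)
Your proposal follows the paper's proof essentially step for step: dualize the adversarial problem to obtain the compact formulation~(\ref{mipH1}), observe that one may assume $u_j\in[0,1]$, enumerate $\pmb{u}$ over the uniform grid $\mathcal{E}^K$ (only $(1/\epsilon)^K$ tuples, polynomial for constant $K$), reduce each fixed $\tilde{\pmb{u}}$ to problem~(\ref{mip2tstp}) with $d_i=\min\{1,\sum_{j:i\in U_j}\tilde{u}_j\}$, and compare the best enumerated solution against the one obtained by rounding the optimal $\pmb{u}^*$ up to the grid. The only place you diverge is the final accuracy accounting. The paper stops exactly where you start to worry: it rounds each $u_j^*$ up to the nearest point of $\mathcal{E}$ and asserts a $(1+\epsilon)$ factor, so the additive-versus-multiplicative issue you call ``the main obstacle'' is not resolved there either (the rounding a priori only bounds the extra cost by $\epsilon\sum_{j\in[K]}\Gamma_j$); your concern is thus a fair observation about the argument rather than a defect of your reconstruction. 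However, neither of your proposed remedies works as stated within the theorem's hypotheses. Taking $\tau=\epsilon LB/(K\Gamma_{\max})$ makes the number of grid points, and the assumed running time of the subproblem, depend polynomially on $\Gamma_{\max}/LB$, which is a ratio of input \emph{values} and hence can be exponential in the encoding length, so the scheme is only pseudo-polynomial. The geometric grid does give $\hat{u}_j\le(1+\epsilon)u_j^*$ coordinatewise, but the induced capacities $d_i=\min\{1,\sum_{j:i\in U_j}\hat{u}_j\}$ are then no longer multiples of a common step, so the hypothesis that~(\ref{mip2tstp}) is polynomially solvable for $d_i\in\mathcal{E}$ --- and the concrete dynamic programs in Sections~\ref{secsel} and~\ref{secrs}, which exploit the uniform grid --- can no longer be invoked. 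To stay within the theorem as stated you must keep the uniform grid $\mathcal{E}$ and the paper's rounding argument.
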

\begin{proof}
	The compact MIP formulation~(\ref{mip2st})  for~(\ref{mip2}) takes the following form:
	\begin{equation}
	\label{mipH1}
		 \begin{array}{lllll}
		\min & \pmb{C}^T\pmb{x}+\underline{\pmb{c}}^T\pmb{y}+ \displaystyle\sum_{j\in [K]} u_j\Gamma_j\\
			\text{s.t.} & \pmb{H}(\pmb{y}+\pmb{x})\geq \pmb{g} \\
			& \pmb{x}+\pmb{y}\leq \pmb{1}\\
			& \displaystyle \sum_{\{j\in [K]: i\in U_j\}} u_j\geq y_i & i\in [n] \\
			& \pmb{x}\in \{0,1\}^n \\
			& \pmb{y}, \pmb{u}\geq \pmb{0}
	\end{array}
	\end{equation}
	Since $y_i\in [0,1]$ for each $i\in [n]$, we get $u_j\in [0,1]$ for each $j\in [K]$. Let us fix $\epsilon=\frac{1}{t}$ for some integer $t\geq 0$, and consider the numbers $\mathcal{E}=\{0,\epsilon, 2\epsilon, 3\epsilon, \dots,1\}$.  Fix vector $(u_1,\dots,u_K)$, where $u_j\in \mathcal{E}$. The problem~(\ref{mipH1}) reduces then to~(\ref{mip2tstp}),
	where $d_i=\min\{\sum_{\{j\in [K]: i\in U_j\}} u_j,1\}$, $i\in [n]$. Let us enumerate all $(\frac{1}{\epsilon})^K$ vectors $\pmb{u}$, with components $u_j\in \mathcal{E}$, $j\in [K]$, and let us solve~(\ref{mip2tstp}) for each such a vector. Assume that $(\hat{\pmb{x}},\hat{\pmb{y}}, \hat{\pmb{u}})$ is the enumerated solution having the minimum objective value in~(\ref{mipH1}) (notice that $(\hat{\pmb{x}},\hat{\pmb{y}}, \hat{\pmb{u}})$ is feasible to~(\ref{mipH1})). Let $(\pmb{x}^*,\pmb{y}^*, \pmb{u}^*)$ be an optimal solution to~(\ref{mipH1}). Let us round up the components of $\pmb{u}^*$ to the nearest values in $\mathcal{E}$. As the result we get a feasible solution with the cost at most $(1+\epsilon)$ greater than the optimum.  Furthermore the cost of this solution is not greater than the cost of $(\hat{\pmb{x}},\hat{\pmb{y}}, \hat{\pmb{u}})$, because the rounded vector $\pmb{u}^*$ has been enumerated. By the assumption that $K$ is constant and~(\ref{mipH1}) can be solved in polynomial time, we get an FPTAS for~(\ref{mip2}) under $\cUhp_1$.
	\end{proof}

 We will show how to solve~(\ref{mip2tstp}) for particular problems in Sections~\ref{secsel},~\ref{secrs} and~\ref{secsp}.

\subsubsection{Ellipsoidal uncertainty}

In this section we will focus on constructing approximate solutions to~(\ref{modelip}), which is a compact formulation of~(\ref{mip2}) under ellipsoidal uncertainty $\cUe$. As~(\ref{modelip}) is a 0-1 quadratic problem, it can be hard to solve.
Consider the following linearization of~(\ref{modelip}):
\begin{equation}
\label{modelip1}
	\begin{array}{lllll}
		\min & \pmb{C}^T\pmb{x}+\underline{\pmb{c}}^T\pmb{y}+||\pmb{A}^T\pmb{y}||_{1}\\
			\text{s.t.}  & \pmb{H}(\pmb{y}+\pmb{x})\geq \pmb{g} \\
			& \pmb{x}+\pmb{y}\leq \pmb{1} \\
			& \pmb{x}\in \{0,1\}^n \\
			& \pmb{y}\geq \pmb{0}
	\end{array}
\end{equation}
which can be represented as the following linear MIP problem:
\begin{equation}
\label{modelip2}
	\begin{array}{lllll}
		\min & \pmb{C}^T\pmb{x}+\underline{\pmb{c}}^T\pmb{y}+ \displaystyle\sum_{i\in [n]} z_i\\
			\text{s.t.}  & \pmb{H}(\pmb{y}+\pmb{x})\geq \pmb{g} \\
			& \pmb{x}+\pmb{y}\leq \pmb{1} \\
			& z_i\geq \pmb{A}_i^T\pmb{y} & i\in [n] \\
			& z_i\geq -\pmb{A}_i^T\pmb{y} & i\in [n]\\
			& \pmb{x}\in \{0,1\}^n \\
			& \pmb{y}\geq \pmb{0}
	\end{array}
\end{equation}
where $\pmb{A}_i$ is the $i$th column of $\pmb{A}$. 
\begin{thm}
\label{thmappreps}
	Let $(\hat{\pmb{x}},\hat{\pmb{y}})$ be an optimal solution to~(\ref{modelip2}) and $(\pmb{x}^*,\pmb{y}^*)$ be an optimal solution to~(\ref{modelip}). Then $\textsc{Eval}(\hat{\pmb{x}})\leq \sqrt{n}\cdot \textsc{Eval}(\pmb{x}^*)$
\end{thm}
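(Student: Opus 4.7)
The plan is to set the two linearizations side by side and chain together one line of feasibility, one line of optimality, and two applications of the classical inequalities $\|\pmb{v}\|_2\le \|\pmb{v}\|_1 \le \sqrt{n}\|\pmb{v}\|_2$ for $\pmb{v}\in\mathbb{R}^n$ (the second being Cauchy--Schwarz). Let me abbreviate
$$f_2(\pmb{x},\pmb{y})=\pmb{C}^T\pmb{x}+\underline{\pmb{c}}^T\pmb{y}+\|\pmb{A}^T\pmb{y}\|_2,\qquad f_1(\pmb{x},\pmb{y})=\pmb{C}^T\pmb{x}+\underline{\pmb{c}}^T\pmb{y}+\|\pmb{A}^T\pmb{y}\|_1,$$
so that (\ref{modelip}) is $\min f_2$ and (\ref{modelip1}) (equivalently (\ref{modelip2})) is $\min f_1$ over the identical feasible set $F=\{(\pmb{x},\pmb{y}):\pmb{H}(\pmb{x}+\pmb{y})\ge \pmb{g},\ \pmb{x}+\pmb{y}\le \pmb{1},\ \pmb{x}\in\{0,1\}^n,\ \pmb{y}\ge \pmb{0}\}$.

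The first step is to recall from the derivation that produced (\ref{modelip}) in Section~\ref{secmip} (integrality of $\mathcal{N}$ together with the minimax theorem) that for every binary $\pmb{x}$ one has
$\textsc{Eval}(\pmb{x})=\min_{\pmb{y}:(\pmb{x},\pmb{y})\in F} f_2(\pmb{x},\pmb{y})$. This immediately gives $\textsc{Eval}(\hat{\pmb{x}})\le f_2(\hat{\pmb{x}},\hat{\pmb{y}})$ (since $(\hat{\pmb{x}},\hat{\pmb{y}})\in F$) and $\textsc{Eval}(\pmb{x}^*)=f_2(\pmb{x}^*,\pmb{y}^*)$.

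Then I would assemble the chain:
\begin{align*}
\textsc{Eval}(\hat{\pmb{x}})\ &\le\ f_2(\hat{\pmb{x}},\hat{\pmb{y}})\ \le\ f_1(\hat{\pmb{x}},\hat{\pmb{y}})
&&\text{(since }\|\pmb{A}^T\hat{\pmb{y}}\|_2\le \|\pmb{A}^T\hat{\pmb{y}}\|_1\text{)}\\
&\le\ f_1(\pmb{x}^*,\pmb{y}^*)
&&\text{(optimality of }(\hat{\pmb{x}},\hat{\pmb{y}})\text{ in (\ref{modelip2}); same feasible set)}\\
&\le\ \pmb{C}^T\pmb{x}^*+\underline{\pmb{c}}^T\pmb{y}^*+\sqrt{n}\,\|\pmb{A}^T\pmb{y}^*\|_2
&&\text{(Cauchy--Schwarz on }\pmb{A}^T\pmb{y}^*\in\mathbb{R}^n\text{)}\\
&\le\ \sqrt{n}\bigl(\pmb{C}^T\pmb{x}^*+\underline{\pmb{c}}^T\pmb{y}^*+\|\pmb{A}^T\pmb{y}^*\|_2\bigr)
&&\text{(since }\sqrt{n}\ge 1\text{ and }\pmb{C}^T\pmb{x}^*,\underline{\pmb{c}}^T\pmb{y}^*\ge 0\text{)}\\
&=\ \sqrt{n}\,f_2(\pmb{x}^*,\pmb{y}^*)\ =\ \sqrt{n}\,\textsc{Eval}(\pmb{x}^*).
\end{align*}

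There is no real obstacle here beyond keeping the inequality directions straight: the nontrivial direction $\|\cdot\|_1\le\sqrt n\|\cdot\|_2$ is applied to $\pmb{y}^*$ (so as to go \emph{upward} from $f_1$ to a $\sqrt n$-multiple of $f_2$), whereas the direction $\|\cdot\|_2\le\|\cdot\|_1$ is applied to $\hat{\pmb{y}}$ (to pass from $f_2(\hat{\pmb{x}},\hat{\pmb{y}})$ to $f_1(\hat{\pmb{x}},\hat{\pmb{y}})$ where optimality of $(\hat{\pmb{x}},\hat{\pmb{y}})$ can be used). The last step needs $\pmb{C}^T\pmb{x}^*+\underline{\pmb{c}}^T\pmb{y}^*\ge 0$, which follows from the assumed nonnegativity of $\pmb{C}$ and of $\underline{\pmb{c}}$ (the latter since $\cUe\subset \Rset_+^n$).
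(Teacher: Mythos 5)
Your proof is correct and follows essentially the same route as the paper's: the identical chain $\|\pmb{A}^T\hat{\pmb{y}}\|_2\le\|\pmb{A}^T\hat{\pmb{y}}\|_1$, optimality of $(\hat{\pmb{x}},\hat{\pmb{y}})$ for the $\ell_1$-linearization, $\|\pmb{A}^T\pmb{y}^*\|_1\le\sqrt{n}\|\pmb{A}^T\pmb{y}^*\|_2$, and nonnegativity of the remaining terms. You merely make explicit the step the paper leaves implicit, namely that $\textsc{Eval}(\hat{\pmb{x}})\le f_2(\hat{\pmb{x}},\hat{\pmb{y}})$ and $\textsc{Eval}(\pmb{x}^*)=f_2(\pmb{x}^*,\pmb{y}^*)$ via the derivation of~(\ref{modelip}).
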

\begin{proof}
	We use the following well known inequalities:
	$$\frac{1}{\sqrt{n}}\cdot ||\pmb{A}^T\pmb{y}||_{1}\leq ||\pmb{A}^T\pmb{y}||_{2} \leq ||\pmb{A}^T\pmb{y}||_{1}$$
	Using them, we get
	$$\pmb{C}^T\hat{\pmb{x}}+\underline{\pmb{c}}^T\hat{\pmb{y}}+||\pmb{A}^T\hat{\pmb{y}}||_{2}\leq \pmb{C}^T\hat{\pmb{x}}+\underline{\pmb{c}}^T\hat{\pmb{y}}+||\pmb{A}^T\hat{\pmb{y}}||_{1} \leq \pmb{C}^T\pmb{x}^*+\underline{\pmb{c}}^T\pmb{y}^*+||\pmb{A}^T\pmb{y}^*||_{1}$$
$$
\leq \pmb{C}^T\pmb{x}^*+\underline{\pmb{c}}^T\pmb{y}^*+\sqrt{n}\cdot ||\pmb{A}^T\pmb{y}^*||_{2}
\leq\sqrt{n}\cdot (\pmb{C}^T\pmb{x}^*+\underline{\pmb{c}}^T\pmb{y}^*+||\pmb{A}^T\pmb{y}^*||_{2}),
$$
and the theorem follows.
\end{proof}
Problem~(\ref{modelip2}) is a linear MIP, so it can be easier to solve than~(\ref{modelip}). Unfortunately, it is still NP-hard even if the underlying deterministic problem is polynomially solvable.
\begin{obs}
	Problem~(\ref{modelip2}) is NP-hard when $\mathcal{X}_{\pmb{1}}=\{\pmb{x}\in\{0,1\}^n: x_1+\dots+x_n=n\}$.
\end{obs}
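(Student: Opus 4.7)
The plan is to reuse, almost verbatim, the reduction from the proof of Theorem~\ref{thmcomplE}, exploiting the observation that when the matrix $\pmb{A}$ has at most one nonzero column the norms $\|\cdot\|_1$ and $\|\cdot\|_2$ applied to $\pmb{A}^T\pmb{y}$ coincide. Consequently, the objective of~(\ref{modelip2}) in such an instance is identical to the objective of~(\ref{modelip}) in the instance constructed in Theorem~\ref{thmcomplE}, and the same NP-hardness transfers.

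In detail, given an instance $\mathcal{I}=(n,\pmb{C},\mathcal{U}=\{\pmb{c}_1,\pmb{c}_2\})$ of the $\textsc{RTSt}_{\pmb{1}}$ problem (NP-hard by Theorem~\ref{thmcopl1}), I would set $\underline{\pmb{c}}=\pmb{c}_1+\pmb{c}_2$ and take as $\pmb{A}$ the square $n\times n$ matrix whose first column is $\pmb{c}_1-\pmb{c}_2$ and whose remaining $n-1$ columns are zero, exactly as in the proof of Theorem~\ref{thmcomplE}. Then $\pmb{A}^T\pmb{y}=((\pmb{c}_1-\pmb{c}_2)^T\pmb{y},0,\dots,0)^T$, so
\[
\|\pmb{A}^T\pmb{y}\|_{1}=|(\pmb{c}_1-\pmb{c}_2)^T\pmb{y}|=\|\pmb{A}^T\pmb{y}\|_{2}.
\]
Using the feasible set $\mathcal{X}_{\pmb{1}}$, the constraints of~(\ref{modelip2}) force $\pmb{y}=\pmb{1}-\pmb{x}$ with $\pmb{x}\in\{0,1\}^n$, and with first stage costs $2\pmb{C}$ the formulation~(\ref{modelip2}) evaluates to
\[
\min_{\pmb{x}\in\mathcal{X}_{\pmb{1}}'}\bigl(2\pmb{C}^T\pmb{x}+\underline{\pmb{c}}^T\pmb{y}+\|\pmb{A}^T\pmb{y}\|_{1}\bigr)
=\min_{\pmb{x}\in\mathcal{X}_{\pmb{1}}'}\bigl(2\pmb{C}^T\pmb{x}+\underline{\pmb{c}}^T\pmb{y}+\|\pmb{A}^T\pmb{y}\|_{2}\bigr).
\]
By the chain of equalities in the proof of Theorem~\ref{thmcomplE}, the right-hand side equals $2\min_{\pmb{x}\in\mathcal{X}_{\pmb{1}}'}(\pmb{C}^T\pmb{x}+\max\{\pmb{c}_1^T\pmb{y},\pmb{c}_2^T\pmb{y}\})$, i.e.\ twice the optimal value of the NP-hard $\textsc{RTSt}_{\pmb{1}}$ problem on $\mathcal{I}$. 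Hence solving~(\ref{modelip2}) on the constructed instance solves~$\mathcal{I}$, establishing NP-hardness.

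There is essentially no obstacle: the only substantive step is the norm-coincidence observation, which holds by construction because $\pmb{A}$ has a single nonzero column. The only thing to double-check is that the implicit linearization in~(\ref{modelip2}) via the auxiliary variables $z_i\ge \pm\pmb{A}_i^T\pmb{y}$ indeed produces $\sum_i z_i=\|\pmb{A}^T\pmb{y}\|_{1}$ at optimum, which is immediate since $z_i=|\pmb{A}_i^T\pmb{y}|$ at any optimal solution.
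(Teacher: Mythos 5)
Your proposal is correct and is essentially the paper's own argument: the paper also reduces from the instance built in the proof of Theorem~\ref{thmcomplE} and notes that for the single-nonzero-column matrix $\pmb{A}$ constructed there, $\|\pmb{A}^T\pmb{y}\|_1=\|\pmb{A}^T\pmb{y}\|_2$, so~(\ref{modelip2}) coincides with the NP-hard problem~(\ref{modelip}) on that instance. Your write-up just spells out the details the paper leaves implicit.
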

\begin{proof}
	It follows directly from the proof of Theorem~\ref{thmcomplE}. It is easy to see that $||\pmb{A}^T\pmb{y}||_{2}=||\pmb{A}^T\pmb{y}||_{1}$ for the matrix $\pmb{A}$ constructed in the proof.
\end{proof}

\begin{thm}
	If all the entries of $\pmb{A}$ are nonnegative and $\mathcal{P}$ is polynomially solvable, then~(\ref{modelip}) is approximable within $\sqrt{n}$.
\end{thm}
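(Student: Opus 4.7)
The plan is to show that when $\pmb{A}\geq\pmb{0}$, the linearized problem~(\ref{modelip1}) collapses to an ordinary (single-scenario) two-stage problem, which by Observation~\ref{obsones} can be solved in polynomial time whenever $\mathcal{P}$ is. Combining this solvability with the $\sqrt{n}$ bound already established in Theorem~\ref{thmappreps} then yields the claim.

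Concretely, I would first observe that since $\pmb{A}\geq\pmb{0}$ and every feasible $\pmb{y}$ in~(\ref{modelip1}) satisfies $\pmb{y}\geq\pmb{0}$, each component of $\pmb{A}^T\pmb{y}$ is nonnegative. Hence the $\ell_1$-norm simplifies to a linear expression:
\[
\lVert \pmb{A}^T\pmb{y}\rVert_{1} \;=\; \pmb{1}^T \pmb{A}^T\pmb{y} \;=\; (\pmb{A}\pmb{1})^T \pmb{y}.
\]
Substituting this into~(\ref{modelip1}) and setting $\tilde{\pmb{c}} = \underline{\pmb{c}} + \pmb{A}\pmb{1}$ rewrites the problem as
\[
\min\bigl\{\pmb{C}^T\pmb{x} + \tilde{\pmb{c}}^T\pmb{y} \,:\, \pmb{H}(\pmb{x}+\pmb{y})\geq \pmb{g},\ \pmb{x}+\pmb{y}\leq \pmb{1},\ \pmb{x}\in\{0,1\}^n,\ \pmb{y}\geq\pmb{0}\bigr\},
\]
which, under the integrality property of $\mathcal{N}$, is exactly the two-stage problem $\textsc{TSt}(\tilde{\pmb{c}})$ of Section~\ref{secform}. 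By Observation~\ref{obsones}, computing $\textsc{TSt}(\tilde{\pmb{c}})$ reduces to a single call of the deterministic problem $\mathcal{P}$ with costs $\hat{c}_i=\min\{C_i,\tilde{c}_i\}$, so an optimal solution $(\hat{\pmb{x}},\hat{\pmb{y}})$ to~(\ref{modelip1}) (equivalently~(\ref{modelip2})) can be obtained in polynomial time.

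Finally, Theorem~\ref{thmappreps} directly gives $\textsc{Eval}(\hat{\pmb{x}})\leq \sqrt{n}\cdot \textsc{Eval}(\pmb{x}^*)$, where $\pmb{x}^*$ is optimal for~(\ref{modelip}), completing the proof. The only nontrivial step is the norm simplification; once that is in place, everything else is just invoking Observation~\ref{obsones} and Theorem~\ref{thmappreps}. The main (very mild) obstacle is making sure that the sign assumption on $\pmb{A}$ really does force $\pmb{A}^T\pmb{y}\geq\pmb{0}$ for every feasible $\pmb{y}$, which is immediate here since the feasible region of~(\ref{modelip1}) explicitly contains the constraint $\pmb{y}\geq\pmb{0}$.
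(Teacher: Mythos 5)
Your proposal is correct and follows essentially the same route as the paper: use $\pmb{A}\geq\pmb{0}$ and $\pmb{y}\geq\pmb{0}$ to replace $\lVert\pmb{A}^T\pmb{y}\rVert_1$ by the linear term $(\pmb{A}\pmb{1})^T\pmb{y}$, recognize the result as a single-scenario two-stage problem solvable via Observation~\ref{obsones} (with the integrality of $\mathcal{N}$ justifying the relaxation of $\pmb{y}$), and then invoke Theorem~\ref{thmappreps} for the $\sqrt{n}$ guarantee. The only difference is cosmetic: the paper writes the modified costs componentwise as $\hat{c}_i=\underline{c}_i+\sum_j a_{ji}$, whereas you write $\tilde{\pmb{c}}=\underline{\pmb{c}}+\pmb{A}\pmb{1}$; both encode the same reduction.
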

\begin{proof}
	If all the entries of $\pmb{A}=[a_{ij}]$ are nonnegative, then ~(\ref{modelip1}) can be rewritten as follows:
	\begin{equation}
\label{modelip3}
	\begin{array}{lllll}
		\min & \pmb{C}^T\pmb{x}+\underline{\pmb{c}}^T\pmb{y}+\displaystyle\sum_{i\in [n]} \pmb{A}^T_i\pmb{y}\\
			\text{s.t.}  & \pmb{H}(\pmb{y}+\pmb{x})\geq \pmb{g} \\
			& \pmb{x}+\pmb{y}\leq \pmb{1} \\
			& \pmb{x}\in \{0,1\}^n \\
			& \pmb{y}\geq \pmb{0}
	\end{array}
\end{equation}
which is equivalent to
\begin{equation}
\label{modelip4}
	\begin{array}{lllll}
		\min & \displaystyle\sum_{i\in [n]} (C_i x_i + \hat{c}_i y_i)\\
			\text{s.t.}  & \pmb{H}(\pmb{y}+\pmb{x})\geq \pmb{g} \\
			& \pmb{x}+\pmb{y}\leq \pmb{1} \\
			& \pmb{x}\in \{0,1\}^n \\
			& \pmb{y}\geq \pmb{0}
	\end{array}
\end{equation}
where $\hat{c}_i=\underline{c}_i+\sum_{j\in [n]} a_{ji}$, $i\in [n]$. Problem~(\ref{modelip4}) is a two-stage problem with one second stage scenario~$\hat{\pmb{c}}$ and it is polynomially solvable according to Observation~\ref{obsones}, if problem~$\mathcal{P}$ is solvable in polynomial time. Notice that relaxing $\pmb{y}\in \{0,1\}^n$ with $\pmb{y}\geq \pmb{0}$ does not change an optimal solution to~(\ref{modelip4}), due to the integrality assumption.
Now Theorem~\ref{thmappreps} implies the result.
\end{proof}

\section{Robust two-stage selection problem}
\label{secsel}

In this section we investigate in more detail the robust two-stage version of the \textsc{Selection} problem under $\cUhp$ and $\cUe$. In Section~\ref{seccomplex} we have proved that this problem is NP-hard. Let us also recall that \textsc{RTSt Selection} is polynomially solvable under $\cUhp_0$~\cite{CGKZ18}. The MIP formulations~(\ref{mip2st}) and~(\ref{modelip}) for the problem under $\cUhp$ and $\cUe$, respectively,   take the following form
\begin{equation}
\label{mip2stsel0}
\begin{array}{lll}
	\begin{array}{llllll}
		(a) &\min & \pmb{C}^T\pmb{x}+\underline{\pmb{c}}^T\pmb{y}+\pmb{u}^T\pmb{b}\\
		&	\text{s.t.}  & \displaystyle \sum_{i\in [n]} (x_i+y_i) = p \\
		&	& \pmb{x}+\pmb{y}\leq \pmb{1} \\
		&	& \pmb{u}^T\pmb{A}\geq \pmb{y}^T \\
		&	& \pmb{x}\in \{0,1\}^n \\
		&	& \pmb{y}, \pmb{u}\geq \pmb{0}
	\end{array}
	&
	 \begin{array}{llllllll}
	(b)&\min\ &\pmb{C}^T\pmb{x} + \underline{\pmb{c}}^T\pmb{y} + \Vert \pmb{A}^T\pmb{y}\Vert_2 \\
 &\text{s.t.}  & \displaystyle \sum_{i\in[n]} (x_i + y_i) = p \\
&& \pmb{x} + \pmb{y} \le  \pmb{1} \\
&& \pmb{x}\in\{0,1\}^n \\
&& \pmb{y} \ge 0\\
&&
\end{array}
\end{array}
\end{equation}
We first show the following approximation result:
\begin{thm}\label{select-2-appr}
	The \textsc{RTSt Selection} problem with uncertainty $\cUhp$ is approximable within~2.
\end{thm}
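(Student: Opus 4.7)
The plan is to prove this by rounding the LP relaxation of the compact MIP formulation~(\ref{mip2stsel0})(a). First, I would solve its LP relaxation (replacing $\pmb{x}\in\{0,1\}^n$ by $\pmb{0}\le\pmb{x}\le\pmb{1}$) in polynomial time, obtaining a fractional optimum $(\pmb{x}^*,\pmb{y}^*,\pmb{u}^*)$ of value $V^*$. Since~(\ref{mip2stsel0})(a) exactly models $\textsc{RTSt Selection}$---its inner LP-dualization is tight because the selection polytope is integral---we have $V^*\le \textsc{OPT}$. Moreover, for any feasible integer triple $(\hat{\pmb{x}},\hat{\pmb{y}},\hat{\pmb{u}})$ of~(\ref{mip2stsel0})(a) the inequality $\textsc{Eval}(\hat{\pmb{x}})\le\pmb{C}^T\hat{\pmb{x}}+\underline{\pmb{c}}^T\hat{\pmb{y}}+\pmb{b}^T\hat{\pmb{u}}$ holds, because $\hat{\pmb{y}}$ is a valid recourse and, by LP duality on~$\cUhp$, the term $\pmb{b}^T\hat{\pmb{u}}$ upper bounds $\max_{\pmb{c}\in\cUhp}(\pmb{c}-\underline{\pmb{c}})^T\hat{\pmb{y}}$. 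It therefore suffices to exhibit, in polynomial time, a feasible integer triple of~(\ref{mip2stsel0})(a) whose objective value is at most $2V^*$.

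For the rounding, let $\pmb{z}^*=\pmb{x}^*+\pmb{y}^*$, which is a fractional $p$-selection. Partition the ground set into $X=\{i:x^*_i\ge y^*_i\}$ and $Y=\{i:x^*_i<y^*_i\}$, so that $x^*_i\ge z^*_i/2$ for $i\in X$ and $y^*_i> z^*_i/2$ for $i\in Y$. I would then pick an integer $p$-selection $\hat{\pmb{z}}$ by sorting on $z^*_i$ and taking the top~$p$ indices as $S$, setting $\hat{x}_i=\mathbf{1}[i\in S\cap X]$ and $\hat{y}_i=\mathbf{1}[i\in S\cap Y]$. The dual multipliers are lifted to $\hat{\pmb{u}}=2\pmb{u}^*$, which is feasible because $\hat{\pmb{u}}^T\pmb{A}=2\pmb{u}^{*T}\pmb{A}\ge 2\pmb{y}^{*T}$ and (under the rounding bounds below) $2\pmb{y}^{*T}\ge\hat{\pmb{y}}^T$. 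Provided $\hat{x}_i\le 2x^*_i$ and $\hat{y}_i\le 2y^*_i$ coordinatewise, summation gives $\pmb{C}^T\hat{\pmb{x}}\le 2\pmb{C}^T\pmb{x}^*$, $\underline{\pmb{c}}^T\hat{\pmb{y}}\le 2\underline{\pmb{c}}^T\pmb{y}^*$, and $\pmb{b}^T\hat{\pmb{u}}=2\pmb{b}^T\pmb{u}^*$, so the rounded triple attains objective at most $2V^*$, hence $\textsc{Eval}(\hat{\pmb{x}})\le 2V^*\le 2\,\textsc{OPT}$.

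The main obstacle is that the coordinatewise doubling inequalities $\hat{\pmb{x}}\le 2\pmb{x}^*$ and $\hat{\pmb{y}}\le 2\pmb{y}^*$ follow from the partitioning only when the $z^*_i$ of the items in $S$ are not too small: items with $z^*_i<1/2$ in $S$ can have $x^*_i<1/2$ (or $y^*_i<1/2$), so $\hat{x}_i=1$ is not dominated by $2x^*_i$. Conversely, restricting $S$ to items with $z^*_i\ge 1/2$ may yield fewer than $p$ picks (e.g.\ if $z^*_i\equiv p/n$ with $p\ll n/2$). A refined scheme is therefore needed, for instance combining threshold rounding on items with $z^*_i\ge 1/2$ with a matroid-exchange argument that pairs remaining LP mass into additional integer picks while controlling both the first-stage cost $\pmb{C}^T\hat{\pmb{x}}$ and the adversarial cost $\underline{\pmb{c}}^T\hat{\pmb{y}}+\pmb{b}^T\hat{\pmb{u}}$ within a factor of~$2$ of the LP value. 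This simultaneous control of the cardinality constraint and the two-sided cost blow-up is the crux of the argument.
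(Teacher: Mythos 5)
Your framework is the right one and matches the paper's: solve the LP relaxation of~(\ref{mip2stsel0})(a), produce a rounded solution with $\hat{\pmb{y}}\le 2\pmb{y}^*$ and $\hat{\pmb{u}}=2\pmb{u}^*$, and conclude a factor of~$2$ via the doubling of each objective term. But you stop exactly at the crux: you correctly observe that naive threshold rounding fails for items with $x^*_i+y^*_i<1/2$, and then only gesture at ``a refined scheme'' without supplying it. As written, the proof is incomplete --- the combinatorial step that simultaneously preserves the cardinality constraint $\sum_i(\hat{x}_i+\hat{y}_i)=p$ and the coordinatewise bounds is precisely what needs to be proved.

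The paper closes this gap with two observations you are missing. First, given $\pmb{y}^*$, the optimal $\pmb{x}^*$ in the LP relaxation can be taken greedy with respect to sorted first-stage costs $C_1\le\dots\le C_n$; this forces $x^*_i+y^*_i=1$ for every index $i$ up to some threshold $\ell-1$, $x^*_i=0$ for $i>\ell$, and at most the single index $\ell$ is ``partially filled.'' So for $i\le\ell-1$ one always has $\max\{x^*_i,y^*_i\}\ge 1/2$ and rounding to the larger coordinate is safe; the problematic small-$z^*_i$ items you worry about simply do not occur among the $x$-supported indices. Second --- and this is where your insistence on a ``feasible integer triple'' leads you astray --- formulation~(\ref{mip2stsel0})(a) only requires $\pmb{x}\in\{0,1\}^n$, while $\pmb{y}$ may stay fractional. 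The residual cardinality after rounding the first $\ell-1$ items is therefore absorbed by setting $\hat{y}_i=\min\{1,2y^*_i,\hat{p}\}$ for $i\ge\ell$ and packing until the budget $\hat{p}$ is exhausted; a short case analysis (on whether $x^*_\ell\ge 1/2$) shows the available mass $\sum_{i\ge\ell}\min\{1,2y^*_i\}$ always suffices. No matroid-exchange argument is needed. You should either incorporate these two structural facts or find a substitute for them; without one, the claimed $2$-approximation does not follow from what you have written.
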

\begin{proof}
	Assume w.l.o.g that $C_1\leq C_2\leq \dots \leq C_n$.
	Consider the following LP relaxation of~(\ref{mip2stsel0})a:
	\begin{equation}
\label{mip2stselrel}
	\begin{array}{rlllll}
		LB=&\min & \pmb{C}^T\pmb{x}+\underline{\pmb{c}}^T\pmb{y}+\pmb{u}^T\pmb{b}\\
			&\text{s.t.}  & \displaystyle\sum_{i\in [n]} (x_i+y_i) = p \\
			&& \pmb{x}+\pmb{y}\leq \pmb{1} \\
			&& \pmb{u}^T\pmb{A}\geq \pmb{y}^T \\
			&& \pmb{0}\leq \pmb{x} \leq \pmb{1} \\
			&& \pmb{y}, \pmb{u}\geq \pmb{0}
	\end{array}
\end{equation}
Let $(\pmb{x}^*, \pmb{y}^*, \pmb{u}^*)$ be an optimal solution to~(\ref{mip2stselrel}). We first note that given $\pmb{y}^*$, the optimal values of $\pmb{x}^*$ can be obtained in the following greedy way. Set $p^*:=p-\sum_{i\in [n]} y^*_i$. For $i:=1,\dots,n$, assign $x^*_i:=\min\{p^*, 1-y^*_i\}$ and update $p^*:=p^*-x^*_i$. Let $\ell\in [n]$ be such that $x^*_i>0$ for every $i\leq \ell$ and $x^*_i=0$ for every $i>\ell$. It is easily seen that $x^*_i+y^*_i=1$ for all $i\in [\ell-1]$. Therefore the quantity $p-\sum_{i\in [\ell-1]} (x^*_i+y^*_i)$ must be integral. By the construction, we get
$$\sum_{i\in [n]} (x^*_i+y^*_i)= p.$$
Notice also that $0<x^*_{\ell}+y^*_{\ell}<1$ may happen.

We now construct a feasible solution $(\hat{\pmb{x}}, \hat{\pmb{y}}, \hat{\pmb{u}})$ to~(\ref{mip2stsel0}) in the following way.
Set $\hat{p}:=p$.
 For $i:=1,\dots,\ell-1$,  if $x_i^*\geq \frac{1}{2}$, then assign $\hat{x}_i:=1$ and $\hat{y}_i:=0$; otherwise ($y^*_i\geq \frac{1}{2}$)
 assign  $\hat{x}_i:=0$ and $\hat{y}_i:=1$; and update  $\hat{p}:=\hat{p}-\ell+1$.
If $x^*_\ell\geq \frac{1}{2}$, then assign $\hat{x}_\ell:=1$ and $\hat{y}_\ell:=0$ and update  $\hat{p}:=\hat{p}-\hat{x}_\ell$;
otherwise ($x^*_\ell< \frac{1}{2}$)
assign $\hat{x}_\ell:=0$ and $\hat{y}_\ell:=\min\{1,2 y^*_\ell\}$  and update  $\hat{p}:=\hat{p}-\hat{y}_\ell$.
For $i:=\ell+1,\dots,n$, 
assign $\hat{x}_i:=0$ and $\hat{y}_i:=\min\{1,2 y^*_i,\hat{p}\}$  and update  $\hat{p}:=\hat{p}-\hat{y}_i$.
Finally assign $\hat{\pmb{u}}:=2\pmb{u}^*$.

We now need to show that  $(\hat{\pmb{x}}, \hat{\pmb{y}}, \hat{\pmb{u}})$ is a feasible solution to~(\ref{mip2stsel0})a.
It is clear that $\hat{\pmb{x}}\in\{0,1\}^n$ and 
$\hat{\pmb{x}}+\hat{\pmb{y}} \leq \pmb{1}$.
The constraints  $\hat{\pmb{u}}^T\pmb{A}\geq \hat{\pmb{y}}^T$ are satisfied, because
$\pmb{u}^{*T}\pmb{A}_i\geq y^*_i$, which yields $2\pmb{u}^{*T}\pmb{A}_i\geq 2y^*_i\geq \hat{y}_i$ for each $i\in [n]$,
where $\pmb{A}_i$ is the $i$th column of $\pmb{A}$. 
It remains to prove that $\sum_{i\in [n]} (\hat{x}_i+\hat{y}_i)= p$, i.e. $\hat{p}=0$ after 
the termination of the above algorithm.
We see  at once that $\sum_{i\in [\ell-1]} (\hat{x}_i+\hat{y}_i)=\ell-1$, since $\sum_{i\in [\ell-1]} (x^*_i+y^*_i)=\ell-1$.

We now show that $\hat{x}_{\ell}+\hat{y}_{\ell}+\sum_{i> \ell} \hat{y}_i=p-(\ell-1)$.
 After assigning  the first  $\ell-1$variables, $\hat{p}$ satisfies 
\begin{equation}
\hat{p}=x^*_{\ell}+y^*_{\ell}+\sum_{i> \ell}y^*_i=p-(\ell-1).
\label{ephat}
\end{equation}
We need to consider only two cases.
The first one:  $x^*_{\ell}\leq y^*_{\ell}$ or $\frac{1}{2}\leq x^*_{\ell}$.
For  $x^*_{\ell}\leq y^*_{\ell}$, $\hat{x}_{\ell}=0$ and
$\hat{y}_{i}\leq \min\{1,2y^*_i\}$ for each $i\geq \ell$.
According to~(\ref{ephat}), we have $\hat{p}\leq \sum_{i\geq \ell}\min\{1,2y^*_i\}$.
Hence one can allocate feasible values to $\hat{y}_{i}$, $i\geq \ell$,
until $\sum_{i\geq \ell}\hat{y}_i$ reaches~$\hat{p}$.
In the case: $\frac{1}{2}\leq x^*_{\ell}$, $\hat{y}_{\ell}=0$,
$\hat{x}_{\ell}=1$ and 
$\hat{y}_{i}\leq \min\{1,2y^*_i\}$ for each $i> \ell$.
By~(\ref{ephat}), we get $\hat{p}\leq 1+\sum_{i> \ell}\min\{1,2y^*_i\}$.
Again one can pack $\hat{x}_{\ell}=1$ and $\hat{y}_{i}$, $i> \ell$, until
$\hat{x}_{\ell}+\sum_{i> \ell}\hat{y}_i$  reaches~$\hat{p}$.

The second case: $y^*_{\ell}<x^*_{\ell}<\frac{1}{2}$. 
 We show that $\hat{y}_{\ell}+\sum_{i> \ell} \hat{y}_i=\hat{p}$ only for worst case value distributions of variables
 $x^*_{\ell}$ and  $y^*_{i}$, $i\geq \ell$, i.e.
for  distributions, where the values are as follows:
 $y^*_i=1$ for every $i=\ell+1,\ldots,\ell+\hat{p}-1$,  and $y^*_{\ell+\hat{p}}=1-(x^*_{\ell}+y^*_{\ell})$
 (a similar  reasoning applies to other  distributions).
 Thus $\hat{x}_{\ell}=0$, $\hat{y}_{\ell}\leq \min\{1,2y^*_\ell\}=2y^*_\ell$,
$\hat{y}_{i}\leq \min\{1,2y^*_i\}=y^*_i$ for each  $i=\ell+1,\ldots,\ell+\hat{p}-1$, and
 $\hat{y}_{\ell+\hat{p}}\leq \min\{1,2(1-(x^*_{\ell}+y^*_{\ell}))\}$.
 From~(\ref{ephat}) and the assumption $x^*_{\ell}<\frac{1}{2}$, we obtain
 $\hat{p}\leq 2y^*_\ell+\sum_{i=\ell+1}^{\ell+\hat{p}-1}y^*_i+ \min\{1,2(1-(x^*_{\ell}+y^*_{\ell}))\}$.
 In consequence  one can allocate values to $\hat{y}_i$, $i\geq\ell$, to satisfy 
 $\hat{y}_{\ell}+\sum_{i> \ell} \hat{y}_i=\hat{p}$.

 The total cost of the feasible solution $(\hat{\pmb{x}}, \hat{\pmb{y}}, \hat{\pmb{u}})$ is at most 
twice the optimal value. Indeed,
$$\pmb{C}^T\hat{\pmb{x}}+\underline{\pmb{c}}^T\hat{\pmb{y}}+\hat{\pmb{u}}^T\pmb{b}\leq 2\cdot (\pmb{C}^T\pmb{x}^*+\underline{\pmb{c}}^T\pmb{y}^*+\pmb{u}^{*T}\pmb{b})=2\cdot LB,$$
and the proof is complete.
\end{proof}

\begin{thm}
\label{apprguar}
The approximation guarantee of the rounding algorithm presented in the proof of Theorem~\ref{select-2-appr} is tight, even if $p=n$ and $\cUhp$ has a single constraint.
\end{thm}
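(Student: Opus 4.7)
My plan is to exhibit, for every $n \ge 1$, an instance of \textsc{RTSt Selection} with $p = n$ and a single-constraint $\cUhp$ on which the rounding algorithm from the proof of Theorem~\ref{select-2-appr} produces a triple $(\hat{\pmb x}, \hat{\pmb y}, \hat{\pmb u})$ whose total cost $\pmb C^T \hat{\pmb x} + \underline{\pmb c}^T \hat{\pmb y} + \hat{\pmb u}^T \pmb b$ equals $2 \cdot LB$, and where $LB$ already coincides with the optimal value of \textsc{RTSt Selection}. This will show that the factor $2$ in Theorem~\ref{select-2-appr} cannot be improved. Concretely, I would take $n$ items with $C_i = 1$ and $\underline c_i = 0$ for all $i \in [n]$, together with
\[
\cUhp \ = \ \Bigl\{\underline{\pmb c} + \pmb \delta \ :\ \sum_{i \in [n]} \delta_i \le n,\ \pmb \delta \ge \pmb 0\Bigr\},
\]
so that $\pmb A = (1, \ldots, 1)$ and $b = n$, i.e., a single deviation constraint beyond nonnegativity.

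First I would compute $LB$ and the optimal value. Since $p = n$ together with $x_i + y_i \le 1$ forces $x_i + y_i = 1$ componentwise in any feasible point of~(\ref{mip2stselrel}), the LP reduces to $\min \sum_i x_i + n u$ subject to $u \ge 1 - x_i$ for each $i$, $\pmb x \in [\pmb 0, \pmb 1]$ and $u \ge 0$. Summing the constraints $u \ge 1 - x_i$ over $i$ yields $\sum_i x_i + n u \ge n$, while the feasible point $(\pmb x^*, \pmb y^*, u^*) = (\tfrac 1 2 \pmb 1, \tfrac 1 2 \pmb 1, \tfrac 1 2)$ attains this value; hence $LB = n$. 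For the robust optimum, $\textsc{Eval}(\pmb 1) = n$ and $\textsc{Eval}(\pmb 0) = \max_{\pmb c \in \cUhp} \pmb c^T \pmb 1 = b \cdot \max_i (1/a_i) = n$, so the RTSt optimum equals $n$ as well, coinciding with $LB$.

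Next I would trace the rounding algorithm on this LP optimum. The greedy reconstruction of $\pmb x^*$ from $\pmb y^* = \tfrac 1 2 \pmb 1$ initializes $p^* = n - n/2 = n/2$ and assigns $x^*_i := \min\{p^*, 1/2\} = 1/2$ at each step, decrementing $p^*$ by $1/2$; consequently $\pmb x^* = \tfrac 1 2 \pmb 1$ and $\ell = n$. Since every $x^*_i = 1/2 \ge 1/2$, the algorithm sets $\hat x_i := 1$ and $\hat y_i := 0$ for all $i$, and finally $\hat{\pmb u} := 2 u^* = 1$. Therefore
\[
\pmb C^T \hat{\pmb x} + \underline{\pmb c}^T \hat{\pmb y} + \hat{\pmb u}^T \pmb b \ = \ n + 0 + 1 \cdot n \ = \ 2n \ = \ 2 \cdot LB,
\]
matching the analytical bound of Theorem~\ref{select-2-appr} with equality. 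Because $n$ is arbitrary and the optimum coincides with $LB$, the approximation guarantee of $2$ cannot be improved even in the restricted regime where $p = n$ and $\cUhp$ is described by a single constraint. The only genuine technical point is to check that the chosen fractional LP optimum is compatible with the algorithm's greedy $\pmb x^*$-reconstruction, and this has already been verified above, so no substantial obstacle remains.
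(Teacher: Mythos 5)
Your instance does make the rounded triple $(\hat{\pmb x},\hat{\pmb y},\hat{\pmb u})$ cost exactly $2\cdot LB$, but it does not witness tightness of the approximation guarantee in the sense the paper uses, because the first-stage solution the algorithm actually returns is \emph{optimal}. On your instance the rounding yields $\hat{\pmb x}=\pmb 1$ and $\hat{\pmb y}=\pmb 0$, and the entire factor of $2$ is carried by the blindly doubled certificate $\hat u=2u^*=1$: since $\hat{\pmb y}=\pmb 0$, the constraint $u\ge y_i$ is already satisfied by $u=0$, so $\textsc{Eval}(\hat{\pmb x})=\pmb C^T\pmb 1=n=\mathrm{OPT}$. (In fact every optimal point of your LP lies on the segment between $(\pmb 1,\pmb 0,0)$ and $(\pmb 0,\pmb 1,1)$, and the rounding then returns either $\hat{\pmb x}=\pmb 1$ or $\hat{\pmb x}=\pmb 0$, both of which have $\textsc{Eval}=n$.) The paper measures the quality of a first-stage solution by $\textsc{Eval}(\hat{\pmb x})$ (see the definition of a $\rho$-approximate first-stage solution in Section~\ref{apprgen}), and a trivial post-processing step --- recomputing the best $(\pmb y,\pmb u)$ for the returned $\hat{\pmb x}$ by solving one LP --- would make the algorithm exact on your instance. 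So what you have shown is only that the bookkeeping via the inflated certificate is tight against $LB$, not that the algorithm can return a solution (nearly) twice as expensive as the optimum.

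The paper's example closes exactly this gap: it is asymmetric, with one item that must be deferred to the second stage ($C_1=10$) and a second item whose LP value $x_2^*=\tfrac12-\mu$ falls just below the rounding threshold, so the algorithm makes the wrong \emph{discrete} decision and returns $\hat{\pmb x}=\pmb 0$ with $\textsc{Eval}(\hat{\pmb x})=\epsilon+2/(1+2\mu)\rightarrow 2$, while $\mathrm{OPT}=1+\gamma\rightarrow 1$; this ratio survives any recomputation of $(\pmb y,\pmb u)$. To repair your argument you would need an instance in which the rounding of the $\pmb x$- and $\pmb y$-variables themselves, rather than the doubling of $\pmb u^*$, is responsible for the loss; a fully symmetric instance such as yours cannot achieve this, because there both $\pmb 0$ and $\pmb 1$ are optimal first-stage solutions and the rounding always lands on one of them.
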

\begin{proof}
We consider the following problem instance: $p=n=2$, 
$\pmb{C}=\begin{bmatrix}10\\ \gamma \end{bmatrix}$, 
\[\cU=\left\{ \begin{bmatrix}0\\ \epsilon\end{bmatrix} + \begin{bmatrix}\delta_1\\ \delta_2\end{bmatrix} : \delta_1 +
\left (\frac{1}{2}+\mu \right)\delta_2 \le 1 \right\}\]
with $\mu>0$ and $\gamma > \epsilon > 0$ being small values. Then the compact formulation is
\begin{align*}
\min\ &10x_1 + \gamma x_2 + \epsilon y_2 + u \\
\text{s.t. }  & x_1 + y_1 + x_2 + y_2 =2 \\
& x_1 + y_1 \le 1 \\
& x_2 + y_2 \le 1 \\
&u \ge y_1 \\
&\left(\frac{1}{2}+\mu \right)u \ge y_2 \\
& x_1,x_2 \in\{0,1\} \\
& y_1,y_2 \in [0,1] \\
& u \ge 0
\end{align*}
An optimal solution to this problem is to set $y_1=x_2=u=1$, with objective function $1+\gamma$. An optimal solution for the LP relaxation of this problem is $y_1 = u = 1$, $x_2=\frac{1}{2}-\mu$ and $y_2=\frac{1}{2}+\mu$.  Applying our algorithm, we round $y_2$ to 1, which means that $u$ has to be increased to $2(1+\mu)$. The objective value of this solution is $2(1+\mu)+\epsilon$. As $\mu,\gamma,\epsilon$ approach 0, the ratio of optimal objective value and objective value of the approximate solution approaches 2.
\end{proof}

\begin{thm}
\label{thmgapsel}
The integrality gap of problem~(\ref{mip2stsel0})a is at least 4/3.
\end{thm}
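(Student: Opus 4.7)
The plan is to prove the lower bound on the integrality gap by exhibiting a small explicit instance of \textsc{RTSt Selection} under $\cUhp$ for which the optimum of~(\ref{mip2stsel0})a is at least $4/3$ times the optimum of its LP relaxation. A two-item, single-element selection instance turns out to suffice.

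Concretely, I would take $n=2$, $p=1$, first-stage costs $C_1=C_2=\tfrac{2}{3}$, nominal second-stage costs $\underline{\pmb{c}}=\pmb{0}$, and $\cUhp=\{\underline{\pmb{c}}+\pmb{\delta}\colon \delta_1+\delta_2\le 1,\ \pmb{\delta}\ge \pmb{0}\}$, so that $\pmb{A}$ is the row vector $(1,1)$ and $\pmb{b}=1$. In the dual of the inner adversarial problem this produces a single nonnegative variable $u$ with $u\ge y_1$ and $u\ge y_2$, and~(\ref{mip2stsel0})a specializes to minimizing $\tfrac{2}{3}(x_1+x_2)+u$ subject to $x_1+x_2+y_1+y_2=1$ and $x_i+y_i\le 1$.

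The first step is to compute the integer optimum: any integer-feasible solution fixes exactly one coordinate of $\pmb{x}+\pmb{y}$ to $1$, costing either $2/3$ (when the mass is placed in~$\pmb{x}$) or $1$ (when the mass is placed in~$\pmb{y}$, since $u$ is then forced up to~$1$). Hence the integer optimum equals~$2/3$.

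The second step is to evaluate the LP relaxation. Parameterizing by $a=x_1+x_2\in[0,1]$ gives $y_1+y_2=1-a$ and $u\ge \max(y_1,y_2)\ge (1-a)/2$, so the LP objective is bounded below by $\tfrac{2}{3}a+\tfrac{1-a}{2}=\tfrac{1}{2}+\tfrac{a}{6}\ge \tfrac{1}{2}$, attained with equality by $x_1=x_2=0$, $y_1=y_2=u=1/2$. The integer-to-LP ratio is therefore exactly $(2/3)/(1/2)=4/3$, yielding the claimed lower bound. I do not anticipate any real obstacle; the only step requiring argument beyond exhibiting a single feasible point is the LP lower bound, and that follows from the elementary inequality $\max(y_1,y_2)\ge (y_1+y_2)/2$.
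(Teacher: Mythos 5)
Your instance does not actually exhibit a gap, because you have miscomputed the integer optimum. In formulation~(\ref{mip2stsel0})a only $\pmb{x}$ is required to be binary; the variables $\pmb{y}$ and $\pmb{u}$ are continuous (this is precisely the point of the dualized compact formulation, which is valid thanks to the integrality of the inner polyhedron). Consequently, in your instance the point $x_1=x_2=0$, $y_1=y_2=1/2$, $u=1/2$ is feasible \emph{for the MIP itself}, with objective value $1/2$ — it is not merely an LP-relaxation point. This is consistent with the underlying problem: for $\pmb{x}=\pmb{0}$ one has $\textsc{Eval}(\pmb{0})=\max_{\delta_1+\delta_2\le 1,\,\pmb{\delta}\ge\pmb{0}}\min\{\delta_1,\delta_2\}=1/2$, which is smaller than the first-stage cost $2/3$. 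Hence the integer optimum of your instance is $1/2$, equal to the LP optimum, and the integrality gap it certifies is $1$, not $4/3$. The step ``any integer-feasible solution fixes exactly one coordinate of $\pmb{x}+\pmb{y}$ to $1$'' is where the argument breaks: that would be true only if $\pmb{y}$ were also constrained to be binary.

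To repair the argument you need an instance in which the fractional savings come from relaxing $\pmb{x}$ itself. The paper does this with $n=p=2$, asymmetric first-stage costs $\pmb{C}=[10,\,1]^T$, and an asymmetric budget constraint $\delta_1+\tfrac{1}{2}\delta_2\le 1$. There the constraint $x_1+y_1+x_2+y_2=2$ forces $y_1=1$ (taking $x_1=1$ is too expensive), hence $u\ge 1$; an integral $x_2\in\{0,1\}$ then costs $1$ either way (paying $C_2=1$ or pushing $u$ up to $2$), giving an integer optimum of $2$, while the LP can split $x_2=y_2=1/2$ and reuse the already-paid $u=1$ for free, giving $3/2$. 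The asymmetry in the row of $\pmb{A}$ is what makes the fractional $x_2$ profitable only in the relaxation; with your symmetric constraint $\delta_1+\delta_2\le 1$ the same trick is available to the MIP through the continuous $\pmb{y}$, and no gap appears.
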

\begin{proof}
Consider the problem with $n=p=2$, 
$\pmb{C}=\begin{bmatrix}10\\ 1 \end{bmatrix}$, 
\[\cU=\left\{ \pmb{0} + \pmb{\delta} : \delta_1 + \frac{1}{2}\delta_2 \le 1 \right\}\]
The corresponding problem formulation is
\begin{align*}
\min\ & 10x_1 + x_2 + u \\
\text{s.t. }& x_1 + y_1 + x_2 + y_2 = 2 \\
& x_1 + y_1 \le 1 \\
& x_2 + y_2 \le 1 \\
&u \ge y_1 \\
& \frac{1}{2}u \ge y_2 \\
& x_1,x_2 \in\{0,1\} \\
& y_1,y_2 \in [0,1] \\
& u \ge 0
\end{align*}
An optimal solution to this problem is $y_1=x_2=1$ with objective value 2, while an optimal solution to the LP relaxation is $y_1=u=1$ and $x_2=y_2=1/2$ with costs $3/2$.
\end{proof}
Notice that there is still a gap between the 2-approximation algorithm and the integrality gap $4/3$ of the LP relaxation. Closing this gap is an interesting open problem. 

\begin{thm}
The \textsc{RTSt Selection} problem with uncertainty $\cUe$ is approximable within~2.
\end{thm}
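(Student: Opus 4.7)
My plan is to bypass LP/SOCP rounding altogether and reduce the \textsc{RTSt Selection} problem under $\cUe$ to a single deterministic two-stage problem, exploiting the nonnegativity condition imposed on the uncertainty set. The crucial observation is that the requirement $\cUe\subset\Rset^n_{+}$ forces $\underline{c}_i\geq\|\pmb{a}_i\|_2$ for every $i\in[n]$, where $\pmb{a}_i$ denotes the $i$-th row of $\pmb{A}$ (viewed as a column vector): otherwise $\pmb{\delta}=-\pmb{a}_i/\|\pmb{a}_i\|_2$ would push the $i$-th coordinate of $\underline{\pmb{c}}+\pmb{A}\pmb{\delta}$ strictly below zero. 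Setting $d_i:=\|\pmb{a}_i\|_2$ and $\bar{\pmb{c}}:=\underline{\pmb{c}}+\pmb{d}$, the triangle inequality gives, for every $\pmb{y}\geq\pmb{0}$,
\[
\|\pmb{A}^T\pmb{y}\|_2 \;=\; \Bigl\|\sum_{i\in[n]} y_i\,\pmb{a}_i\Bigr\|_2 \;\leq\; \sum_{i\in[n]} y_i\,\|\pmb{a}_i\|_2 \;=\; \pmb{d}^T\pmb{y},
\]
hence $\underline{\pmb{c}}^T\pmb{y}+\|\pmb{A}^T\pmb{y}\|_2\leq\bar{\pmb{c}}^T\pmb{y}$.

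The algorithm is then to compute an optimal solution $(\hat{\pmb{x}},\hat{\pmb{y}})$ of $\textsc{TSt}(\bar{\pmb{c}})$, the deterministic two-stage \textsc{Selection} problem with $\bar{\pmb{c}}$ in the second stage; by Observation~\ref{obsones} this reduces to a single deterministic \textsc{Selection} under costs $\min(C_i,\bar{c}_i)$ and is therefore polynomially solvable. We output $\hat{\pmb{x}}$ and denote by $L'':=\pmb{C}^T\hat{\pmb{x}}+\bar{\pmb{c}}^T\hat{\pmb{y}}$ the value of that surrogate problem.

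For the analysis, since $\hat{\pmb{y}}\in\mathcal{R}(\hat{\pmb{x}})$ the evaluation of $\hat{\pmb{x}}$ satisfies
\[
\textsc{Eval}(\hat{\pmb{x}}) \;\leq\; \pmb{C}^T\hat{\pmb{x}}+\underline{\pmb{c}}^T\hat{\pmb{y}}+\|\pmb{A}^T\hat{\pmb{y}}\|_2 \;\leq\; \pmb{C}^T\hat{\pmb{x}}+\bar{\pmb{c}}^T\hat{\pmb{y}} \;=\; L''.
\]
On the other hand, letting $(\pmb{x}^*,\pmb{y}^*)$ be an optimal solution of \textsc{RTSt Selection} under $\cUe$ with value $Z^*$,
\[
L'' \;\leq\; \pmb{C}^T\pmb{x}^*+\bar{\pmb{c}}^T\pmb{y}^* \;=\; \pmb{C}^T\pmb{x}^*+\underline{\pmb{c}}^T\pmb{y}^*+\pmb{d}^T\pmb{y}^* \;\leq\; \pmb{C}^T\pmb{x}^*+2\underline{\pmb{c}}^T\pmb{y}^* \;\leq\; 2 Z^*,
\]
where the second inequality uses $d_i\leq\underline{c}_i$ pointwise and the last one uses $Z^*\geq\pmb{C}^T\pmb{x}^*+\underline{\pmb{c}}^T\pmb{y}^*$. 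Combining the two chains yields $\textsc{Eval}(\hat{\pmb{x}})\leq 2 Z^*$, as claimed.

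The only real obstacle I expect is identifying the correct surrogate scenario $\bar{\pmb{c}}$; once the bound $\underline{c}_i\geq\|\pmb{a}_i\|_2$ is made explicit, the factor $2$ appears from the simple observation that the triangle-inequality slack $\pmb{d}^T\pmb{y}^*-\|\pmb{A}^T\pmb{y}^*\|_2$ is absorbed by a second copy of $\underline{\pmb{c}}^T\pmb{y}^*\leq Z^*$, so no SOCP rounding scheme is needed at all.
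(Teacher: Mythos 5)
Your proof is correct, but it takes a genuinely different route from the paper. The paper solves the continuous (SOCP) relaxation of the compact formulation~(\ref{mip2stsel0})b and then reuses the greedy rounding of Theorem~\ref{select-2-appr} to produce $\hat{\pmb{y}}\le 2\pmb{y}^*$, concluding via $\Vert\pmb{A}^T(2\pmb{y}^*)\Vert_2=2\Vert\pmb{A}^T\pmb{y}^*\Vert_2$. You instead exploit the standing assumption $\cUe\subset\Rset^n_+$ to get $\Vert\pmb{a}_i\Vert_2\le\underline{c}_i$, so that every scenario in $\cUe$ is coordinatewise dominated by $2\underline{\pmb{c}}$, and you reduce everything to a single deterministic two-stage problem with the surrogate costs $\bar{\pmb{c}}=\underline{\pmb{c}}+\pmb{d}$. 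This is really an instance of the paper's own scenario-domination machinery (Lemma~\ref{applemma} with $\tilde{\pmb{c}}=\underline{\pmb{c}}$ and $t=2$ would give the same bound), which the paper applies to polyhedral sets but, curiously, not to the ellipsoid here. What your route buys: it needs only one call to the deterministic \textsc{Selection} oracle (no convex programming, no rounding), and it extends verbatim to any problem with the integrality property, not just \textsc{Selection}. What the paper's route buys: the relaxation value is an instance-specific lower bound, and the argument does not lean on the containment $\cUe\subset\Rset^n_+$ at all. One step you should make explicit: in $L''\le\pmb{C}^T\pmb{x}^*+\bar{\pmb{c}}^T\pmb{y}^*$ the optimal recourse $\pmb{y}^*$ of the min-max reformulation~(\ref{modelip}) may be fractional, so you need the integrality of the recourse polytope $\{\pmb{y}:\sum_i y_i=p-\sum_i x^*_i,\ \pmb{0}\le\pmb{y}\le\pmb{1}-\pmb{x}^*\}$ to conclude that the binary optimum of $\textsc{TSt}(\bar{\pmb{c}})$ is no worse; this holds for \textsc{Selection} and is exactly the point the paper flags in inequality~(2) of Lemma~\ref{applemma}, so it is a presentational omission rather than a genuine gap.
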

\begin{proof}
By solving the relaxation of~(\ref{mip2stsel0})b (which is a continuous, convex optimization problem), we find a solution $(\pmb{x}^*,\pmb{y}^*)$. Using a similar rounding procedure as in the proof of Theorem~\ref{select-2-appr}, we compute a solution $(\hat{\pmb{x}},\hat{\pmb{y}})$ with $2\pmb{y}^*\ge\hat{\pmb{y}}$. As $2\Vert\pmb{A}^T\pmb{y}^*\Vert_2 = \Vert \pmb{A}^T(2\pmb{y}^*)\Vert_2 \ge \Vert \pmb{A}^T\hat{\pmb{y}}\Vert_2$, the approximation guarantee thus follows.
\end{proof}

\begin{thm}
	If the number of budget constraints in $\cUhp_1$ is constant, then \textsc{RTSt Selection} with $\cUhp_1$ admits an FPTAS.
\end{thm}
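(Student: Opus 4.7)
The plan is to invoke Theorem~\ref{thmfptasgen}, so it suffices to exhibit a polynomial-time (in $n$ and $1/\epsilon$) algorithm for problem~(\ref{mip2tstp}) specialized to the Selection constraint $\sum_{i\in[n]} (x_i + y_i) = p$. Recall from the proof of Theorem~\ref{thmfptasgen} that each bound $d_i$ lies in $\mathcal{E}=\{0,\epsilon,2\epsilon,\dots,1\}$, while $p$ is a positive integer, hence itself a multiple of $\epsilon$.

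The first step I would carry out is a structural observation: there is an optimal solution of~(\ref{mip2tstp}) in which every $y_i$ is an integer multiple of $\epsilon$. Once the binary vector $\pmb{x}$ is fixed, say with $k=\sum_{i\in[n]} x_i$ items selected in the first stage, the residual stage-two subproblem reduces to the continuous linear program $\min \sum_{i: x_i=0} \underline{c}_i y_i$ subject to $\sum_{i: x_i=0} y_i = p-k$ and $0 \le y_i \le d_i$. This LP is solved by the standard greedy rule: sort the eligible items by $\underline{c}_i$ and saturate $y_i=d_i$ along the prefix until the required mass $p-k$ is reached; at most one ``boundary'' item ends up with a fractional share. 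Because $p-k$ and every $d_i$ are multiples of $\epsilon$, so is the boundary share.

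Armed with this discretization, I would design a dynamic program. Let $f(i,t)$ denote the minimum cost using items $1,\dots,i$ subject to $\sum_{j\le i}(x_j+y_j)=t$, where $t$ ranges over the multiples of $\epsilon$ in $[0,p]$. When extending to item $i+1$, the transitions are: (a) skip the item; (b) take it in stage one at cost $C_{i+1}$, incrementing $t$ by $1$; or (c) take it in stage two at $y_{i+1}=k\epsilon$ for some $k\in\{0,1,\dots,d_{i+1}/\epsilon\}$, incurring cost $\underline{c}_{i+1}k\epsilon$ and incrementing $t$ by $k\epsilon$. The optimum of~(\ref{mip2tstp}) is then $f(n,p)$. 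The table has $O(n\cdot p/\epsilon)=O(n^2/\epsilon)$ entries and each is updated in $O(1/\epsilon)$ time, so the total running time is $O(n^2/\epsilon^2)$, which is polynomial in the input size and $1/\epsilon$. Combining this with Theorem~\ref{thmfptasgen} yields the FPTAS.

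The main obstacle is the first step, because the LP relaxation of~(\ref{mip2tstp}) need not be integral in $\pmb{x}$ when $d_i<1$ and $C_i>\underline{c}_i$ (a one-item instance with $p=1$, $d_1=\tfrac{1}{2}$, $C_1\gg\underline{c}_1$ already exhibits a fractional LP optimum), so one cannot simply relax and round. The argument side-steps this by reasoning only about the $y$-variables \emph{for a fixed integer $\pmb{x}$}, where the greedy completion already produces a ``multiple of $\epsilon$'' optimum and thereby legitimizes the discrete state space used by the DP.
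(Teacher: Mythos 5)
Your proposal is correct and follows essentially the same route as the paper: both reduce via Theorem~\ref{thmfptasgen} to showing that~(\ref{mip2tstp}) with the selection constraint is polynomially solvable, both first establish that some optimal solution has every $y_i$ an integer multiple of $\epsilon$, and both then run a dynamic program over the item index and the accumulated mass in $\epsilon$-steps. The only (immaterial) difference is how the discretization property is justified --- the paper uses an exchange argument on an arbitrary optimal solution (its Property~\ref{pxy}), whereas you derive it from the greedy structure of the stage-two LP for a fixed integral $\pmb{x}$; both arguments are valid.
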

\begin{proof}
	Using Theorem~\ref{thmfptasgen} it is enough to show that the following problem is polynomially solvable:
	\begin{equation}
	\label{eq0002}
	\begin{array}{lllll}
		\min & \pmb{C}^T\pmb{x}+\underline{\pmb{c}}^T\pmb{y}\\
			\text{s.t.}  &\displaystyle  \sum_{i\in [n]} (x_i+y_i) = p \\
			& \pmb{x}+\pmb{y}\leq \pmb{1} \\
			&  0\leq y_i\leq d_i & i\in [n] \\
			& \pmb{x}\in \{0,1\}^n 
	\end{array}
	\end{equation}
	where $d_i\in \mathcal{E}=\{0,\epsilon, 2\epsilon,\dots,1\}$, $i\in [n]$. We will show first the following property of~(\ref{eq0002}):
	\begin{pro}
		There is an optimal solution to~(\ref{eq0002}) in which $y_i\in \mathcal{E}$ for each $i\in [n]$.
		\label{pxy}
	\end{pro}
	\begin{proof}	
	         Let  $(\pmb{x},\pmb{y})$ be an optimal solution to~(\ref{eq0002}).
		  Since  $\sum_{i\in [n]} (x_i+y_i) = p$,
		   the quantity $\sum_{i\in [n]} y_i=p-\sum_{i\in [n]} x_i$ must be integral. Let us sort the variables so that $\underline{c}_1\leq \underline{c}_2\leq \dots \leq \underline{c}_n$. Let $\ell$ be the first index such that $y_\ell\notin \mathcal{E}$. Notice that $0<y_\ell<d_\ell$. We get $\sum_{i\in [\ell-1]} y_i= k\epsilon$ for some integer $k\geq 0$. Hence $k\epsilon+y_\ell$ cannot be integral and $\sum_{j>\ell} y_j>0$. Set $y_\ell=\min\{\sum_{j>\ell} y_j,d_\ell\}$ and decrease the values of appropriate number of $y_j$, $j>\ell$, so that still $\sum_{i\in [n]} (x_i+y_i)=p$ holds. If $y_\ell=d_\ell$, then we are done as $d_\ell\in \mathcal{E}$. If $y_\ell=\sum_{j>\ell} y_j\leq 1$, then $k\epsilon +y_\ell=p$ and thus $y_\ell\in \mathcal{E}$. Observe that this transformation does not destroy the feasibility of the solution. Furthermore, it also does not increase the solution cost. After applying it a finite number of times we get an optimal solution satisfying the property.
	\end{proof}	
Property~\ref{pxy} allows us to solve~(\ref{eq0002}) by applying a dynamic programming approach. Indeed, 
using the fact that $x_i\in \{0,1\}$ and  $y_i\in \mathcal{E}$ for every $i\in [n]$,
in each stage $i\in [n]$, we have to fix the pair $(x_i, y_i)$, where the feasible assignments are $(0,\epsilon), (0,2\epsilon),\dots, (0,d_i),(1,0)$. A fragment of the computations is shown in Figure~\ref{figdyn}. For each arc we can compute a cost $C_ix_i+\underline{c}_iy_i$. Notice that sometimes there may exist two feasible pairs between two states (see the transition $(s,1)$ in Figure~\ref{figdyn}). In this case, we choose the assignment with smaller cost. 
\begin{figure}[ht]
		\centering
		\includegraphics[height=6cm]{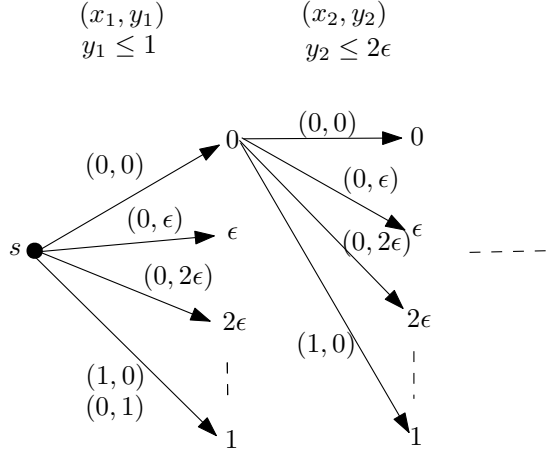}
		\caption{Illustration of the dynamic algorithm.}\label{figdyn}
	\end{figure}

The running time of the dynamic algorithm is $O(np^2\frac{1}{\epsilon^2})$, so it is polynomial when $\epsilon>0$ is fixed.
By Theorem~\ref{thmfptasgen}, the overall running time of the FPTAS is $O(np^2(1/\epsilon)^{K+2})$. 
\end{proof}

\section{Robust two-stage RS problem}
\label{secrs}

In this section we investigate in more detail the robust two-stage version of the \textsc{RS} problem under $\cUhp$ and $\cUe$. In Section~\ref{seccomplex} we proved that this problem is NP-hard.  First observe that for each set $T_l$, $l\in [\ell]$, we have to decide whether to choose a tool in the first or in the second stage. In the former case we always choose the cheapest tool. Hence the problem can be simplified and the MIP formulations~(\ref{mip2st}) and~(\ref{modelip}) for the problem under $\cUhp$ and $\cUe$, respectively,   take the following form
\begin{align}
	 \begin{array}{llllll}
		(a)&\min &\widehat{\pmb{C}}^T\pmb{x}+\underline{\pmb{c}}^T\pmb{y}+\pmb{u}^T\pmb{b}\\
		&	\text{s.t.} & \displaystyle x_l+\sum_{j\in T_l} y_j = 1 & l\in [\ell] \\
		&	& \pmb{u}^T\pmb{A}\geq \pmb{y}^T \\
		&	& \pmb{x}\in \{0,1\}^{\ell} \\
		&	& \pmb{y}, \pmb{u}\geq \pmb{0}
	\end{array}
	&
	\begin{array}{llllllll}
	(b)&\min\ &\widehat{\pmb{C}}^T\pmb{x} + \underline{\pmb{c}}^T\pmb{y} + \Vert \pmb{A}^T\pmb{y}\Vert_2 \\
 &\text{s.t.} & \displaystyle x_l+\sum_{j\in T_l} y_j = 1 & l\in [\ell] \\
&& \pmb{x}\in\{0,1\}^{\ell} \\
&& \pmb{y} \ge 0\\
&&
\end{array}
\label{mip2stsel1}
\end{align}
In the above formulations $\pmb{x}$ is a vector of $\ell$ binary variables corresponding to the tool sets $T_1,\dots, T_{\ell}$, and 
$\widehat{\pmb{C}}=[\widehat{C}_1,\ldots, \widehat{C}_{\ell}]^T$, where
$\widehat{C}_l$,  $l\in [\ell]$, is the smallest first stage cost of the tools in $T_l$, i.e.
$\widehat{C}_l=\min_{j\in T_l}\{C_j\}$.  Note also that there are no 
constraints: $x_l+y_j\leq 1$, $l\in [\ell]$, $j\in T_l$, in (\ref{mip2stsel1}). Now they are  redundant and can be removed. 

\begin{thm}
	The \textsc{RTSt RS} problem under $\cUhp$ and $\cUe$ is approximable within~2.
\end{thm}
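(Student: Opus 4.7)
The plan is to adapt the LP rounding argument of Theorem~\ref{select-2-appr} to the per-group structure of \textsc{RS}. First, I would solve the continuous relaxation of~(\ref{mip2stsel1})a (a linear program in the polyhedral case) or of~(\ref{mip2stsel1})b (a second-order cone program in the ellipsoidal case); both are polynomially solvable. Let $(\pmb{x}^*,\pmb{y}^*)$, together with $\pmb{u}^*$ in the polyhedral case, denote the optimal fractional solution and $LB$ its value. Since~(\ref{mip2stsel1}) is a compact formulation of \textsc{RTSt RS}, we have $LB\le OPT$, where $OPT$ denotes the optimum of \textsc{RTSt RS}.

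Next, I would round the groups independently. For each $l\in[\ell]$ the relaxed equation $x_l^*+\sum_{j\in T_l}y_j^*=1$ forces either $x_l^*\ge\tfrac{1}{2}$ or $\sum_{j\in T_l}y_j^*>\tfrac{1}{2}$. In the first case, set $\hat{x}_l:=1$ and $\hat{y}_j:=0$ for $j\in T_l$; in the second case, set $\hat{x}_l:=0$ and $\hat{y}_j:=y_j^*/(1-x_l^*)$ for $j\in T_l$, so that $\sum_{j\in T_l}\hat{y}_j=1$. In both cases the coordinate-wise bounds $\hat{x}_l\le 2x_l^*$ and $\hat{y}_j\le 2y_j^*$ hold (in the second case because $1-x_l^*>\tfrac{1}{2}$), and $(\hat{\pmb{x}},\hat{\pmb{y}})$ is feasible with $\hat{\pmb{x}}\in\{0,1\}^{\ell}$.

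For the polyhedral case, setting $\hat{\pmb{u}}:=2\pmb{u}^*$ preserves the dual constraints, since $\hat{\pmb{u}}^T\pmb{A}=2\pmb{u}^{*T}\pmb{A}\ge 2\pmb{y}^{*T}\ge\hat{\pmb{y}}^T$, and linearity gives
\[
\widehat{\pmb{C}}^T\hat{\pmb{x}}+\underline{\pmb{c}}^T\hat{\pmb{y}}+\hat{\pmb{u}}^T\pmb{b}\le 2(\widehat{\pmb{C}}^T\pmb{x}^*+\underline{\pmb{c}}^T\pmb{y}^*+\pmb{u}^{*T}\pmb{b})=2\,LB.
\]
For the ellipsoidal case, I would invoke the assumption $\cUe\subset\Rset^n_+$: every realization $\pmb{c}\in\cUe$ is nonnegative, so the map $\pmb{y}\mapsto\max_{\pmb{c}\in\cUe}\pmb{c}^T\pmb{y}=\underline{\pmb{c}}^T\pmb{y}+\Vert\pmb{A}^T\pmb{y}\Vert_2$ is monotone nondecreasing on $\Rset^n_+$. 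Combined with $\pmb{0}\le\hat{\pmb{y}}\le 2\pmb{y}^*$ and the positive homogeneity of both terms, this yields $\underline{\pmb{c}}^T\hat{\pmb{y}}+\Vert\pmb{A}^T\hat{\pmb{y}}\Vert_2\le 2(\underline{\pmb{c}}^T\pmb{y}^*+\Vert\pmb{A}^T\pmb{y}^*\Vert_2)$, and together with $\widehat{\pmb{C}}^T\hat{\pmb{x}}\le 2\widehat{\pmb{C}}^T\pmb{x}^*$ the total rounded cost is at most $2\,LB$. In either case $(\hat{\pmb{y}},\hat{\pmb{u}})$, respectively $\hat{\pmb{y}}$, is a feasible second-stage completion of $\hat{\pmb{x}}$, so $\textsc{Eval}(\hat{\pmb{x}})\le 2\,LB\le 2\,OPT$.

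The main subtlety lies in the ellipsoidal bound: componentwise $\hat{\pmb{y}}\le 2\pmb{y}^*$ does not in general imply $\Vert\pmb{A}^T\hat{\pmb{y}}\Vert_2\le 2\Vert\pmb{A}^T\pmb{y}^*\Vert_2$ for arbitrary $\pmb{A}$, so the monotonicity argument based on $\cUe\subset\Rset^n_+$ is essential. Beyond that, the per-group decoupling of \textsc{RS} makes the rounding simpler than in the \textsc{Selection} case, since no global cardinality constraint needs to be preserved.
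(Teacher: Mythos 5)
Your proof is correct and takes essentially the same route as the paper's: solve the relaxation of~(\ref{mip2stsel1}), round each group independently at the threshold $\tfrac{1}{2}$ (normalizing the $y$-variables within a group so they sum to one, which gives $\hat{y}_j\le 2y^*_j$), and double $\pmb{u}$ in the polyhedral case. Your handling of the ellipsoidal term via monotonicity of $\pmb{y}\mapsto\max_{\pmb{c}\in\cUe}\pmb{c}^T\pmb{y}$ on $\Rset^n_+$ is in fact a slightly more careful justification than the paper's terse remark that the same method applies to~(\ref{mip2stsel1})b.
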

\begin{proof}
	Consider an optimal solution $(\pmb{x}^*, \pmb{y}^*, \pmb{u}^*)$ of the LP relaxation of~(\ref{mip2stsel1})a. We form the rounded solution $(\hat{\pmb{x}}, \hat{\pmb{y}}, \hat{\pmb{u}})$ as follows.  For each $l\in [\ell]$, if $x^*_l\geq 0.5$, then we fix $\hat{x}_l=1$ and $\hat{y}_j=0$ for each $j\in T_l$; if $\sum_{j\in T_l} y^*_j\geq 0.5$, then we set $\hat{x}_l=0$ and
	 $\hat{y}_j=y^*_j/\sum_{k\in T_l} y^*_k$
	 for each $j\in T_l$. Obviously in this case $\sum_{j\in T_l} \hat{y}_j=1$ and  $\hat{y}_j\leq 2y^*_j$.
	 We also fix $\hat{u}_i=2u^*_i$ for each $i\in [n]$. Thus the rounded solution is feasible and its cost is at most 2 times the optimum. The same method can be applied to~(\ref{mip2stsel1})b.
\end{proof}
Using the same instance as in the proof of Theorem~\ref{apprguar}, one can show that the worst case ratio of the approximation algorithm is attained.

\begin{thm}
	If the number of budget constraints in $\cUhp_1$ is constant, then \textsc{RTSt RS} with $\cUhp_1$  admits an FPTAS
	\end{thm}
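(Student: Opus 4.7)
The plan is to invoke Theorem~\ref{thmfptasgen}, so it suffices to show that problem~(\ref{mip2tstp}), specialized to the \textsc{RS} structure, can be solved in polynomial time for any choice of $\pmb{d}$ with $d_i\in\mathcal{E}=\{0,\epsilon,\ldots,1\}$. The key observation I would exploit is that the RS feasibility constraints $\sum_{i\in T_l}(x_i+y_i)=1$ (for $l\in[\ell]$) decompose the problem into $\ell$ independent subproblems, one per tool set, since no variable appears in two distinct equations.

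First, I would write out (\ref{mip2tstp}) under the RS partition $T_1,\ldots,T_\ell$ and note the decomposition. Second, for each subproblem associated with $T_l$, I would observe a clean dichotomy enforced by combining the equality $\sum_{i\in T_l}(x_i+y_i)=1$ with $x_i\in\{0,1\}$ and $y_i\ge 0$: either exactly one $x_{i^*}$ equals $1$ (and every other $x_j$ and every $y_j$ for $j\in T_l$ must then be $0$), or all $x_i$ equal $0$ and the $y_i$ form a feasible distribution with $\sum_{i\in T_l} y_i=1$ and $0\le y_i\le d_i$. In the first case the optimal choice is $i^*=\arg\min_{i\in T_l} C_i$, giving cost $\widehat{C}_l$. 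In the second case we face the bounded LP
\begin{equation*}
\min\Bigl\{\sum_{i\in T_l} \underline{c}_i y_i : \sum_{i\in T_l} y_i = 1,\ 0\le y_i\le d_i\Bigr\},
\end{equation*}
which is feasible iff $\sum_{i\in T_l} d_i\ge 1$ and is solved greedily by sorting the items in $T_l$ in nondecreasing order of $\underline{c}_i$ and setting $y_i:=d_i$ from the cheapest up, possibly filling the last used item partially so that the total reaches $1$.

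Taking the minimum of the two cases solves each subproblem in $O(|T_l|\log|T_l|)$ time; summing across $l$ solves~(\ref{mip2tstp}) for the \textsc{RS} structure in $O(n\log n+\ell)$ time. Plugging this into the FPTAS scheme of Theorem~\ref{thmfptasgen} (which enumerates $(1/\epsilon)^K$ discretized dual vectors $\pmb{u}$ and calls the subroutine for each) yields the claimed FPTAS, with overall running time polynomial in $n$, $\ell$, and $1/\epsilon$ whenever $K$ is constant. There is no real obstacle here beyond setting up the decomposition and verifying the greedy solution of the per-set LP; the combinatorial simplicity of \textsc{RS} does all of the work.
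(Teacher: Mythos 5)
Your proposal is correct and follows essentially the same route as the paper: both reduce to Theorem~\ref{thmfptasgen}, exploit the per-set decomposition of the \textsc{RS} constraints, and solve each set's subproblem by comparing the cheapest first-stage tool $\widehat{C}_l$ against a greedy (sort by $\underline{c}_j$ and fill up to total $1$ under the bounds $d_j$) solution of the second-stage LP. The only differences are cosmetic: the paper aggregates the first-stage variables into one $x_l$ per set up front, whereas you derive that dichotomy inside the proof.
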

\begin{proof}
	According to Theorem~\ref{thmfptasgen}, it is enough to show that the following problem is polynomially solvable:
	\begin{equation}
	\label{eq0001}
	\begin{array}{lllll}
		\min & \widehat{\pmb{C}}^T\pmb{x}+\underline{\pmb{c}}^T\pmb{y}\\
			\text{s.t.} & x_l+ \displaystyle\sum_{j\in T_l} y_j = 1 & l\in [\ell] \\
			&  0\leq y_j\leq d_j & j\in [n] \\
			& \pmb{x}\in \{0,1\}^{\ell}
	\end{array}
	\end{equation}
	where $d_j\in \mathcal{E}=\{0,\epsilon,2\epsilon,\dots,1\}$, $j\in [n]$.
	We first renumber the variables in each set $T_l$, $l\in [\ell]$, so that they are ordered with respect to nondecreasing  values of $\underline{c}_j$.
For each tool set $T_l$, we greedily allocate the largest possible values to $y_j$, $j\in T_l$, so that the total amount allocated does not exceed~1. If $\sum_{j\in T_l} y_j<1$ or $\widehat{C}_l\leq \sum_{j\in T_l} \underline{c}_jy_j$, then we fix $x_l=1$ and set $y_j=0$ for $j\in T_l$; otherwise we fix $x_i=0$ and keep the allocated values for $y_j$, $j\in T_l$. Using the fact that the variables were initially sorted, the optimal solution can be found in $O(n)$ time. Using Theorem~\ref{thmfptasgen}, we can construct an FPTAS for the problem with running time $O(n\log n+n(1/\epsilon)^K)$.
\end{proof}

\begin{thm}
The \textsc{RTSt RS} problem under $\cUhp_0$ can be solved in $O(n^2\log n)$ time.
\end{thm}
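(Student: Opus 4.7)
The plan is to reduce the problem to a univariate parametric optimization via Lagrangian duality on the single budget constraint of $\cUhp_0$, and then sweep over a polynomial number of breakpoints.

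First, I exploit the structure of \textsc{RS}: after fixing a first-stage vector $\pmb{x}\in\{0,1\}^{\ell}$ with $L=\{l:x_l=0\}$, the optimal integer recourse picks, for each $l\in L$, the tool in $T_l$ of smallest realized cost. Therefore,
\[
\textsc{Eval}(\pmb{x}) = \sum_{l\notin L}\widehat{C}_l + \max_{\pmb{0}\le\pmb{\delta}\le\pmb{d},\,\|\pmb{\delta}\|_1\le\Gamma}\sum_{l\in L}\min_{j\in T_l}(\underline{c}_j+\delta_j).
\]

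Second, I dualize the budget constraint $\|\pmb{\delta}\|_1\le\Gamma$ with multiplier $\theta\ge 0$. The inner program is a concave maximization over a polytope (a pointwise min of affine functions is concave), so LP strong duality applies. Observing that, for a prescribed water level $w_l=\min_{j\in T_l}(\underline{c}_j+\delta_j)$, the cheapest feasible $\pmb{\delta}$ has $\delta_j=(w_l-\underline{c}_j)^{+}$, the dual yields
\[
\max_{\pmb{0}\le\pmb{\delta}\le\pmb{d},\,\|\pmb{\delta}\|_1\le\Gamma}\sum_{l\in L}\min_{j\in T_l}(\underline{c}_j+\delta_j) \;=\; \min_{\theta\ge 0}\Bigl[\theta\Gamma+\sum_{l\in L}\phi_l(\theta)\Bigr],
\]
where $\phi_l(\theta)=\max_{w_l\in[w_l^{0},\,w_l^{\max}]}\bigl[w_l-\theta\sum_{j\in T_l}(w_l-\underline{c}_j)^{+}\bigr]$, with $w_l^{0}=\min_{j\in T_l}\underline{c}_j$ and $w_l^{\max}=\min_{j\in T_l}(\underline{c}_j+d_j)$. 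Interchanging $\min_{\pmb{x}}$ and $\min_{\theta}$ (both minimizations) and decoupling set by set,
\[
\textsc{RTSt RS} \;=\; \min_{\theta\ge 0}\Bigl[\theta\Gamma+\sum_{l\in[\ell]}\min\bigl(\widehat{C}_l,\phi_l(\theta)\bigr)\Bigr].
\]

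Third, I analyze breakpoints. After sorting each $T_l$ in nondecreasing order of $\underline{c}_j$, the inner objective defining $\phi_l(\theta)$ is concave piecewise-linear in $w_l$ with breakpoints only at the $\underline{c}_j$'s, so its optimum is attained at one of these breakpoints or at $w_l^{\max}$. Consequently, $\phi_l(\theta)$ is a convex, monotone decreasing, piecewise-linear function of $\theta$ with $O(|T_l|)$ pieces. Since $\phi_l$ is monotone, the cap $\min(\widehat{C}_l,\phi_l(\theta))$ introduces at most one further breakpoint per set. Hence the bracketed function has $O(n)$ breakpoints in total.

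Finally, the algorithm enumerates all of these breakpoints. With the presorted per-set arrays (preprocessing $O(n\log n)$), each $\phi_l(\theta)$ is evaluated in $O(\log|T_l|)$ time by binary search on the relevant piece, so one full evaluation of the bracketed sum costs $O(\sum_l\log|T_l|)=O(n\log n)$. Multiplying by $O(n)$ breakpoints gives the claimed $O(n^2\log n)$ bound. The main obstacle I foresee is the careful accounting of the breakpoints of $\phi_l(\theta)$, especially at the endpoint $w_l^{\max}$ and at the transition values of $\theta$ where the optimal water level index jumps by one, together with a clean verification that the Lagrangian dualization of the single budget constraint incurs no duality gap. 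Once these structural facts are in place, correctness of the sweep and the complexity bound follow.
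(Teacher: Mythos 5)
Your proposal is correct, but it takes a genuinely different route from the paper. The paper starts from the compact dualized MIP for $\cUhp_0$ (variables $\pi$ for the budget and $\pmb{\rho}$ for the box constraints), substitutes $y_j=u_j+v_j$ with $u_j\in[0,\pi]$, $v_j\in[0,1-\pi]$, and proves via a basis/vertex-counting argument on the resulting LP that some optimal $\pi$ lies in the data-independent set $\{0\}\cup\{1/p: p\in[n]\}$; it then sweeps these $O(n)$ candidates, solving each restricted problem greedily in $O(n\log n)$. You instead keep the adversary's problem in primal form, exploit that the integral recourse decomposes set-by-set into water-filling terms $\min_{j\in T_l}(\underline{c}_j+\delta_j)$, Lagrangify only the single budget constraint, and sweep the $O(n)$ data-dependent breakpoints of the resulting univariate piecewise-linear function $\theta\mapsto\theta\Gamma+\sum_l\min(\widehat{C}_l,\phi_l(\theta))$. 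Both arguments are sound and give the same $O(n^2\log n)$ bound; yours is arguably more elementary and transparent (explicit concave envelopes, no basis-counting), while the paper's candidate set for the dual variable is independent of the cost data, which is a structurally cleaner statement. Three small points to tighten in a full write-up: include $\theta=0$ among the candidates and dispose of the trivial case $\Gamma=0$ (for $\Gamma>0$ your $F(\theta)\to\infty$, so the minimum is at $0$ or a breakpoint); note explicitly that $F$ is generally \emph{not} convex because the caps $\min(\widehat{C}_l,\cdot)$ create concave kinks, which is precisely why a full sweep of all breakpoints (rather than a line search) is needed; and justify the exchange $\min_{\pmb{x}}\min_{\theta}=\min_{\theta}\min_{\pmb{x}}$ together with the recovery of an optimal $\pmb{x}$ from the minimizing $\theta^*$ (setting $x_l=1$ iff $\widehat{C}_l\le\phi_l(\theta^*)$ yields $\textsc{Eval}(\pmb{x}^*)\le F(\theta^*)=OPT$, hence optimality). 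None of these is a gap in the approach, only in the level of detail.
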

\begin{proof}
 The MIP formulation for the problem under $\cUhp_0$ takes the following form
 (see~(\ref{mip2st})):
$$
\begin{array}{llllll}
\min\ & \widehat{\pmb{C}}^T\pmb{x}+ \underline{\pmb{c}}^T\pmb{y}+ \Gamma \pi + \displaystyle\sum_{j\in[n]}  \rho_j d_j \\
\text{s.t.} &  x_l + \displaystyle\sum_{j\in T_l} y_j = 1 & l\in[\ell] \\
& \pi + \rho_j \ge y_j &  j\in[n] \\
& x_l\in\{0,1\} & l\in[\ell] \\
& y_j \in [0,1] &  j\in [n] \\
&\pi, \pmb{\rho} \ge 0
\end{array}
$$
which can be represented, equivalently, as follows
\begin{equation}
\label{eq007}
\begin{array}{llllll}
\min\ & \widehat{\pmb{C}}^T\pmb{x}+ \underline{\pmb{c}}^T\pmb{y}+ \Gamma \pi + \displaystyle\sum_{j\in[n]} d_j
\max\{0,y_j-\pi\} \\
\text{s.t.} &  x_l + \displaystyle\sum_{j\in T_l} y_j = 1 & l\in[\ell] \\
& x_l\in\{0,1\} & l\in[\ell] \\
& y_j \in [0,1] &  j\in [n] \\
&\pi \in [0,1]
\end{array}
\end{equation}
Substituting $u_j+v_j$ into $y_j$ yields 
\begin{equation}
\label{eq003}
\begin{array}{llllll}
\min & \widehat{\pmb{C}}^T\pmb{x}+ \underline{\pmb{c}}^T\pmb{u} + \Gamma \pi + \displaystyle\sum_{j\in[n]} (\underline{c}_j + d_j) v_j \\
\text{s.t.} &  x_l + \displaystyle\sum_{j\in T_l} (u_j+ v_j) = 1 &l\in[\ell] \\
& x_l\in\{0,1\} &  l\in[\ell] \\
& u_j \in [0,\pi] & j\in [n] \\
& v_j \in [0,1-\pi] & j\in [n] \\
&\pi \in [0,1]
\end{array}
\end{equation}
We now show the following claim:
\begin{cla}\label{claim1}
There is an optimal solution to~(\ref{eq003}) in which $\pi=0$ or  $\pi=\frac{1}{p}$, $p\in [n]$.
\end{cla}
Let an optimal $\pmb{x}$ in~(\ref{eq003}) be fixed and define $S=\{l\in[\ell]: x_l = 0\}$. Define $n' = \sum_{l\in S} |T_l|$ and $m'=|S|$. The optimal values of $\pmb{u}$, $\pmb{v}$ and $\pi$ to~(\ref{eq003}) can then be computed by solving the following LP problem:
\begin{equation}
\begin{array}{lllll}
\min & \displaystyle \sum_{l\in S} \sum_{j\in T_l} \underline{c}_j u_j + \Gamma \pi + \sum_{l\in S}\sum_{j\in T_l} (\underline{c}_j+d_j)v_j \\
 \text{s.t.} &  \displaystyle\sum_{j\in T_l} (u_j + v_j)= 1 & l\in S \\
& u_j + w_j = \pi & j \in T_l, l\in S \\
& v_j + t_j = 1 - \pi &  j \in T_l, l\in S \\
& \pmb{u},\pmb{v},\pmb{w},\pmb{t} \in\mathbb{R}^{n'}_+ \\
& \pi \in [0,1]
\end{array}
\label{eq003f}
\end{equation}
This problem has $4n'+1$ variables, and $2n'+m'$ constraints.
In an optimal basis solution, equivalently \emph{optimal vertex solution}, 
$(\pmb{u},\pmb{v},\pmb{w},\pmb{t},\pi)\in\Rset^{4n'+1}_+$, we therefore have $2n'+m'$ basis and $2n'+1-m'$ non-basis variables.
We start with the following observation,
which is due to the definitions of $u_j$ and $v_j$ in (\ref{eq003}) and the optimality of $(\pmb{u},\pmb{v},\pmb{w},\pmb{t},\pi)$.
\begin{obs}
 If $v_j>0$ then $u_j=\pi$
(if $u_j<\pi$ then $v_j=0$),  $j \in T_l$,  $l\in S$.
\label{obsuv}
\end{obs}

Suppose that $\pi\in (0,1)$. Thus $\pi$ is a basis variable.
Then each constraint of the types $u_j + w_j = \pi$ and $v_j + t_j = 1-\pi$ must contain at least one basis variable apart from $\pi$. There are $m'-1$ of these constraints that have two basis variables apart from $\pi$. Hence at most $m'-1$ variables among $\pmb{u}, \pmb{v}$ have the values different than $\pi$, $1-\pi$, respectively.
Accordingly, there is at least one constraint $l\in S$, such that
 $\sum_{j\in T_{l}}(u_j + v_j) = 1$, where $u_j \in\{0,\pi\}$ and $v_j \in\{0,1-\pi\}$ 
 for every $j\in T_{l}$. Let us denote by $S'$ the set of such constraints, $\emptyset\not=S'\subseteq S$.
 The value of $\sum_{j\in T_{l}}(u_j+ v_j)$ for each $l\in S'$ can be expressed by 
 $p_l\pi+q_l(1-\pi)$, where $p_l$ and $q_l$ are  the numbers of variables  $u_j=\pi$ and $v_j=1-\pi$, respectively,
 in the constraint~$l$.
 Thus
 \begin{equation}
 \sum_{j\in T_{l}}(u_j + v_j)=p_l\pi+q_l(1-\pi)=1, \; l\in S'.
 \label{conl}
 \end{equation}
By Observation~\ref{obsuv}, the form of~(\ref{conl}) and the fact that $\pi\in (0,1)$
one can easily deduce that $p_l\geq 1$ and $0\leq q_l\leq 1$, and
 $q_l=1$ iff $p_l=1$; if $p_l\geq 2$ then $q_l=0$.
 
 Furthermore we claim, for the case $\pi\in (0,1)$, that there always exists at least one constraint~$l'\in S'$, such that
 $\sum_{j\in T_{l'}}(u_j + v_j)=p_{l'}\pi=1$, i.e. $q_{l'}=0$, where $p_{l'}\geq 2$.
 On the contrary, suppose that for each $l\in S'$, the  constraint~$l$ has the form of
  $\sum_{j\in T_{l}}(u_j + v_j)=p_l\pi+q_l(1-\pi)=1$, where $p_l=1$ and $q_l=1$.
  We need to consider two cases.
  The first case $S'=S$. Thus $u_j \in\{0,\pi\}$ and $v_j \in\{0,1-\pi\}$ for every  $j \in T_l$,  $l\in S$,
  and $u_j=\pi$ iff $v_j=1-\pi$. Let us construct a vector
  $\pmb{0}\not=(\pmb{u}^{\epsilon},\pmb{v}^{\epsilon},\pmb{w}^{\epsilon},\pmb{t}^{\epsilon},\pi^{\epsilon})\in\Rset^{4n'+1}$
  as follows:  for every  $j \in T_l$,  $l\in S$,
  set $u^{\epsilon}_j=\epsilon$ and $w^{\epsilon}_j=0$ if $u_j=\pi$;
  $v^{\epsilon}_j=-\epsilon$ and $t^{\epsilon}_j=0$ if $v_j=1-\pi$;
  $u^{\epsilon}_j=0$ and $w^{\epsilon}_j=\epsilon$ if $u_j=0$;
  $v^{\epsilon}_j=0$ and $t^{\epsilon}_j=-\epsilon$ if $v_j=0$;
  and $\pi^{\epsilon}=\epsilon$.
  It is easily seen that 
  $(\pmb{u}-\pmb{u}^{\epsilon},\pmb{u}-\pmb{u}^{\epsilon},\pmb{w}-\pmb{w}^{\epsilon},
  \pmb{t}-\pmb{t}^{\epsilon},\pi-\pi^{\epsilon})$ and
  $(\pmb{u}+\pmb{u}^{\epsilon},\pmb{u}+\pmb{u}^{\epsilon},\pmb{w}+\pmb{w}^{\epsilon},
  \pmb{t}+\pmb{t}^{\epsilon},\pi+\pi^{\epsilon})$ are feasible solutions to~(\ref{eq003f})
  for sufficiently small~$\epsilon>0$. Such $\epsilon$ exists since $\pi\in (0,1)$.
  This contradicts our assumption that $(\pmb{u},\pmb{v},\pmb{w},\pmb{t},\pi)$ is a vertex solution (basis feasible solution).
  The proof for the second case $S'\subset S$ may be handled in much the same way.
  It suffices to notice that for each  
  constraint~$l$,  $\sum_{j\in T_{l}}(u_j + v_j)=1$, $l\in S\setminus S'$, there exits at least one $j'\in  T_{l}$
  such that $0<u_{j'}<\pi$ or  $0<v_{j'}<1-\pi$. Using this fact one can build
  $\pmb{0}\not=(\pmb{u}^{\epsilon},\pmb{v}^{\epsilon},\pmb{w}^{\epsilon},\pmb{t}^{\epsilon},\pi^{\epsilon})\in\Rset^{4n'+1}$
  to arrive to a contradiction with the assumption that $(\pmb{u},\pmb{v},\pmb{w},\pmb{t},\pi)$ is a vertex solution.
  We thus have proved that
   there always exists at least one constraint~$l'\in S'$, such that
 $\sum_{j\in T_{l'}}(u_j + v_j)=p_{l'}\pi=1$, where $p_{l'}\geq 2$.
 Hence $\pi=\frac{1}{p_{l'}}$ for  $\pi\in (0,1)$.
  After adding the boundary values of $\pi$, i.e. $0$ and~$1$, Claim~\ref{claim1} follows.
  

Problem~(\ref{eq003}) can be rewritten as follows:
\begin{equation}
\label{eq004}
\begin{array}{lllllll}
\min & \widehat{\pmb{C}}^T\pmb{x}+ \pi \underline{\pmb{c}}^T\hat{\pmb{u}} + \Gamma \pi + \displaystyle\sum_{j\in[n]} (\underline{c}_j + d_j)(1-\pi) \hat{v}_i \\
 \text{s.t.} & x_l + \sum_{j\in T_l}  (\pi \hat{u}_j + (1-\pi) \hat{v}_j )= 1 &  l\in[\ell] \\
& x_l\in\{0,1\} & l\in[\ell] \\
& \hat{u}_j\in [0,1] & j\in [n] \\
&  \hat{v}_j \in [0,1] &  j\in [n] \\
&\pi \in [0,1]
\end{array}
\end{equation}
where the original variables~$y_j$, $j\in[n]$, in~(\ref{eq007}) are restored as follows: $y_j=\pi \hat{u}_j + (1-\pi) \hat{v}_j$.
Using Claim~\ref{claim1}, let us fix a candidate value for $\pi$. We can now sort with
 respect to nondecreasing values of
the costs $\underline{c}_j$ and $\underline{c}_j+d_j$ of $ \hat{u}_j$ and $ \hat{v}_j$ within each set $T_l$, and either set $x_l=1$ or pack from $\hat{\pmb{u}}$ and
 $\hat{\pmb{v}}$ in nondecreasing order until $\sum_{j\in T_l} \pi \hat{u}_j + (1-\pi)\hat{v}_j$ reaches~1. As there are $O(n)$ values for $\pi$ to check, the overall time required by this method is thus $O(n^2\log n)$.
\end{proof}

\section{Robust two-stage shortest path problem}
\label{secsp}

In Section~\ref{seccomplex} we have shown that \textsc{RTSt Shortest Path} problem is strongly NP-hard even in a very restrictive case, when the cardinality of the set of feasible solutions is~1. We now show that the hardness result can be strengthened.
\begin{thm}
\label{complsp1}
The \textsc{RTSt Shortest Path} problem under $\cUvp={\rm conv}\{\pmb{c}_1,\dots,\pmb{c}_K\}$ is hard to approximate within $\log^{1-\epsilon}K$ for any $\epsilon>0$ unless ${\rm NP}\subseteq {\rm DTIME}(n^{{\rm polylog}\;n})$, even for series-parallel graphs.
\end{thm}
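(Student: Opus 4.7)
The plan is to reduce from the single-stage robust (min-max) shortest path problem with discrete scenarios $\{\pmb{c}_1,\dots,\pmb{c}_K\}$ on series-parallel graphs, which is known to be hard to approximate within $\log^{1-\epsilon} K$ under the complexity assumption $\mathrm{NP}\not\subseteq\mathrm{DTIME}(n^{\mathrm{polylog}\,n})$ (via a Min-Rep / Label-Cover reduction from the Kasperski--Zieli\'nski line of work). Since $\max_{\pmb{c}\in\cUvp}\pmb{c}^T\pmb{y}=\max_{k\in[K]}\pmb{c}_k^T\pmb{y}$ whenever $\cUvp=\mathrm{conv}\{\pmb{c}_1,\dots,\pmb{c}_K\}$, it suffices to show hardness for the discrete-scenario two-stage problem.

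Given a single-stage instance $(G,\pmb{c}_1,\dots,\pmb{c}_K)$ on a series-parallel graph $G$, I would build an \textsc{RTSt Shortest Path} instance on the same $G$, with second-stage uncertainty $\cUvp=\mathrm{conv}\{\pmb{c}_1,\dots,\pmb{c}_K\}$, and with first-stage cost $C_e:=M$ on every arc $e$, where $M$ is a large but polynomially bounded penalty. Concretely, let $\mathrm{UB}$ be a trivial polynomial upper bound on the single-stage optimum (e.g.\ $n\cdot\max_{k,i}c_{k,i}$) and set $M:=\bigl\lceil\log^{1-\epsilon} K\bigr\rceil\cdot \mathrm{UB}+1$; note $M$ is polynomial in the input size.

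Let $\mathrm{OPT}_{\mathrm{ss}}=\min_{P}\max_{k\in[K]}\pmb{c}_k^T P$ denote the single-stage optimum and $\mathrm{OPT}_{\mathrm{ts}}$ the two-stage optimum of the constructed instance. Taking $\pmb{x}=\pmb{0}$ shows $\mathrm{OPT}_{\mathrm{ts}}\le\mathrm{OPT}_{\mathrm{ss}}\le\mathrm{UB}$, while any partial solution with $\pmb{x}\neq\pmb{0}$ incurs first-stage cost $\ge M>\log^{1-\epsilon} K\cdot\mathrm{UB}\ge\log^{1-\epsilon} K\cdot\mathrm{OPT}_{\mathrm{ts}}$. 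Consequently every $\log^{1-\epsilon} K$-approximate two-stage solution must satisfy $\pmb{x}=\pmb{0}$, so its objective value equals $\max_k\pmb{c}_k^T\pmb{y}$ for some $s$-$t$ path $\pmb{y}$, yielding a $\log^{1-\epsilon} K$-approximation to the single-stage instance and contradicting the assumed hardness.

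The main obstacle is the baseline step, namely ensuring the $\log^{1-\epsilon} K$-hardness of single-stage min-max shortest path \emph{on series-parallel graphs}; one must verify that the Min-Rep / Label-Cover gadget used to derive this hardness can be realized by a series-parallel instance (the natural ``layered'' construction does yield such a graph, but the details need checking). The subsequent reduction to the two-stage version is then routine: it only requires choosing the first-stage penalty $M$ large enough to preclude any first-stage arc from appearing in a near-optimal solution, while remaining polynomial in the input encoding.
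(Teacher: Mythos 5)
Your reduction breaks at the step where you claim that a solution with $\pmb{x}=\pmb{0}$ has objective value $\max_k\pmb{c}_k^T\pmb{y}$ for some fixed $s$--$t$ path $\pmb{y}$. In the two-stage problem the order of quantifiers is $\min_{\pmb{x}}\max_{\pmb{c}}\min_{\pmb{y}}$: the recourse $\pmb{y}$ is chosen \emph{after} the scenario is revealed. Hence
$\textsc{Eval}(\pmb{0})=\max_{\pmb{c}\in\cUvp}\min_{\pmb{y}\in\mathcal{X}}\pmb{c}^T\pmb{y}$,
which is the adversarial max--min value, not the single-stage min--max value $\min_{P}\max_{k}\pmb{c}_k^TP$. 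These can differ badly: with two parallel $s$--$t$ arcs and scenarios $(1,0)$ and $(0,1)$, the min--max optimum is $1$ while $\max_{\pmb{c}\in\cUvp}\min_{\pmb{y}}\pmb{c}^T\pmb{y}=1/2$. In general the max--min value equals the LP relaxation of the min--max problem (by the minimax theorem over the convex set $\cUvp$ and the path polytope) and is computable in polynomial time, so your constructed instance is easy and no hardness is transferred. Making every first-stage cost a huge $M$ does not recover the single-stage robust problem; it recovers the wait-and-see problem.

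The paper avoids exactly this trap by splitting each arc $(v_i,v_j)$ into a dashed arc (first-stage cost $0$, second-stage cost $M$ under every scenario) and a solid arc (first-stage cost $M$, second-stage cost equal to the original scenario cost). Any solution of value below $M$ must then buy all dashed arcs in the first stage and all solid arcs in the second stage, and choosing a dashed arc in stage one forces the matching solid arc in stage two. The path is therefore committed \emph{before} the scenario is revealed, the adversary maximizes against a fixed path, and the reduction is cost-preserving from \textsc{Min-Max Shortest Path} on series-parallel graphs (whose $\log^{1-\epsilon}K$-inapproximability on such graphs is taken from the cited reference, so your ``main obstacle'' is already settled there). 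Your proposal is missing this commitment gadget, and without it the argument does not go through.
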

\begin{proof}
Consider the following \textsc{Min-Max Shortest Path} problem. We are given a series-parallel graph $G=(V,A)$, with scenario set $\mathcal{U}=\{\pmb{c}_1,\dots,\pmb{c}_K\}{\subseteq\mathbb{R}^{|A|}_+}$, where scenario $\pmb{c}_j$ is a realization of the arc costs. We seek an $s-t$ path $P$ in $G$ whose maximum cost over $\mathcal{U}$ is minimum. This problem is hard to approximate within $\log^{1-\epsilon}K$ for any $\epsilon>0$ unless ${\rm NP}\subseteq {\rm DTIME}(n^{{\rm polylog}\;n})$~\cite{KZ09}. We construct a cost preserving reduction from \textsc{Min-Max Shortest Path} to \textsc{RTSt Shortest Path} with $\cUvp$. Let us define network $G'=(V',A')$ by splitting each arc $(v_i,v_j)\in A$ into two arcs, namely $(v_i, v_{ij})$ (\emph{dashed arc}) and $(v_{ij}, v_j)$ (\emph{solid arc}). Let $M=|A| c_{\max}+1$, where $c_{\max}$ is the maximal arc cost which appears in $\mathcal{U}$. The first stage costs of all dashed arcs $(v_i, v_{ij})$ are~0 and the first stage costs of all solid arcs $(v_{ij}, v_j)$ are~$M$. For each scenario $\pmb{c}_k\in \mathcal{U}$ we form scenario $\pmb{c}_k'$ under which the costs of dashed arcs $(v_i, v_{ij})$ are~$M$ and the costs of solid arcs $(v_{ij},v_j)$ are equal to the costs of $(v_i, v_j)$ under $\pmb{c}_k$. Finally, we set $\cUvp={\rm conv}\{\pmb{c}_1',\dots,\pmb{c}_K'\}$. Note that $G'$ is series-parallel as well.	
	\begin{figure}[ht]
		\centering
		\includegraphics[height=3.5cm]{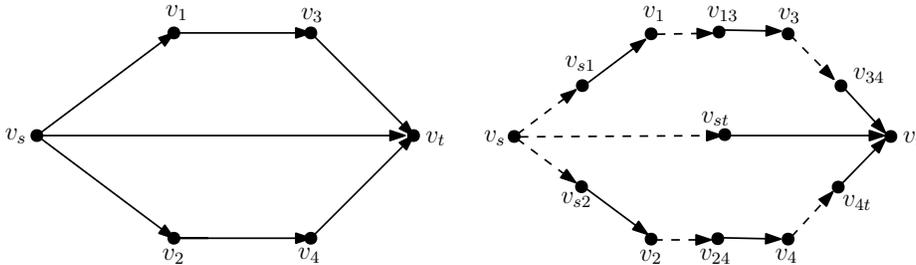}
		\caption{Illustration of the proof of Theorem~\ref{complsp1}.}\label{figc1}
	\end{figure}
	
Observe that only dashed arcs can be selected in the first stage for any partial solution $\pmb{x}$ with $\textsc{Eval}(\pmb{x}) < M$, and only solid arcs can be selected in the second stage. Furthermore if a dashed arc $(v_i, v_{ij})$ is selected in the first stage, then, in order to ensure that a solution built is
an $s-t$ path in~$G'$,  
the solid arc $(v_{ij}, v_j)$ must be selected in the second stage. So, the choice of the arcs in the first stage uniquely gives the set of arcs chosen in the second stage. Let $\pmb{x}$ and $\pmb{y}\in \mathcal{R}(\pmb{x})$ be such a solution to the \textsc{RTSt} problem with total costs less than $M$. The pair $(\pmb{x},\pmb{y})$ is a characteristic vector of an $s-t$ path in $G'$. Since the first stage costs of the dashed arcs are~0, we get
\begin{equation}
\label{eq00}
\textsc{Eval}(\pmb{x})=\max_{\pmb{c}\in \cUvp} \pmb{c}^T \pmb{y}=\max_{\pmb{c}\in\{\pmb{c}_1',\dotsm,\pmb{c}_K'\}} \pmb{c}^T\pmb{y}.
\end{equation}

Suppose there is an $s-t$ path $P=v_s-v_{i_1}-v_{i_2}-\dots-v_t$ in $G$ whose maximum cost over $\mathcal{U}$ is equal to $c$. Path $P$ corresponds to path $P'=v_s-v_{si_1}-v_{i_1}-v_{i_1i_2}-\dots-v_t$ composed of alternated dashed and solid arcs. If $\pmb{x}$ is the characteristic vector of all dashed arcs in $P'$, then $\pmb{y}$ is the characteristic vector of all solid arcs in $P'$. According to~(\ref{eq00}) and the construction of $\pmb{c}'_k$, $k\in [K]$, we have $\textsc{Eval}(\pmb{x})=c$.

Suppose that there is a solution $\pmb{x}$, $\pmb{y}\in \mathcal{R}(\pmb{x})$ to \textsc{RTSt} such that $\textsc{Eval}(\pmb{x})=c$. The characteristic vectors $\pmb{x}, \pmb{y}$ describe a path $P'=v_s-v_{si_1}-v_{i_1}-v_{i_1i_2}-\dots-v_t$ in $G'$ with alternated dashed and solid arcs, where $\pmb{x}$ is the characteristic vector of the dashed arcs and $\pmb{y}$ is the characteristic vector of the solid arcs in $P'$. Using~(\ref{eq00}), we get $\max_{\pmb{c}\in\{\pmb{c}_1',\dotsm,\pmb{c}_K'\}} \pmb{c}^T\pmb{y}=c$. By the construction of the scenarios, we conclude that the maximum cost of the path $P=v_s-v_{i1}-v_{i2}-\dots-v_t$ over $\mathcal{U}$ in $G$ equals $c$.
\end{proof}

Recall that the problem has a $K$-approximation algorithm under $\cUvp$ (see Theorem~\ref{apprK}).

\begin{thm}
\label{complsp2}
The \textsc{RTSt Shortest Path} problem under $\cUhp$ is hard to approximate in graph~$G=(V,A)$
 within $\log^{1-\epsilon}|A|$ for any $\epsilon>0$
unless ${\rm NP}\subseteq {\rm DTIME}(n^{{\rm polylog}\;n})$,
even if $G$ is a series-parallel graph.
\end{thm}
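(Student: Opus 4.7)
The plan is to combine the reduction used for Theorem~\ref{complsp1} with the $\pmb{\lambda}$-variable construction from the proof of Theorem~\ref{corhp}, so as to trade the convex-hull uncertainty $\cUvp$ for the halfspace form $\cUhp$ while preserving both the series-parallel structure and the cost of the solution.

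I would first start from an instance of \textsc{Min-Max Shortest Path} on a series-parallel graph $G_0=(V_0,A_0)$ with scenario set $\{\pmb{c}_1,\dots,\pmb{c}_K\}\subset\mathbb{R}^{|A_0|}_+$; recall that it is hard to approximate within $\log^{1-\epsilon} K$ under the stated complexity hypothesis~\cite{KZ09}. Applying the dashed/solid arc splitting from the proof of Theorem~\ref{complsp1} would yield a series-parallel graph $G'=(V',A')$ with $|A'|=2|A_0|$, the same first stage costs as there, and scenarios $\pmb{c}'_1,\dots,\pmb{c}'_K\in\mathbb{R}^{|A'|}_+$ such that \textsc{RTSt Shortest Path} on $G'$ under $\cUvp={\rm conv}\{\pmb{c}'_1,\dots,\pmb{c}'_K\}$ is cost-equivalent to the original min-max instance on $G_0$.

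Next, mimicking the construction from Theorem~\ref{corhp}, I would append a chain of $K$ new arcs in series at the terminal of $G'$: introduce fresh vertices $t=u_0,u_1,\dots,u_K$ together with arcs $(u_{j-1},u_j)$ for $j\in[K]$, declare $u_K$ to be the new sink, and assign first-stage cost $0$ to each appended arc. Call the resulting graph $G''$; then $|A''|=2|A_0|+K$, and $G''$ is still series-parallel because appending a chain in series preserves that class. The uncertainty set $\cUhp\subset\mathbb{R}^{|A''|}_+$ would be described exactly as in Theorem~\ref{corhp}, namely points of the form $[\pmb{\delta},\pmb{\lambda}]^T$ with $\pmb{\delta}=\sum_{j\in[K]}\lambda_j\pmb{c}'_j$, $\sum_{j\in[K]}\lambda_j=1$, and $\pmb{\delta},\pmb{\lambda}\geq\pmb{0}$; these linear (in)equalities fit the $\cUhp$ template.

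The key point, analogous to the argument in Theorem~\ref{corhp}, is that in any feasible solution $(\pmb{x},\pmb{y})$ on $G''$ the appended arcs can be moved into the first stage without increasing the objective: their first-stage cost is $0$, and zeroing the corresponding coordinates of $\pmb{y}$ can only decrease $\max_{\pmb{c}\in\cUhp}\pmb{c}^T\pmb{y}$. Once the $\pmb{\lambda}$-coordinates of $\pmb{y}$ vanish, the adversary's problem decouples and reduces to maximising $\pmb{c}^T\pmb{y}$ over $\pmb{c}$ in the projection of $\cUhp$ onto the first $|A'|$ coordinates, which is exactly $\cUvp$. Hence the RTSt objective on $G''$ under $\cUhp$ equals the RTSt objective on $G'$ under $\cUvp$, which by Theorem~\ref{complsp1} equals the value of the original \textsc{Min-Max Shortest Path} instance. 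Since $|A''|$ is polynomial in~$K$ (we may assume $|A_0|$ is also), $\log|A''|=\Theta(\log K)$, so a hypothetical $\log^{1-\epsilon}|A''|$-approximation would yield a $\log^{1-\epsilon'}K$-approximation for \textsc{Min-Max Shortest Path} for some $0<\epsilon'<\epsilon$, contradicting~\cite{KZ09}. The main obstacle I anticipate is the first-stage swap argument: one has to verify that moving the appended arcs into $\pmb{x}$ still yields a valid partial solution in $\mathcal{X}'$ with a valid recourse, and that the projection of $\cUhp$ onto the first $|A'|$ coordinates is indeed $\cUvp$; both should follow from the explicit form of $\cUhp$, but need to be spelled out carefully.
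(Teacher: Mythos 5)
Your proposal is correct and follows essentially the same route as the paper: split each arc into a dashed/solid pair as in Theorem~\ref{complsp1}, append $K$ extra zero-first-stage-cost dashed arcs to carry the $\pmb{\lambda}$-coordinates of the $\cUhp$ description from Theorem~\ref{corhp}, and observe that $\log|A''|=\Theta(\log K)$ since $|A_0|\ll K$ in the instances of~\cite{KZ09}. The only (cosmetic) difference is how the appended arcs are forced into the first stage --- the paper assigns them large second-stage costs, while you give a direct exchange argument showing the move is weakly improving --- and both work.
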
	
\begin{proof}
Given an instance of the \textsc{Min-Max Shortest Path} problem with a series parallel graph $G=(V,A)$ and scenario set $\mathcal{U}=\{\pmb{c}_1,\dots,\pmb{c}_K\}{\subseteq\mathbb{R}^{|A|}_+}$, we construct a cost preserving reduction from this problem to \textsc{RTSt Shortest Path} with $\cUhp$. The reduction is similar to the one from the proof of Theorem~\ref{complsp1}. We build a series parallel graph  $G'=(V',A')$ and only add $K$ additional dashed arcs as shown in Figure~\ref{figc2}. These additional dashed arcs have the first stage costs equal to~0 and the second stage costs equal to $M$
($M=|A| c_{\max}+1$, where $c_{\max}$ is the maximal arc cost  in $\mathcal{U}$), so they are all chosen in the first stage.
\begin{figure}[ht]
    \centering
    \includegraphics[height=3.5cm]{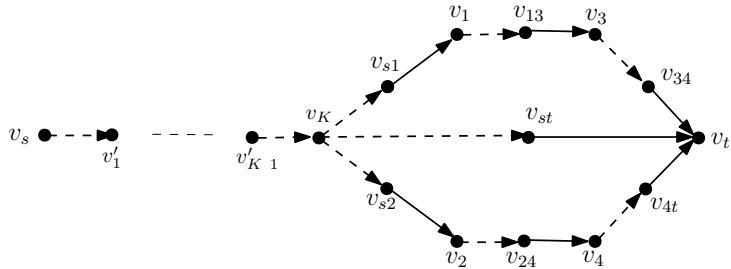}
    \caption{Illustration of the proof of Theorem~\ref{complsp2}.}\label{figc2}
\end{figure}

Define
\[
\cUhp = \left\{ \pmb{0} + \begin{bmatrix}\pmb{\pmb{\delta}}\\ \pmb{\lambda}\end{bmatrix}\ :\;  \pmb{\delta}= \sum_{j\in[K]}  \lambda_i \pmb{c}'_j,
 \sum_{j\in[K]} \lambda_j = 1, 
 \delta_i \ge 0 \ \forall i\in A ,
 \lambda_j \ge 0 \ \forall j\in[K] \right\} \subseteq \mathbb{R}^{2|A|+K}_{+},
\]
where deviations $\pmb{\delta}$ correspond to the arcs of the original graph~$G$ and deviations $\pmb{\lambda}$ correspond to the new dashed arcs. The rest of the proof is similar to the proof of Theorem~\ref{complsp1}. Note that in the hardness proof from \cite[Theorem~1]{KZ09}, we have $|A|\ll K$.
Thus $O(\log |A'|) = O(\log K)$, and the proof is complete.
\end{proof}

\begin{thm}
\label{fptassp}
	If the number of budget constraints in $\cUhp_1$ is constant, then \textsc{RTSt Shortest Path} in  network $G=(V,A)$ with $\cUhp_1$  admits an FPTAS
	\end{thm}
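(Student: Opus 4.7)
The plan is to invoke Theorem~\ref{thmfptasgen}. Since $K$ is constant by hypothesis, the theorem reduces the task to exhibiting a polynomial-time algorithm for problem~(\ref{mip2tstp}) specialised to the shortest path setting, i.e.\ where $\mathcal{X}$ is the set of characteristic vectors of simple $s$--$t$ paths in $G$ and the constraints $\pmb{H}(\pmb{x}+\pmb{y})\geq \pmb{g}$ encode that $\pmb{x}+\pmb{y}$ belongs to the integral $s$--$t$ path polytope $\mathcal{N}$.

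To solve (\ref{mip2tstp}) in the shortest-path case I would reformulate it as a minimum-cost $s$--$t$ flow problem of value~$1$ on an auxiliary network $G'=(V,A')$ obtained from $G$ by splitting every arc $a=(u,v)\in A$ into two parallel copies: a \emph{first-stage copy} $a^{(1)}$ of capacity~$1$ and cost $C_a$ (modelling $x_a\in\{0,1\}$), and a \emph{second-stage copy} $a^{(2)}$ of capacity $d_a$ and per-unit cost $\underline{c}_a$ (modelling $y_a\in[0,d_a]$). The constraint $x_a+y_a\leq 1$ is implicit in the fact that at most one unit of $s$--$t$ flow crosses any arc. The problem thus becomes: compute a minimum-cost unit $s$--$t$ flow in $G'$ subject to the integrality constraint that the flow on every $a^{(1)}$ be $0$ or $1$.

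To handle the binary constraint in polynomial time, I would first establish an analogue of Property~\ref{pxy} adapted to shortest path: there exists an optimal solution to (\ref{mip2tstp}) in which $y_a\in \mathcal{E}=\{0,\epsilon,2\epsilon,\dots,1\}$ on every arc. The argument mimics the one used for \textsc{Selection}: if some $y_a$ is strictly between two consecutive multiples of $\epsilon$, one transfers flow to another arc carrying part of the same $s$--$t$ flow, preserving feasibility and cost. After rescaling the flow value and every capacity by $1/\epsilon$ the instance has only integer data on a network whose node--arc incidence matrix is totally unimodular, so a standard min-cost flow algorithm returns an integer optimum of the scaled problem in polynomial time. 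The unit total-flow constraint then forces, on each arc $a$, that the scaled first-stage flow equals either $0$ or the full scaling factor, i.e.\ $x_a\in\{0,1\}$ in the original scale; this follows because any cut of $G$ is crossed by at most one unit of flow in total.

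The main obstacle is precisely this last reduction. As is illustrated by a single-arc instance with $d_a<1$, the pure LP relaxation of~(\ref{mip2tstp}) need not be tight: the LP can split an arc fractionally between $x_a$ and $y_a$, whereas the IP is forced to pay the full $C_a$. The key point to push through is therefore that, once $y_a$ is restricted to $\mathcal{E}$ and the flow value is exactly~$1$, there is no extreme feasible point with $x_a$ strictly between $0$ and $1$: any such extreme point can be rewritten as a convex combination of two integer solutions of no greater cost. With this subroutine for (\ref{mip2tstp}) in hand, the outer $(1/\epsilon)^K$ enumeration over $\pmb u$ in the proof of Theorem~\ref{thmfptasgen} yields an FPTAS for \textsc{RTSt Shortest Path} under $\cUhp_1$ running in time polynomial in $|A|$ and $1/\epsilon$ for every fixed $K$.
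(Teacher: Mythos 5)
Your overall strategy (invoke Theorem~\ref{thmfptasgen} and supply a polynomial algorithm for the inner problem~(\ref{mip2tstp}) in the shortest-path setting) matches the paper, but your subroutine has a genuine gap that your own ``key point'' does not close. Modelling $x_a$ as the flow on a unit-capacity parallel copy of arc $a$ and solving a min-cost unit flow cannot enforce $x_a\in\{0,1\}$, and the claim that, after restricting $y_a$ to $\mathcal{E}$, no extreme point has $x_a$ strictly between $0$ and $1$ is false. Take $\epsilon=1/2$, two arc-disjoint $s$--$t$ paths $P_1,P_2$; on $P_1$ set $C_a=10$, $d_a=0$, and on $P_2$ set $C_a=100$, $\underline{c}_a=0$, $d_a=1/2$. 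After scaling by $1/\epsilon=2$ the network has integer capacities and, by total unimodularity, an integral optimal flow --- but that flow sends one scaled unit (i.e.\ $x_a=1/2$) along the first-stage copies of $P_1$ and one along the second-stage copies of $P_2$, paying half of each $C_a$. This is an extreme point, it is strictly cheaper than every solution with $\pmb{x}$ binary, and it is not a convex combination of integer solutions of no greater cost. Total unimodularity gives you integrality in multiples of $\epsilon$, not binarity of $\pmb{x}$, so your subroutine would return an infeasible (too cheap) answer and the FPTAS guarantee would collapse. The discretisation of $y$ via an analogue of Property~\ref{pxy} is also unnecessary here and does not address this issue.

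The paper resolves the integrality of $\pmb{x}$ differently, by separating the two stages structurally rather than by parallel arc copies. It observes that in an optimal solution to~(\ref{eqsp1}) the unit flow decomposes into segments: arcs carrying a full unit in the first stage, and, between consecutive such ``integral'' nodes, a fractional unit flow routed entirely on second-stage variables. Accordingly it builds an auxiliary multigraph containing every original arc with cost $C_{ij}$ together with, for every ordered node pair $(i,j)$, a super-arc whose cost is that of a min-cost fractional unit $i$--$j$ flow under costs $\underline{\pmb{c}}$ and capacities $\pmb{d}$. A shortest $s$--$t$ path in this multigraph uses each arc either fully or not at all, which is exactly what makes $\pmb{x}$ binary, while the super-arcs absorb all the fractionality of $\pmb{y}$. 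If you want to rescue your flow-based formulation you would need some device of this kind (or an explicit enumeration of which arcs are first-stage), not an appeal to extreme-point integrality.
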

\begin{proof}
	Using Theorem~\ref{thmfptasgen} we need to show that the following problem is polynomially solvable:
	\begin{equation}
	\label{eqsp1}
		 \begin{array}{lllll}
		\min & \pmb{C}^T\pmb{x}+\underline{\pmb{c}}^T\pmb{y}\\
			\text{s.t.} & \displaystyle \sum_{(i,j)\in A} (x_{ij}+y_{ij})-\sum_{(j,i)\in A}(x_{ji}+y_{ji})=\left\{\begin{array}{lll} 1 & i=s \\ -1 & i=t &\\ 0&  i=V\setminus\{s,t\} \end{array} \right. \\
			&0\leq y_{ij}\leq d_{ij} & (i,j)\in A \\
			& \pmb{x}\in \{0,1\}^{|A|}
	\end{array}
	\end{equation}
	where $d_{ij} \in \mathcal{E}=\{0,\epsilon, 2\epsilon,\dots,1\}$, $(i,j)\in A$. 
	
	We will reduce the problem of solving~(\ref{eqsp1}) for fixed~$d_{ij}$ , $(i,j)\in A$,
	to the one of finding a shortest $s-t$ path in an auxiliary directed multigraph~$G'=(V',A')$
	that is built 
	as follows. We first set $V'=V$ and $A'=A$ and associate with each arc $(i,j)\in A'$, the cost
	equal to~$C_{ij}$. We then compute for each  pair of nodes $i\in V$ and $j\in V$, $i\not= j$,
	a cheapest unit flow from $i$ to $j$ in the original graph~$G$
	with respect to the costs $\underline{c}_{ij}$ and 
	arc capacities~$d_{ij}$ and add arc~$(i,j)$ to~$A'$ with the cost equal to the cost of this flow, denoted
	by~$\hat{c}_{ij}$.
	Note that~$\hat{c}_{ij}$ is bounded, if a feasible unit flow  exists, 
	since $\underline{c}_{ij}$ are nonnegative.
	If there is no feasible unit flow from $i$ and $j$, then we do not include~$(i,j)$ to~$A'$.
	The resulting~$G'$ is a multigraph with nonnegative arc costs.
	
	Finally we find a shortest $s-t$ path $P$ in $G'$. We can 
	construct  an optimal solution to~(\ref{eqsp1}) as follows. For each arc $(i,j)\in P$:  if 
	$(i,j)$ has the cost equal to~$C_{ij}$, then set $x_{ij}=1$; otherwise
	 (if $(i,j)$ has the cost equal to~$\hat{c}_{ij}$)
	 fix $y_{ij}$ to the optimal solution of the corresponding min-cost unit flow problem from $i$ to $j$.
	 The rest of variables in~(\ref{eqsp1}) are set to~zero.
	 Since the shortest path and the minimum cost flow problems are polynomially solvable, problem~(\ref{eqsp1}) is polynomially solvable as well. 
	 By Theorem~\ref{thmfptasgen}, the problem admits an FPTAS.
\end{proof}

\section{Conclusions and open problems}

In this paper we have discussed the class of robust two-stage combinatorial optimization problems. We have investigated the general problem as well as several its special cases. The results obtained for the particular problems are summarized in Table~\ref{tabres}. 

\begin{table}[ht]
\centering
\footnotesize
\caption{Summary of the results for the robust two-stage versions of problems $\mathcal{P}$. The symbol \textbf{P} means polynomially solvable.}\label{tabres}
\begin{tabular}{l|llllllllll}
$\mathcal{P}$& $\cUe$ & $\cUvp$ & $\cUhp$ & $\cUhp_0$ & $\cUhp_1$ ($K$-const.) \\ \hline
\textsc{RS} & NP-hard & str. NP-hard & str. NP-hard & \textbf{P} & FPTAS \\
		  & appr. within 2 & appr. within 2 & appr. within 2 \\ \hline
\textsc{Selection} & NP-hard & str. NP-hard & str. NP-hard & \textbf{P}~\cite{CGKZ18} & FPTAS\\
			   & appr. within 2 & appr. within 2 & appr. within 2 \\ \hline 
\textsc{Spanning Tree} & NP-hard & str. NP-hard & str. NP-hard & ?  & ?\\ \hline   
\textsc{Shortest Path} & NP-hard & str. NP-hard & str. NP-hard & ?  & FPTAS \\
				  &		   & appr. within K &  not appr. within& &  \\
				 &		  & not appr. within      &  $\log^{1-\epsilon}|A|$, $\epsilon>0$&   &  \\
				 &		  &$\log^{1-\epsilon}K$, $\epsilon>0$ & 
\end{tabular}
\end{table}

One can see that there is still a number of interesting open questions concerning the robust two-stage approach. The complexity status of the network problem under $\cUhp_0$ is still open. The complexity status of all the problems under $\cUhp_1$, when the number of budget constraints is a part of the input is also open. Also, no positive and negative approximation results have been established for the robust two-stage version of the \textsc{Spanning Tree} problem. For the selection problems, better approximation algorithms can exists. For the ellipsoid uncertainty, we only know that the basic problems are NP-hard. The question whether they are strongly NP-hard and hard to approximate remains open.

\section*{Acknowledgment}
Adam Kasperski and Pawe{\l} Zieli{\'n}ski were  supported by the National Science Centre, Poland, grant 2017/25/B/ST6/00486.


\end{document}